\newcommand{\cmark}{\ding{51} }%
\newcommand{\xmark}{\ding{55} }%
\theoremstyle{plain}
\newtheorem{theorem}{Theorem}[section]
\newtheorem{proposition}[theorem]{Proposition}
\newtheorem{lemma}[theorem]{Lemma}
\newtheorem{corollary}[theorem]{Corollary}
\theoremstyle{definition}
\newtheorem{definition}[theorem]{Definition}
\newtheorem{example}[theorem]{Example}
\theoremstyle{remark}
\newtheorem{remark}[theorem]{Remark}
\newcommand{\red}[1]{}
\newcommand{\camera}[1]{{#1}}
\newcommand{\mc}[1]{\mathcal{#1}}
\newcommand{\structure}[2]{{(\mc{#1}, \mc{#2})}}
\newcommand{\class}[1]{\mathscr{#1}}
\newcommand{\proved}[0]{ \red{(proved)} }
\newcommand{\checked}[0]{ \red{(checked)} }
\newcommand{\checkedC}[0]{ \red{(checked when condition add on H)} }
\newcommand{\hypograph}[2]{({#1}, ({#2}_\syntax{R})_{\syntax{R}\in\mc{R}})}
\newcommand{\quotient}[2]{{#1}/{#2}}
\newcommand{\hypset}[2]{\mc H({#1}, {#2})}
\newcommand{\property}[0]{Prop}
\newcommand{\propH}[0]{PropH}
\newcommand{\propertyH}[2]{\propH({#1}, {#2})}
\newcommand{\propG}[0]{PropG}
\newcommand{\propertyG}[2]{\propG({#1}, {#2})}
\newcommand{\hypgraphset}[2]{\mc{HG}({#1}, {#2})}
\newcommand{\formgraphset}[0]{\mc{FG}}
\newcommand{\syntax}[1]{\mathtt{#1}}
\newcommand{\sentenceSet}[0]{{\mc L}_S}
\newcommand{\targetsetX}[0]{\mathcal{X}_t} 
\newcommand{\lastsetXL}[0]{\mathcal{X}_l}
\newcommand{\targetX}[0]{X_t}
\newcommand{\targetH}[0]{H_t}
\newcommand{\targetG}[0]{G_t}
\newcommand{\targetsetG}[0]{\mathcal{G}_t}
\newcommand{\targetsetH}[0]{\mathcal{H}_t}
\newcommand{\varsetV}[0]{\mathcal{V}}
\newcommand{\sortsetS}[0]{\mathcal{S}}
\newcommand{\constsetC}[0]{\mathcal{C}}
\newcommand{\propsetP}[0]{\mathcal{P}}
\newcommand{\relsetR}[0]{\mathcal{R}}
\newcommand{\FOtermset}[0]{Term}
\newcommand{\cardinal}[1]{|#1|}
\newcommand{\fami}[3]{(#1)_{#2\in #3}}
\newcommand{\powerset}[1]{\wp(#1)}
\newcommand{\protocolTowards}[2]{{#1}[{#2}]}
\newcommand{\protocolname}[1]{\mathbf{#1}}
\newcommand{\protocolProp}[2]{{#1}({#2})}
\newcommand{\protocolSizeRestrict}[2]{{#1}_{#2}}
\newcommand{\UFBD}[0]{\protocolname{UFBD}}
\newcommand{\basicUFBD}[0]{\protocolname{Basic}}
\newcommand{\simpleUFBD}[0]{\protocolname{Simple}}
\newcommand{\simpleXbasicFUFBD}[0]{\protocolname{SimpleX}\textbf{-}\protocolname{BasicF}}
\newcommand{\basicXsimpleFUFBD}[0]{\protocolname{BasicX}\textbf{-}\protocolname{SimpleF}}
\date{}
\begin{document}

\begin{frontmatter}



\title{A Logical Formalisation of a Hypothesis in Weighted Abduction: towards User-Feedback Dialogues\tnoteref{label1}}
\tnotetext[label1]{A preliminary version of this article, excluding Sections \ref{subsection: Abstract User-Feedback Dialogue Protocols} \ref{subsection: Basic UFBD protocols} and \ref{section: other UFBD protocols}, was presented at ECSQARU2023 \cite{motoura2023logic}.}

\author{Shota Motoura\corref{cor1}}
\ead{motoura@nec.com}
\cortext[cor1]{Corresponding author}
\author{Ayako Hoshino}
\ead{ayako.hoshino@nec.com}

\author{Itaru Hosomi}
\ead{i-hosomi@nec.com}

\author{Kunihiko Sadamasa}
\ead{sadamasa@nec.com}

\address{NEC Corporation, 1753 Shimonumabe,\break Nakahara-ku, Kawasaki, Kanagawa 211-8666, Japan}

\begin{abstract}
%
Weighted abduction computes hypotheses that explain input observations.
A reasoner of weighted abduction first generates possible hypotheses and then selects the hypothesis that is the most plausible.
Since a reasoner employs parameters, called weights, that control its plausibility evaluation function, it can output the most plausible hypothesis according to a specific application using application-specific weights.
This versatility makes it applicable from plant operation to cybersecurity or discourse analysis.
However, the predetermined application-specific weights are not applicable to all cases of the application.
Hence, the hypothesis selected by the reasoner does not necessarily seem the most plausible to the user.
 In order to resolve this problem, this article proposes two types of user-feedback dialogue protocols, in which the user points out, either positively, negatively or neutrally, properties of the hypotheses presented by the reasoner, and the reasoner regenerates hypotheses that satisfy the user's feedback.
As it is required for user-feedback dialogue protocols, we then prove:
\begin{inparaenum}[(i)]
\item
our protocols necessarily terminate under certain reasonable conditions;
\item
they achieve hypotheses that have the same properties in common as fixed target hypotheses do in common if the user determines the positivity, negativity or neutrality of each pointed-out property based on whether the target hypotheses have that property. 
\end{inparaenum}
\end{abstract}



\begin{keyword}
Logic \sep Abduction \sep Dialog \sep Hypothesis \sep Feedback

\MSC[2008] 68T37 \sep 03B80
\end{keyword}

\end{frontmatter}



\section{Introduction}
\emph{Abduction} is inference to the best explanation:
given a set of observations, abduction generates a set, called a \emph{hypothesis}, of causes that accounts for the observations and selects the most plausible hypothesis by some criterion \cite{aliseda2017logic,paul1993approaches}. 
Several frameworks for abduction have been proposed, including plausibility criteria and reasoning processes (cf. \cite{paul1993approaches}).
Among others, \emph{weighted abduction} generates proofs, called \emph{hypothesis graphs}, of the observations in first-order logic that explain how the hypothesised causes lead to the observations.
It has parameters, called weights, that control the plausibility evaluation function and this enables the abductive reasoner to select the most plausible hypothesis graph for a specific application. This versatility makes it applicable to plant operation \cite{motoura2018translating}, cybersecurity \cite{motoura2021cooperative}, discourse analysis \cite{hobbs-etal-1988-interpretation,INOUE201433} or plan ascription \cite{appelt1992weighted}. As examples, let us see an application in plant operation \cite{motoura2018translating} and another in cybersecurity \cite{motoura2021cooperative}.
\begin{example}\label{example: MFM graph}
An abductive reasoner is used to compute an operation plan for a plant in \cite{motoura2018translating}. To illustrate this, let us consider a simple structure consisting of a faucet $F$ and a pipe $P$ coming out of it. Assume that the target state is the high flow in $P$ ($\syntax{Hold(P, High)}$). With $\syntax{Hold(P, High)}$ as an input observation, the reasoner outputs the hypothesis graph below (the hypergraph is simplified from one in  \cite{motoura2018translating}):

{\scriptsize
\begin{center}
\begin{tikzpicture}[node distance =0cm and 1cm]
\node [name=q00] {$\syntax{Hold(F, No)}$};
\node [below = of q00] (q01) {$\syntax{Open(F)}$};
\coordinate [right = of q00, label=above:$\syntax{Action}$] (q10) ;
\node [right = of q10] (q20) {$\syntax{Hold(F, High)}$};
\draw [-] (q01) -| (q10);
\draw [->] (q00) -- (q20);\
\coordinate [right = of q20, label=above:\texttt{Cause-Effect}] (q30) ;
\node [right = of q30] (q40) {$\syntax{Hold(P, High)}$};
\draw [->] (q20) -- (q40);
\end{tikzpicture}
\end{center}
}
\noindent This hypothesis graph expresses that opening $F$ ($\syntax{Open(F)}$) changed its flow from no flow ($\syntax{Hold(F, No)}$) to high flow ($\syntax{Hold(F, High)}$), which in turn caused the high flow in $P$ ($\syntax{Hold(P, High)}$). In \cite{motoura2018translating}, this hypothesis is seen as an operational procedure where $\syntax{Open(F)}$ leads to $\syntax{Hold(P, High)}$ from the current state, where $\syntax{Hold(F, No)}$ holds.\hfill$\dashv$
\end{example}

\begin{example}\label{example: cybersec}
In \cite{motoura2021cooperative}, abductive reasoning is applied to cybersecurity.
Below is an example of an output of its reasoner: (the nodes with * are observations)
{\scriptsize
\begin{center}
\begin{tikzpicture}[node distance =0.2cm and 1cm]
\node [name=q00] {$\syntax{Execution(Time1)}$};
\node [right = of q00] (q10) {$\syntax{CommandAndControl(time2)}$};
\node [right = of q10] (q20) {$\syntax{Exfiltration(Time3)}$};
\node [below = of q00] (q01) {$\syntax{UserExecution(Time1)}$};
\node [below = of q10] (q11) {$\syntax{RemoteFileCopyC2(time2)}$};
\node [below = of q20] (q21) {$\syntax{DataCompressed(Time3)}$};
\node [below = of q01, align=center] (q02) {$\syntax{SuspiciousLink}$\\$\syntax{(Time1, Host1, LinkFile1)}$(*)};
\node [below = of q11, align=center] (q12) {$\syntax{PsScript}$\\$\syntax{ForRemoteFileCopy}$\\$\syntax{(time2, host2, script1)}$};
\node [below = of q21, align=center] (q22) {$\syntax{PsScript}$\\$\syntax{ForDataCompression}$\\$\syntax{(Time3, Host3, Script2)}$(*)};
\draw [->] (q00) -- (q10);
\draw [->] (q10) -- (q20);
\draw [->] (q00) -- (q01);
\draw [->] (q01) -- (q02);
\draw [->] (q10) -- (q11);
\draw [->] (q11) -- (q12);
\draw [->] (q20) -- (q21);
\draw [->] (q21) -- (q22);
\end{tikzpicture}
\end{center}
}
\noindent
This is in the form of a TTP (Tactics, Techniques and Procedures) framework (cf. \cite{hutchins2010intelligence,mitre_attack}).
The terms starting with a capital letter, such as $\syntax{Time1}$, are constants, and the terms starting with a lower-case letter, such as $\mathtt{host2}$, are variables, which are implicitly existentially quantified.
The vertical edges express end-means relation while the horizontal ones do the chronological order. $\hfill\dashv$

\end{example}

Although its weights enable an abductive reasoner to output the hypothesis that is the most plausible for a specific application, the predetermined application-specific weights are not applicable to all cases.
Hence, the hypothesis selected by a reasoner does not necessarily seem the most plausible to the user.
To address this issue, user-feedback functions are proposed in \cite{motoura2021cooperative}, by which the user can give feedback on nodes in the hypothesis graphs presented by the reasoner. However, these functions are \textit{ad-hoc}.
In particular, it is theoretically unclear whether feedback on nodes has sufficient expressivity so that the user can achieve the hypothesis graph that is the most plausible to him if he gives enough feedback.

This is partly due to the absence of a formal definition of a hypothesis graph in applications.
The concept of a hypothesis graph in applications is extended from the original concept of a hypothesis graph, i.e. a proof in first-order logic.
For example, hyperedge $\syntax{Action}$ in Example \ref{example: MFM graph}  is not implication since $\syntax{Hold(F, No)}$ contradicts $\syntax{Hold(F, High)}$ and the arguments of $\syntax{Hold}$ are many-sorted; precisely, the first argument is a component of the plant and the second is a state of the component.  
In Example \ref{example: cybersec}, the edges do not express implication but they do end-means relation or the chronological order, and the hypothesis graph contains many-sorted constants and variables, such as $\syntax{Time1}$ and $\syntax{host2}$.
\paragraph{Contributions of This Article}
To address these issues, we propose a formal definition of an extended hypothesis graph and two types of user-feedback dialogue protocols based on the definition.
More precisely, our contributions are as follows.
\begin{enumerate} 
\item
We introduce a variant of second-order logic whose language contains many-sorted constants and variables as well as second-order predicate symbols whose arguments are first-order literals.
We then define the concepts of a hypothesis and a hypothesis graph such as ones in Examples \ref{example: MFM graph} and \ref{example: cybersec}.
\item
Using our definitions of a hypothesis and a hypothesis graph, we propose two types, $\basicUFBD$ and $\simpleUFBD$, of user-feedback dialogue protocols, in which the user points out, either positively, negatively or neutrally, properties of the hypotheses/hypothesis graphs presented by the abductive reasoner, and the reasoner regenerates hypotheses/hypothesis graphs that satisfy the user's feedback. 
As it is required for user-feedback protocols, we prove:
\begin{inparaenum}[(i)]
\item
our protocols necessarily terminate under certain reasonable conditions, \emph{the halting property};
\item
they achieve hypotheses/hypothesis graphs that have the same properties in common that fixed target hypotheses/hypothesis graphs do in common, if the user determines the positivity, negativity and neutrality of each pointed-out property based on whether the target hypotheses/hypothesis graphs have that property, \emph{the convergence property}. 
\end{inparaenum}
The results are summarised in Table \ref{table; convergence and holting results}.
All feedback-dialogue protocols in the table satisfy the halting property under certain reasonable conditions and that
the two protocols of $\basicUFBD$ satisfy the convergence property as well, whilst the two protocols $\simpleUFBD$ satisfy when the target is a singleton
\end{enumerate}

\begin{table}
\centering
{\scriptsize
\begin{tabular}{l cccc}
\multicolumn{1}{c}{\multirow{2}{*}{Protocol under Conditions}} &\multicolumn{2}{c}{Convergence Property}    &\multicolumn{2}{c}{Halting Property}  \\
\cmidrule(lr){2-3}  \cmidrule(lr){4-5} 
& set& singleton & set & singleton \\ \hline
 $\basicUFBD$ on hypotheses & \cmark Theorem \ref{theorem: convergence} &  \cmark  & \cmark Theorem \ref{theorem: halting property of syntax-based dialogue protocol} & \cmark\\
$\basicUFBD$ on hypothesis graphs & \cmark Theorem \ref{the last set with Condition 2e} & \cmark &  \cmark Theorem \ref{DecidabilityOfGraphUFBD} &  \cmark\\ \hline
$\simpleUFBD$ on hypotheses &\xmark Example \ref{example: simple UFBD not convergence propH to set}& \cmark Theorem \ref{theorem: simple convergence towards single target} &   \cmark Corollary \ref{corollary: simple hypotheses halting}& \cmark \\
$\simpleUFBD$ on hypothesis graphs&\xmark Example \ref{example: target set in simple UFBD}& \cmark Theorem \ref{theorem: pointwise convergence of simple n-bounded on graph}& \cmark Corollary \ref{corollary: simple graph n-bounded halting} & \cmark \\
\end{tabular}
}
\caption{Satisfaction relation under certain reasonable conditions between user-feedback dialogue protocols proposed in this article and the halting and the convergence properties.
$\basicUFBD$ and $\simpleUFBD$ are types of user-feedback dialogue protocols.
The labels `set' and `singleton' indicate that the target is a set of hypotheses/hypothesis graphs or a singleton.
Theorem, Corollary or Example in a cell is one in this article that supports the result of the cell.
The result of a cell containing \cmark or \xmark only follows from the fact that a singleton is a special case of a set.}
\label{table; convergence and holting results}
\end{table}
\paragraph{Organisations}
In Section \ref{section: logical framework}, we propose formal definitions of an extended hypothesis and an extended hypothesis graph.
We then propose two types of user-feedback dialogue protocols on them and prove that they satisfy the halting and the convergence properties, mentioned above, in Section \ref{section: feedback protocols}.

\section{Related Work}
Several definitions of a hypothesis have been proposed on the basis of first-order logic \cite[Section 2.2]{paul1993approaches}.
In particular, a hypothesis graph in weighted abduction is a proof in first-order logic consisting of literals \cite{stickel1991prolog}.
Thus, the edges between literals of a hypothesis graph express implication.
Extending this definition, we propose a definition of a hypothesis graph that is allowed to contain many-sorted constants and variables and arbitrarily labelled hyperedges between literals.

Dialogue protocols have also been proposed in which a system and a user cooperate in seeking for a proof or a plausible hypothesis.
An inquiry dialogue \cite{black2009inquiry} is one of such protocols. Its goal is to construct a proof of a given query such as `he is guilty' in propositional logic.
Using this protocol, the user can obtain all minimal proofs and find the one that seems most plausible to him.
However, this cannot be applied to extended hypotheses/hypothesis graphs, because there can be an infinite number of possible extended hypotheses/hypothesis graphs (see also Examples \ref{example: semantics-based protocol not terminate} and \ref{example: infinite syntax-based dialogue}, \textit{infra}).
The work \cite{motoura2021cooperative} proposes user-feedback functions on nodes in extended hypothesis graphs presented by the abductive reasoner.
However, these functions lack theoretical and empirical supports. 
In contrast, we theoretically prove that our protocols enjoy the halting and the convergence properties.

\section{Hypotheses and Hypothesis Graphs}\label{section: logical framework}
In this section, we first introduce a variant of many-sorted second-order logic and then define, in the logic, the concepts of a hypothesis and a hypothesis graph such as ones in Examples \ref{example: MFM graph} and \ref{example: cybersec}.
Throughout this section, we take the hypergraph in Example \ref{example: MFM graph} as our running example.

\subsection{Language}
We first define the language, which is second-order in the sense that the arguments of a second-order predicate symbol are first-order literals.
\begin{definition}
An \emph{alphabet} \camera{is a tuple $\Sigma=(\mathcal{S}, \mathcal{C}, \mathcal{P}, \mathcal{R},\mathcal{V}_1,\mathcal{V}_2)$ such that:
\begin{enumerate}
\item
$\mathcal{S}$ is a finite set of \emph{sorts} $\sigma$;
\item
$\mathcal{C}$ is an $\mathcal{S}$-indexed family $(\mathcal{C}_\sigma)_{\sigma\in\mathcal{S}}$ of finite sets $\mathcal{C}_\sigma$ of \emph{first-order constant symbols} $\syntax{c}:\sigma$ of sort $\sigma$;
\item
$\mc P$ is a non-empty finite set of \emph{first-order predicate symbols} $\syntax{p}: (\sigma_1,\ldots,\sigma_n)$, where $(\sigma_1,\ldots,\sigma_n)$ indicates the sorts of its arguments;
\item
$\mathcal{R}$ is a finite set of \emph{second-order predicate symbols} $\syntax{R}$;
\item
$\mathcal{V}_1$ is an $\mathcal{S}$-indexed family $(\mc{V}_\sigma)_{\sigma \in \mc S}$ of sets $\mc{V}_\sigma$ of \emph{first-order variables} $\syntax{x}:\sigma$ of sort $\sigma$;
\item
$\mc V_2$ is a set of \emph{second-order variables} $\syntax{X}$.
\end{enumerate}
We also use the first-order equality symbol ${=}: (\sigma,\sigma)$ for each $\sigma\in\mc S$ and the second-order one $=$ as logical symbols, that is, these symbols are interpreted in the standard manner.}
We often suppress the sorts of symbols and variables
if they are clear from the context.
\hfill $\dashv$
\end{definition}
\noindent
Note that an alphabet does not contain function symbols, following the original definition of weighted abduction \cite{stickel1991prolog}.
In this article, we assume that $\Sigma=(\mathcal{S}, \mathcal{C},\mathcal{P}, \mathcal{R},\mathcal{V}_1,\mathcal{V}_2)$ ranges over  alphabets except in examples.
\begin{example}
An alphabet $\Sigma=(\mathcal{S}, \mathcal{C},\mathcal{P}, \mathcal{R},\mathcal{V}_1,\mathcal{V}_2)$ for Example \ref{example: MFM graph} is one such that:
\begin{gather*}
\mc S= \{\mathsf{comp},\mathsf{state}\},\quad \mc C_\mathsf{comp}=\{\syntax{F}, \syntax{P}\}, \quad \mc C_\mathsf{state}=\{\syntax{High}, \syntax{No}\},\\
\mc P = \{\syntax{Hold}: (\mathsf {comp}, \mathsf{state}),\quad \syntax{Open}: (\mathsf {comp})\},\quad \mc R = \{\syntax{Action}, \syntax{Cause}\texttt{-}\syntax{Effect}\}.
\end{gather*}
 Here, $\mathsf{comp}$ means the sort of components. \hfill$\dashv$
\end{example}

\begin{definition}[Language]The language for $\Sigma=(\mathcal{S}, \mathcal{C},\mathcal{P}, \mathcal{R},\mathcal{V}_1,\mathcal{V}_2)$ is defined as follows.
\begin{enumerate}
\item
The \emph{first-order terms} of sort $\syntax{\sigma}\in \sortsetS$ are defined to be the first-order variables $\syntax{x:\sigma}\in\varsetV_\sigma$ and the constant symbols $\syntax{c:\sigma}\in\constsetC_\sigma$.
The set of first-order terms of $\sigma\in\sortsetS$ is denoted by $Term_\sigma$.
\item
The \emph{first-order atomic formulae} are the equalities $\syntax{t:\sigma={s}:\sigma}$ and $\syntax{p(t}_1 \syntax{:\sigma}_1\syntax{\ldots,t}_n\syntax{:\sigma}_n\syntax{)}$,
 where $\syntax{p: (\sigma}_1,\ldots,\syntax{\sigma}_n)$ ranges over $\mc{P}$, $\syntax{t:\sigma}$ and $\syntax{s:\sigma}$ over $Term_\sigma$ and $\syntax{t}_i\syntax{:\sigma}_i$ over $Term_{\sigma_i}$ for any $i=1,\ldots,n$.

\item
The \emph{ first-order literals} are atomic formulae $\syntax{A}$ and the negations $\syntax{\neg A}$ of atomic formulae.
The set of first-order literals is denoted by $L$.

\item
The {\it second-order terms} are first-order literals in $L$ and second-order variables in $\varsetV_2$.
We denote the set of second-order terms by  ${\mc T}_2$.

\item
The {\it formulae} are defined by the rule:
\[\syntax{\Phi}::=\syntax{A} \mid \syntax{X} \mid \syntax{T}_1\syntax{ = T}_2 \mid \syntax{R(T}_1,\ldots,\syntax{T}_n\syntax{)} \mid \syntax{\neg \Phi} \mid \syntax{\Phi\lor\Phi} \mid \syntax{(\exists x:\sigma) \Phi} \mid \syntax{(\exists X) \Phi}.\]
where $\syntax{A}$ ranges over the set of first-order atomic formulae, $\syntax{X}$ over $\mc V_2$ and $\syntax{T}_i$ over ${\mc T}_2$  for any $n=1,\ldots,n$, $\syntax{R}$ over $\relsetR$ with $n$ its arity and $\syntax{x:\sigma}$ over the set of first-order variables with $\sigma$ its sort.
We denote the set of formulae by $\mc L$.\footnote{Following the original definition of weighted abduction, we restrict the arguments of second-order predicates to first-order literals, which is used to prove Theorems \ref{theorem: halting property of syntax-based dialogue protocol} and \ref{DecidabilityOfGraphUFBD}.}
\end{enumerate}
We write a term and a formula defined above in $\syntax{typewriter}$ $\syntax{font}$.
\hfill $\dashv$
\end{definition}
\noindent Other logical connectives $\top$, $\bot$, $\land$, $\rightarrow$, $\leftrightarrow$, $\syntax{\forall x: \sigma}$ and $\syntax{\forall X}$ are defined as usual.
The concepts of \emph{freely occurring of a first-order/second-order variable} are also defined as usual.
A \emph{sentence} is a formula in $\mc L$ that does not contain any freely occurring variables.
We denote the set of sentences by $\sentenceSet$.

\subsection{Semantics}
A \emph{structure} has two components to interpret a formula of each order.
In particular, one component interprets first-order literals and the other does second-order predicate symbols.

\begin{definition}
A \emph{structure for $\Sigma$} is a pair $(\mc M, \mc I)$ consisting as follows.
\begin{itemize}
\item
The first component $\mc M=((M_\sigma)_{\sigma\in \mathcal{S}}, (C_\sigma)_{\sigma\in\mathcal{S}}, (\syntax{p}^\mc M)_{\syntax{p}\in \mc P})$ is a first-order structure consisting of:
\begin{itemize}
\item
a non-empty set $M_\sigma$ for each $\sigma\in\mc S$, called \emph{the domain of} $\sigma$;
\item
a function $C_\sigma:{\mc C}_\sigma\to M_\sigma$ for each $\sigma \in \mc S$; and
\item
a subset $\syntax{p}^\mc M\subseteq M_{\sigma_1}\times \cdots\times M_{\sigma_n}$ for each $\syntax{p}: (\sigma_1,\ldots,\sigma_n)\in\mc P$.
\end{itemize}
\item
The second component $\mc I=(I, (\syntax{R}^\mc I)_{\syntax{R}\in\mathcal{R}})$ is a pair of:
\begin{itemize}
\item
the set $I= \{*\syntax{p}(e_1,\ldots, e_n)\mid *\in \{\epsilon, \neg\}, \syntax{p}: (\sigma_1,\ldots,\sigma_n) \in \mc P \text{ and } e_i\in M_{\sigma_i}\text{ for }i=1,\ldots, n\}$, where $\epsilon$ is the null string; and
\item
an $n$-ary relation $\syntax{R}^\mc I$ on $I$ for each $\syntax{R}\in \mc R$, where $n$ is the arity of $\syntax{R}$.
\end{itemize}
\end{itemize}
We often write $(\syntax{c}:\sigma)^{\mc M}$ to mean $C_\sigma(\syntax{c}:\sigma)$. \hfill $\dashv$
\end{definition}
\begin{example}\label{example: small structure}
The hypergraph in Example \ref{example: MFM graph} determines the structure $(\mc M, \mc I)$ such that: 
\begin{itemize}
\item 
the first component $\mc M$ consists of  
\begin{itemize}
\item
$M_{\sf comp} = \{F, P\}$, $C_{\sf comp} = \{(\syntax{F}, F), (\syntax{P}, P)\}$, 
\item
$M_{\sf state} =  \{High, No\}$, $C_{\sf state} = \{(\syntax{High}, High), (\syntax{No}, No)\}$,
\item
$\syntax{Hold}^{\mc M}=\{(F, No), (F, High), (P, High)\}$,
 $\syntax{Open}^{\mc M}=\{F\}$; and
 \end{itemize}

 \item
 the second component $\mc I$ consists of 
  \begin{itemize}
  \item
 $I=\{*\syntax{Hold}(o, s) \mid *\in \{\epsilon, \neg\}, o\in M_{\sf comp}\text{ and } s\in M_{\sf state}\}\cup\{*\syntax{Open}(o)\mid *\in \{\epsilon, \neg\}\text{ and }  o\in M_{\sf comp} \}$,
 \item
$\syntax{Action}^{\mc I}=\{(\syntax{Hold}(F, No), \syntax{Open}(F), \syntax{Hold}(F, High)\}$, and
 \item
$\syntax{Cause}\texttt{-}\syntax{Effect}^{\mc I}=\{(\syntax{Hold}(F, High), \syntax{Hold}(P, High)\}$.\hfill $\dashv$
\end{itemize}
\end{itemize}
\end{example}

For a structure $(\mc M, \mc I)$ with $\mc M=((M_\sigma)_{\sigma\in \mathcal{S}}, (C_\sigma)_{\sigma\in\mathcal{S}}, (\syntax{p}^\mc M)_{\syntax{p}\in \mc P})$ and $\mc I=(I, (\syntax{R}^\mc I)_{\syntax{R}\in\mathcal{R}})$,
a \emph{first-order assignment} is an $\mc S$-indexed family $\mu_1=(\mu_\sigma)_{\sigma\in\mc S}$ of mappings $\mu_\sigma:\mathcal{V}_\sigma\rightarrow M_\sigma$ and a \emph{second-order assignment} is a mapping $\mu_2:\mc V_2\rightarrow I$.
The interpretation of terms is given as follows:
\begin{definition}
Let $\structure{M}{I}$ be a structure and $\mu_1=(\mu_\sigma)_{\sigma\in\mc S}$ and $\mu_2$ a first-order assignment and a second-order one for $\structure{M}{I}$, respectively.
\begin{enumerate}
\item
A first-order term is interpreted as usual:
\begin{itemize}
\item
$(\syntax{c}: \sigma)^\mc M[\mu_1]=(\syntax{c}: \sigma)^\mc M$ for any $\syntax{c}:\sigma\in \mc C_\sigma$;
\item
$(\syntax{x}:\sigma)^\mc M[\mu_1]=\mu_\sigma(\syntax{x}:\sigma)$ for any $\syntax{x}:\sigma\in\mc{V}_\sigma$.
\end{itemize}
\item
A second-order term  is interpreted as follows:
\begin{itemize}
\item  for any $\syntax{p}(\syntax{t}_1,\ldots, \syntax{t}_n)\in L$ (respectively, $\syntax{\neg p}(\syntax{t}_1,\ldots, \syntax{t}_n)\in L$),
\begin{gather*}
\syntax{p}(\syntax{t}_1,\ldots, \syntax{t}_n)^{\structure{M}{I}}[\mu_1, \mu_2]=\syntax{p}(\syntax{t}_1^\mc M[\mu_1], \ldots, \syntax{t}_n^\mc M[\mu_1])\\
\text{ (respectively, } \syntax{\neg p}(\syntax{t}_1,\ldots, \syntax{t}_n)^{\structure{M}{I}}[\mu_1, \mu_2]=\syntax{\neg p}(\syntax{t}_1^\mc M[\mu_1], \ldots, \syntax{t}_n^\mc M[\mu_1])\text{)};
\end{gather*}
\item
$\syntax{X}^{\structure{M}{I}}[\mu_1,\mu_2]=\mu_2(\syntax{X})$ for any $\syntax{X} \in\mc{V}_2$. \hfill $\dashv$
\end{itemize}
\end{enumerate}
\end{definition}
\noindent The interpretation of a formula in $\mc L$ is defined as below:
\begin{definition}[Interpretation]
Let $\structure{M}{I}$ be a structure such that  $\mc M=((M_\sigma)_{\sigma\in \mathcal{S}}, (C_\sigma)_{\sigma\in\mathcal{S}}, (\syntax{p}^\mc M)_{\syntax{p}\in \mc P})$ and $\mc I=(I, (\syntax{R}^\mc I)_{\syntax{R}\in\mathcal{R}})$, and $\mu_1$ and $\mu_2$ be a first-order assignment and a second-order one for $\structure{M}{I}$, respectively.
For any formula $\syntax{\Phi}\in \mc L$, the statement that \emph{$\syntax{\Phi}$ is satisfied in $\mc M$ by $\mu_1$ and $\mu_2$} (notation: $\structure{M}{I}\models\syntax{\Phi}[\mu_1, \mu_2]$) is inductively defined as follows.
\camera{
\begin{enumerate}
\item
The Boolean connectives are interpreted as usual:
\end{enumerate}
\[\begin{array}{l}
\structure{M}{I} \models  \syntax{\neg \Phi}[\mu_1, \mu_2] \Leftrightarrow \structure{M}{I} \not\models \syntax{\Phi}[\mu_1, \mu_2];\\
\structure{M}{I} \models \syntax{\Phi\lor\Psi} [\mu_1, \mu_2] \Leftrightarrow \structure{M}{I} \models \syntax{\Phi}[\mu_1, \mu_2] \text{ and } \structure{M}{I} \models \syntax{\Psi}[\mu_1, \mu_2].
\end{array}
\]
\begin{enumerate}\setcounter{enumi}{1}
\item
The first-order symbols and quantifiers are also interpreted as usual: 
\end{enumerate}
\[\begin{array}{l}
\structure{M}{I} \models \syntax{t}_1=\syntax{t}_2[\mu_1, \mu_2] \Leftrightarrow {\syntax{t}_1}^{\mc M}[\mu_1] = {\syntax{t}_2}^{\mc M}[\mu_1];\\
\structure{M}{I} \models \syntax{p}(\syntax{t}_1,\ldots,\syntax{t}_n)[\mu_1, \mu_2] \Leftrightarrow (\syntax{t}_1^{\mc M}[\mu_1],\ldots,\syntax{t}_n^{\mc M}[\mu_1])\in \syntax{p}^\mc M;\\
\structure{M}{I} \models (\exists \syntax{x}:\sigma)\syntax{\Phi}[\mu_1,\mu_2] \Leftrightarrow \structure{M}{I}\models \syntax{\Phi}[\mu_1(e/(\syntax{x}:\sigma)),\mu_2]\text{ for some }e\in M_\sigma.
\end{array}
\]
\begin{enumerate}\setcounter{enumi}{2}
\item
The second-order variables, symbols and quantifiers are interpreted by the following clauses:
\end{enumerate}
\[
\begin{array}{lcl}
\structure{M}{I} \models \syntax{X}[\mu_1, \mu_2] & \Leftrightarrow &
\begin{cases}
(e_1,\ldots, e_n)\in \syntax{p}^\mc M & (\mu_2(\syntax{X})=\syntax{p}(e_1,\ldots,e_n)) \\
(e_1,\ldots, e_n)\notin \syntax{p}^\mc M & (\mu_2(\syntax{X})=\neg \syntax{p}(e_1,\ldots,e_n) )
\end{cases}
\end{array}
\]
\[
\begin{array}{lcl}
\structure{M}{I} \models \syntax{T}_1 = \syntax{T}_2 [\mu_1,\mu_2] & \Leftrightarrow & \syntax{T}_1^\structure{M}{I}[\mu_1,\mu_2] = \syntax{T}_2^\structure{M}{I}[\mu_1,\mu_2] \\
\structure{M}{I} \models \syntax{R(T}_1,\ldots,\syntax{T}_n)[\mu_1,\mu_2] & \Leftrightarrow & (\syntax{T}_1^\structure{M}{I}[\mu_1,\mu_2],\ldots, \syntax{T}_n^\structure{M}{I}[\mu_1, \mu_2])\in \syntax{R}^\mc{I} \\
\structure{M}{I} \models (\exists \syntax{X})\syntax{\Phi}[\mu_1,\mu_2] & \Leftrightarrow & \structure{M}{I}\models \syntax{\Phi}[\mu_1,\mu_2(l/\syntax{X})]\text{ for some } l \in I.
\end{array}
\]

}\noindent Here, the assignment $\mu_1(e/(\syntax{x}:\sigma))$ (respectively, $\mu_2(l/\syntax{X})$) is the same as $\mu_1$ (respectively, $\mu_2$) except that it assigns $e$ to $\syntax{x}:\sigma$ (respectively, $l$ to $\syntax{X}$). \hfill $\dashv$
\end{definition}
\noindent
We say that a formula $\syntax{\Phi}$ \emph{is valid in a structure} $\structure{M}{I}$
if $\structure{M}{I} \models\syntax{\Phi}[\mu_1, \mu_2]$
for any assignments $\mu_1$ and $\mu_2$ and that \emph{$\syntax{\Phi}$ is valid in a class $\class{C}$ of structures}
if $\syntax{\Phi}$ is valid in any structure in $\class{C}$.
For sets $\Gamma$ and $\Delta$ of formulae in $\mc L$ and a class $\class C$ of structures,
we say that $\Gamma$ \emph{logically implies} $\Delta$ \emph{in} $\class{C}$ (notation: $\Gamma\models_\class{C} \Delta$) if any pair of a structure $\structure{M}{I}$ in $\class C$ and an assignment $[\mu_1, \mu_2]$ satisfying any formula in $\Gamma$ also satisfies any formula in $\Delta$.
We also say that $\Gamma$ and $\Delta$ \emph{are logically equivalent in} $\class{C}$ and write $\Gamma\leftrightarrow_{\class C}\Delta$ if both $\Gamma\models_\class{C} \Delta$ and $\Delta\models_\class{C} \Gamma$ hold. \\

In the rest of this article, when we consider a class $\class C$ of structures, we assume that we have $M_\sigma=M'_\sigma$ and $(\syntax{c}:\sigma)^{\mc M} = (\syntax{c}: \sigma)^{\mc M'}$ for any $(\mc M, \mc I), (\mc M', \mc I')\in \class C$ with  $\mc M=((M_\sigma)_{\sigma\in \mathcal{S}}, (C_\sigma)_{\sigma\in\mathcal{S}}, (\syntax{p}^\mc M)_{\syntax{p}\in \mc P})$ and  $\mc M'=((M'_\sigma)_{\sigma\in \mathcal{S}}, (C'_\sigma)_{\sigma\in\mathcal{S}}, (\syntax{p}^\mc {M'})_{\syntax{p}\in \mc P})$,  $\sigma\in\mc S$ and $\syntax{c}:\sigma \in \mc C_\sigma$.
\begin{example}\label{example: rules}
We can use a \emph{theory} $T$, i.e. a set $T$ of sentences, to restrict we consider the structures to the class $\class{C}(T)$ of those validating all sentences in $T$.
For example, in the settings in Example \ref{example: MFM graph}, suppose that its theory $T$ contains:
\[\syntax{\Phi}:=(\forall \syntax{XYZ})[\syntax{Action}(\syntax{X, Y, Z}) \rightarrow\bigwedge \left\{
\begin{array}{l}
\syntax{X}= \syntax{Hold}(\syntax{F}, \syntax{No}),\\
\syntax{Y}=\syntax{Open}(\syntax{F}), \\
\syntax{Z}=\syntax{Hold}(\syntax{F}, \syntax{High}),\\
\syntax{X},  \syntax{Y}, \syntax{Z}
\end{array}
\right\}].
\]
Then, any $\structure{M}{I}\in \class C(T)$ satisfies:
\begin{enumerate}
\item
$\syntax{Action}^{\mc I}\subseteq\{(\syntax{Hold}(F, No), \syntax{Open}(F), \syntax{Hold}(F, High))\}$ and 
\item
$\syntax{Hold}^{\mc M}\supseteq\{(F, No), (F, High)\}$ and $\syntax{Open}^{\mc M}\supseteq\{F\}$ if  $\syntax{Action}^{\mc I}\neq \emptyset$.
\end{enumerate}
These mean that,
if $\syntax{Action}$ is the case for some arguments, they must be $\syntax{Hold}(F, No)$,  $\syntax{Open}(F)$ and $\syntax{Hold}(F, High)$ and each argument must be true.
\hfill $\dashv$
\end{example}

\subsection{Definitions of a Hypothesis and a Hypothesis Graph}
We define the concepts of hypothesis and hypothesis graph on the basis of the syntax and semantics defined in the previous section.
The concept of hypothesis is defined as follows:
\begin{definition}[Hypotheses]\label{def: hypothesis}
Let $O$ be a set of sentences and $\class{C}$ a class of structures.
A \emph{hypothesis $H$ for $O$ in $\class{C}$} is a finite set of sentences
that satisfies $H \cup O\not\models_{\class{C}}\bot$.
We denote the set of hypotheses for $O$ in $\class C$ by $\hypset O {\class C}$.
\end{definition}
\noindent

\begin{remark}
According to \cite[Section 4]{paul1993approaches},
the minimum requirements for a set $H$ of sentences to be an ordinary first-order hypothesis for a set $O$ of observations under a theory $T$ are that $H$ implies $O$ under $T$ and that $H$ in conjunction with $O$ is consistent with $T$.
Since an extended  hypothesis graph does not necessarily implies the observations as in Example \ref{example: cybersec}, 
we adopt only the semantical counterpart $H \cup O\not\models_{\class{C}}\bot$ of the latter condition.
However, all results below in this article, except for Examples \ref{example: hyp graph}, \ref{example: UFBD}, \ref{example: semantics-based protocol not terminate}, \ref{example: infinite syntax-based dialogue}, \ref{example: target set in simple UFBD},
\ref{example: simple UFBD not convergence propH to set},
\ref{example: no-halting property Simple propH}, \ref{example: infinite syntax-based dialogue simple} and \ref{example: SimpleX-BasicF PropG non-convergence}, hold even if any condition is imposed on $H$ in Definition \ref{def: hypothesis}.\proved\checked
\hfill $\dashv$
\end{remark}

\camera{A hypothesis graph is a special case of a formula graph, defined below:}
\begin{definition}\label{definition: formula graph}
A \emph{formula graph} is a pair $\hypograph{V}{E}$ such that:
\begin{enumerate}
\item
The first component $V$ is a finite set of first-order literals;
\item
The second one $E_\syntax{R}$ for each $\syntax{R}\in \mathcal{R}$ is a set of tuples of first-order literals in $V$ whose length is equal to the arity of $\syntax{R}$.
\end{enumerate}
We denote the set of all formula graphs by $\formgraphset$. \hfill $\dashv$
\end{definition}
\noindent For a formula graph $G=\hypograph{V}{E}$, we call the cardinality $|V|$ of $V$ \emph{the order of} $G$ and write $|G|$.
We use the notation $Var_\sigma(G)$ for $\sigma\in\sortsetS$ to mean the set of variables of sort $\sigma$ occurring in some formula in $V$.

Due to its finiteness,
a formula graph $G=\hypograph{V}{E}$ can naturally be translated into the formula:
\[Form(G):=\bigwedge V \land \bigwedge \{\syntax{R}(\syntax{l}_1, \ldots, \syntax{l}_n) \mid \syntax{R} \in \mc{R}\text{ and } (\syntax{l}_1, \ldots, \syntax{l}_n)\in {E_\syntax{R}}\}.\]
We also denote the existential closure of $Form(G)$ by $Sent(G)$.
Using this translation, the concept of a hypothesis graph is defined:
\begin{definition}[Hypothesis Graphs]
A formula graph $G$ is a \emph{hypothesis graph for a set $O$ of sentences in a class $\class C$ of structures} if $\{Sent(G)\}$ is a hypothesis for $O$ in $\class C$.
We denote, by $\hypgraphset O {\class C}$, the set of hypothesis graphs for $O$ in $\class C$. \hfill $\dashv$
\end{definition}
\begin{example}\label{example: hyp graph}
Consider the hypergraph in Example \ref{example: MFM graph}, here referred to as $G_0$, the structure $\structure{M}{I}$ and the sentence $\syntax{\Phi}$ in Examples \ref{example: small structure} and \ref{example: rules}, respectively.
Then, $\structure{M}{I}$ validates $Sent(G_0)\land \syntax{Hold(P, High)}$ and thus we have $\{Sent(G_0)\} \cup \{\syntax{Hold(P, High)} \}\not\models_{\class{C}(\{\syntax{\Phi}\})}\bot$ since $\structure{M}{I}$ is in $\class{C}(\{\syntax{\Phi}\})$.
Hence, $\{Sent(G_0)\}$ and $G_0$ are hypothesis and a hypothesis graph for $\{\syntax{Hold(P, High)}\}$ in $\class C(\{\syntax{\Phi}\})$, respectively. \hfill$\dashv$
\end{example}

\subsection{Homomorphisms and Isomorphisms between Formula Graphs}
In this section, we define the concepts of homomorphism and an isomorphism from a formula graph to another and prove several lemmata and propositions, which we shall use in the rest of the article.

The concept of a homomorphism is defined using that of \emph{substitution}:
\begin{definition}
Let $\sigma$ be a sort in $\mc S$.
A \emph{substitution of $\sigma$} is a mapping $\alpha_\sigma: Term_\sigma\rightarrow Term_\sigma$ that satisfies $\alpha_\sigma(\syntax{c}:\sigma)=\syntax{c}:\sigma$ for any $\syntax{c}:\sigma\in {\mc C}_{\sigma}$. \hfill$\dashv$
\end{definition}
\begin{definition}
Let $G=\hypograph{V}{E}$ and $G'=\hypograph{V'}{E'}$ be formula graphs.
By a \emph{homomorphism} $(h, \alpha)$ from $G$ to $G'$ (notation: $(h, \alpha): G\rightarrow G'$), we mean a pair of
a mapping $h:V\rightarrow V'$ and
an $\mc S$-indexed family $\alpha=(\alpha_\sigma)_{\sigma \in \mc S}$ of substitutions $\alpha_\sigma$ of $\sigma\in \mc S$
such that:
\begin{enumerate}
\item
for any $\syntax{p}(\syntax{t}_1,\ldots, \syntax{t}_n)\in V$ (respectively, $\syntax{\neg p}(\syntax{t}_1,\ldots, \syntax{t}_n)\in V$),
\begin{gather*}
h(\syntax{p}(\syntax{t}_1, \ldots, \syntax{t}_n))=\syntax{p(\alpha}_{\sigma_1}\syntax{(t}_1), \ldots, \alpha_{\sigma_n}(\syntax{t}_n))\\
\text{(respectively, }h(\syntax{\neg p}(\syntax{t}_1, \ldots, \syntax{t}_n))= \syntax{\neg p(\alpha}_{\sigma_1}\syntax{(t}_1), \ldots, \alpha_{\sigma_n}(\syntax{t}_n)) \text{)}
\end{gather*}
holds, where $(\sigma_1,\ldots, \sigma_n)$ is the sorts of the arguments of $\syntax p$; and
\item
$(\syntax{l}_1,\ldots, \syntax{l}_m)\in E_\syntax{R}$ implies $ (h(\syntax{l}_1),\ldots, h(\syntax{l}_m))\in E'_\syntax{R}$ for any $\syntax{R} \in \mathcal{R}$ and $\syntax{l}_1,\ldots, \syntax{l}_m \in V$, where $m$ is the arity of $\syntax{R}$. 
\end{enumerate}
\hfill $\dashv$
\end{definition}
\noindent The composition of homomorphisms $(h, \alpha):G\to G'$ and $(h', \alpha'): G'\to G''$ is defined to be $(h'\circ h, \alpha'\circ \alpha):G'\to G''$, 
where $\alpha'\circ \alpha:=\fami {\alpha'_\sigma\circ\alpha_\sigma} \sigma \sortsetS$.

An `injective' homomorphism is called an embedding: 
\begin{definition}[Embeddings]
Let $G$ and $G'$ be formula graphs.
A homomorphism $(h, \alpha):G\rightarrow G'$ is called an \emph{embedding of $G$ into $G'$} if $h$ and $\alpha_\sigma$ for each $\sigma\in \mc S$ are injective. 
A formula graph $G$ \emph{can be embedded into} $G'$ if there is an embedding of $G$ into $G'$. \hfill $\dashv$
\end{definition}
\noindent For example, a \emph{formula subgraph} can be embedded into its original formula graph.
The concept of a formula subgraph is defined as below:
\begin{definition}[Formula Subgraphs]\label{definition: formula subgraphs}
A formula graph $\hypograph{V}{E}$ is a \emph{formula subgraph} (or \emph{subgraph} for short) \emph{of a formula graph} $\hypograph{V'}{E'}$ if $V$ and $E_\syntax{R}$ for each $\syntax{R}\in \mc R$ are subsets of $V'$ and $E'_\syntax{R}$, respectively. \hfill $\dashv$
\end{definition}
\begin{example}
Let $G$ and $G'$ be a formula graph.
Suppose that $G$ is a subgraph of $G'$.
Then, the homomorphism $({\rm inc}, {\rm id}_{Term}): G\to G'$ with  ${\rm inc}$ the inclusion mapping and ${\rm id}_{Term}=({\rm id}_{Term_\sigma})_{\sigma \in \mc S}$ the family of identity mappings is an embedding  of $G$ into $G'$. \checked\checkedC \hfill $\dashv$
\end{example}

An embedding entails logical implication in the converse direction:
\begin{lemma}\label{lemma: inj func to bij substitution}
Let $\sigma\in\sortsetS$ be a sort, $\mc V'_\sigma, \mc V''_\sigma$ be finite subsets of $\mc V_\sigma$ and $\alpha_\sigma:\mc  V'_\sigma\to \mc V''_\sigma$ be an injection.
Define a function $\bar\alpha_\sigma:\FOtermset_\sigma\to \FOtermset_\sigma$ as follows:
\[
\bar \alpha_\sigma(x) =
\begin{cases}
  x  & x\in Term_\sigma - (\mc V'_\sigma\cup \mc V''_\sigma)\\
  \alpha_\sigma(x) & x\in \mc V'_\sigma \\
  \beta_\sigma(x) & x\in (\mc V'_\sigma\cup \mc V''_\sigma) -\mc V'_\sigma,
\end{cases}
\]
where $\beta_\sigma$ is a bijection from $(\mc V'_\sigma\cup \mc V''_\sigma) -\mc V'_\sigma$ to $(\mc V'_\sigma\cup \mc V''_\sigma)-\alpha_\sigma(\mc V'_\sigma)$.
Then, $\bar \alpha_\sigma$ is a bijective substitution of $\sigma$ such that $\bar{\alpha}_\sigma(x)=x$ for any $x\in Term_\sigma - (\mc V'_\sigma\cup \mc V''_\sigma)$ and that $\bar{\alpha}_\sigma(x)=\alpha_\sigma(x)$ for any $x\in \mc V'_\sigma$. \checked\checkedC
\end{lemma}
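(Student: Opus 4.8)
The plan is to check, in order, that the bijection $\beta_\sigma$ invoked in the definition actually exists, that $\bar\alpha_\sigma$ is a well-defined substitution, that it restricts as claimed, and finally that it is a bijection; all of this reduces to a single partition bookkeeping argument.

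First I would verify that $\beta_\sigma$ exists. Both $(\mc V'_\sigma\cup\mc V''_\sigma)-\mc V'_\sigma$ and $(\mc V'_\sigma\cup\mc V''_\sigma)-\alpha_\sigma(\mc V'_\sigma)$ are finite, being subsets of the finite set $\mc V'_\sigma\cup\mc V''_\sigma$. Since $\alpha_\sigma$ is injective, $|\alpha_\sigma(\mc V'_\sigma)|=|\mc V'_\sigma|$, and since $\alpha_\sigma(\mc V'_\sigma)\subseteq\mc V''_\sigma\subseteq\mc V'_\sigma\cup\mc V''_\sigma$ and $\mc V'_\sigma\subseteq\mc V'_\sigma\cup\mc V''_\sigma$, both sets have cardinality $|\mc V'_\sigma\cup\mc V''_\sigma|-|\mc V'_\sigma|$. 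Hence a bijection $\beta_\sigma$ between them exists. Next, the three sets $Term_\sigma-(\mc V'_\sigma\cup\mc V''_\sigma)$, $\mc V'_\sigma$ and $(\mc V'_\sigma\cup\mc V''_\sigma)-\mc V'_\sigma$ are pairwise disjoint with union $Term_\sigma$, so the case split in the definition makes $\bar\alpha_\sigma$ a total, well-defined map $Term_\sigma\to Term_\sigma$. It is a substitution because a constant symbol $\syntax{c}:\sigma\in\mc C_\sigma$ is not a variable, hence lies in the first piece and is fixed; and the two identities in the conclusion ($\bar\alpha_\sigma(x)=x$ off $\mc V'_\sigma\cup\mc V''_\sigma$, $\bar\alpha_\sigma(x)=\alpha_\sigma(x)$ on $\mc V'_\sigma$) are literally the first two clauses of the definition.

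For bijectivity I would note that the images of the three pieces under $\bar\alpha_\sigma$ are, respectively, $Term_\sigma-(\mc V'_\sigma\cup\mc V''_\sigma)$ (via the identity), $\alpha_\sigma(\mc V'_\sigma)$ (via $\alpha_\sigma$), and $(\mc V'_\sigma\cup\mc V''_\sigma)-\alpha_\sigma(\mc V'_\sigma)$ (via the surjection $\beta_\sigma$), and that these three images again partition $Term_\sigma$, using $\alpha_\sigma(\mc V'_\sigma)\subseteq\mc V'_\sigma\cup\mc V''_\sigma$. Surjectivity is then immediate: each $y\in Term_\sigma$ lies in exactly one of the three image blocks, and in each case an explicit preimage (namely $y$ itself, some $x\in\mc V'_\sigma$ with $\alpha_\sigma(x)=y$, or some $x$ with $\beta_\sigma(x)=y$) is available. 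Injectivity follows by cases: two arguments in the same piece are separated by the injectivity of the identity, of $\alpha_\sigma$, or of $\beta_\sigma$; two arguments in different pieces have values in different blocks of the image partition, so cannot coincide.

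I do not expect a genuine obstacle: the only two points needing a little care are (i) the cardinality equality guaranteeing that $\beta_\sigma$ exists, and (ii) making explicit that $\alpha_\sigma(\mc V'_\sigma)$ and $(\mc V'_\sigma\cup\mc V''_\sigma)-\alpha_\sigma(\mc V'_\sigma)$ are complementary inside $\mc V'_\sigma\cup\mc V''_\sigma$, which is exactly what makes the three image blocks disjoint and exhaustive. Everything else is routine verification.
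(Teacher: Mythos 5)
Your proof is correct and follows the same route as the paper's: the paper's own argument consists precisely of the observation that $\beta_\sigma$ exists because $(\mc V'_\sigma\cup\mc V''_\sigma)-\mc V'_\sigma$ and $(\mc V'_\sigma\cup\mc V''_\sigma)-\alpha_\sigma(\mc V'_\sigma)$ have the same cardinality (by injectivity of $\alpha_\sigma$), leaving the rest as "easy to prove." You have simply spelled out the routine partition bookkeeping that the paper omits, and done so correctly.
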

\begin{proof}
This is easy to prove since $\alpha_\sigma$ is an injection from $\varsetV'_\sigma$ to $\varsetV''_\sigma$.
Function $\beta_\sigma$ exists because  $(\mc V'_\sigma\cup \mc V''_\sigma) -\mc V'_\sigma$ and $(\mc V'_\sigma\cup \mc V''_\sigma)-\alpha_\sigma(\mc V'_\sigma)$ have the same cardinality.\checked\checkedC
\end{proof}

\begin{lemma}\label{InjectiveEmbeddingTransformLemma}\checked\checkedC
Let $G, G'$ be formula graphs.
Suppose that a formula graph $G$ can be embedded into a formula graph $G'$.
Then, there is an embedding $(h, \alpha):G\rightarrow G'$ such that: for any $\sigma\in\mc S$,
\begin{enumerate}
\item
the substitution $\alpha_\sigma$ is bijective; and
\item
there is a finite set $\mc V'_\sigma\subseteq \mc V_\sigma$ of first-order variables satisfying $\alpha_\sigma(x)=x$ for any $x\in Term_\sigma - \mc V'_\sigma$.
\end{enumerate}

\end{lemma}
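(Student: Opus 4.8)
The plan is to start from any embedding of $G$ into $G'$, to restrict its substitution component to the \emph{finite} sets of variables that actually occur in $G$ and in $G'$, and then to invoke Lemma~\ref{lemma: inj func to bij substitution} to replace that restricted injection by a bijective substitution that is the identity outside a finite set, all the while keeping the vertex map unchanged. So first I would record the elementary observation that an injective substitution cannot move a variable onto a constant: since every substitution $\alpha_\sigma$ fixes each $\syntax{c}:\sigma\in\mathcal C_\sigma$, injectivity of $\alpha_\sigma$ forbids any variable from being sent to a constant, hence $\alpha_\sigma$ restricts to an injection $\mathcal V_\sigma\to\mathcal V_\sigma$. Fix an embedding $(h_0,\alpha_0):G\rightarrow G'$, which exists by hypothesis. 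Because $h_0$ maps every literal of $V$ into $V'$ by applying $\alpha_0$ to its arguments, every variable of the form $\alpha_{0,\sigma}(x)$ with $x$ occurring in $G$ must occur in $V'$; combined with the previous observation this shows that $\alpha_{0,\sigma}$ restricts to an injection $Var_\sigma(G)\rightarrow Var_\sigma(G')$ between finite sets.

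Next I would apply Lemma~\ref{lemma: inj func to bij substitution} to each $\sigma\in\mathcal S$ with $\mathcal V'_\sigma:=Var_\sigma(G)$, $\mathcal V''_\sigma:=Var_\sigma(G')$ and the injection just obtained. This yields, for each $\sigma$, a bijective substitution $\bar\alpha_\sigma$ of $\sigma$ with $\bar\alpha_\sigma(x)=x$ for every $x\in Term_\sigma-(Var_\sigma(G)\cup Var_\sigma(G'))$ and $\bar\alpha_\sigma(x)=\alpha_{0,\sigma}(x)$ for every $x\in Var_\sigma(G)$. I then set $\alpha:=(\bar\alpha_\sigma)_{\sigma\in\mathcal S}$ and $h:=h_0$, and take the finite witness $\mathcal V'_\sigma:=Var_\sigma(G)\cup Var_\sigma(G')\subseteq\mathcal V_\sigma$; this makes conditions~(1) and~(2) of the statement immediate. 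It remains to verify that $(h,\alpha)$ is a homomorphism and an embedding. For the homomorphism clause on literals, note that in any literal $\syntax{p}(\syntax{t}_1,\ldots,\syntax{t}_n)\in V$ each argument $\syntax{t}_i$ is either a constant, on which $\bar\alpha_{\sigma_i}$ and $\alpha_{0,\sigma_i}$ agree trivially, or a variable in $Var_{\sigma_i}(G)$, on which they agree by construction; hence
\[
h(\syntax{p}(\syntax{t}_1,\ldots,\syntax{t}_n))=h_0(\syntax{p}(\syntax{t}_1,\ldots,\syntax{t}_n))=\syntax{p}(\bar\alpha_{\sigma_1}(\syntax{t}_1),\ldots,\bar\alpha_{\sigma_n}(\syntax{t}_n)),
\]
and symmetrically for negated literals. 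The clause on hyperedges is inherited verbatim from $(h_0,\alpha_0)$ since $h=h_0$. Finally $h=h_0$ is injective because $(h_0,\alpha_0)$ is an embedding, and each $\bar\alpha_\sigma$ is injective since it is bijective, so $(h,\alpha)$ is an embedding with the required properties.

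I do not anticipate a genuine obstacle. The only points requiring care are the observation that an injective substitution sends variables to variables (never to constants), and the bookkeeping ensuring that the single finite set $Var_\sigma(G)\cup Var_\sigma(G')$ simultaneously witnesses condition~(2) and contains all variables occurring in $G$, so that passing from $\alpha_0$ to $\bar\alpha$ leaves the vertex map $h_0$ intact. Everything else is a direct application of Lemma~\ref{lemma: inj func to bij substitution} and of the definitions of homomorphism and embedding.
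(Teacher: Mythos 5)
Your proof is correct and follows essentially the same route as the paper: start from an arbitrary embedding, restrict its substitution component to the finite set of variables occurring in $G$, and apply Lemma~\ref{lemma: inj func to bij substitution} to replace it by a bijective substitution that is the identity outside a finite set while keeping the vertex map. The only (harmless) differences are that you take the codomain to be $Var_\sigma(G')$ where the paper uses $\alpha_\sigma(Var_\sigma(G))$, and that you spell out two points the paper leaves implicit, namely that an injective substitution sends variables to variables and that the modified pair is still an embedding.
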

\begin{proof}
Let $G$ and $G'$ be formula graphs.
Suppose also that $(h, \alpha)$ embeds $G$ into $G'$.
Define $\alpha'_\sigma: Var_\sigma(G)\to \alpha_\sigma(Var_\sigma(G))$ to be the restriction of $\alpha_\sigma$ for each $\sigma \in \sortsetS$.
Note that $Var_\sigma(V)$ and $\alpha_\sigma(Var_\sigma(G))$ are finite sets by Definition \ref{definition: formula graph}.
Hence, by Lemma \ref{lemma: inj func to bij substitution}, $(h, \bar{\alpha'})$ is an embedding of $G$ into $G'$ such that $\bar{\alpha'}_\sigma$ is bijective for any $\sigma\in\sortsetS$ and that $\bar{\alpha'}_\sigma(x)=x$ holds for any $\sigma\in\sortsetS$ and $x\in Term_\sigma -Var_\sigma(G)\cup \alpha_\sigma(Var_\sigma(G))$.
Since $Var_\sigma(G)\cup \alpha_\sigma(Var_\sigma(G))$ is a finite subset of $\varsetV_\sigma$, this completes the proof.

\checked\checkedC
\end{proof}

\begin{proposition}\proved\label{prop: hom induces implication}
Let  $G$  and $G'$ be formula graphs.
If  $G$ can be embedded into $G'$, then $Sent(G')$ logically implies $Sent(G)$ in any class of structures.\checked\checkedC
\end{proposition}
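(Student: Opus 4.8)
The plan is to use Lemma~\ref{InjectiveEmbeddingTransformLemma} to obtain a "nice" embedding, and then show that this embedding lets us turn any satisfying assignment for $Sent(G')$ into one for $Sent(G)$. By Lemma~\ref{InjectiveEmbeddingTransformLemma}, since $G$ can be embedded into $G'$, there is an embedding $(h,\alpha):G\to G'$ such that each $\alpha_\sigma$ is a bijective substitution of $\sigma$ and there is a finite set $\mc V'_\sigma\subseteq \mc V_\sigma$ with $\alpha_\sigma(x)=x$ for all $x\in Term_\sigma-\mc V'_\sigma$. First I would fix an arbitrary class $\class C$ of structures, a structure $\structure{M}{I}\in\class C$, and assignments $\mu_1,\mu_2$ with $\structure{M}{I}\models Sent(G')[\mu_1,\mu_2]$; the goal is $\structure{M}{I}\models Sent(G)[\mu_1,\mu_2]$. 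Since $Sent(\cdot)$ is the existential closure of $Form(\cdot)$ and the free variables of $Form(G')$ are exactly those in $\bigcup_\sigma Var_\sigma(G')$ (and similarly for $G$), it suffices to take a first-order assignment $\nu_1$ witnessing $\structure{M}{I}\models Form(G')[\nu_1,\mu_2]$ and produce from it a first-order assignment $\nu_1'$ witnessing $\structure{M}{I}\models Form(G)[\nu_1',\mu_2]$.

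The key definition is $\nu_1' := \nu_1\circ\alpha$, i.e. $(\nu_1')_\sigma(x) = (\alpha_\sigma(x))^{\mc M}[\nu_1]$ for each first-order variable $x:\sigma$ (using that $\alpha_\sigma(x)$ is a first-order term, so this is well defined; for a variable it is $(\nu_1)_\sigma(\alpha_\sigma(x))$, and one should note $\mu_2$ plays no role here since literals in $V$ contain no second-order variables). The routine but central computation is then: for every first-order term $\syntax{t}:\sigma$ occurring in $G$, $\syntax{t}^{\mc M}[\nu_1'] = (\alpha_\sigma(\syntax t))^{\mc M}[\nu_1]$ — immediate for variables by definition, and for constants because $\alpha_\sigma$ fixes constants and $(\syntax c:\sigma)^{\mc M}$ is assignment-independent. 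Consequently, for each literal $\syntax{l}=*\syntax p(\syntax t_1,\ldots,\syntax t_n)\in V$, using clause~1 of the homomorphism definition ($h(\syntax l) = *\syntax p(\alpha_{\sigma_1}(\syntax t_1),\ldots,\alpha_{\sigma_n}(\syntax t_n))$) together with the interpretation clause for atomic formulae/negations, we get $\structure{M}{I}\models \syntax l[\nu_1',\mu_2] \iff \structure{M}{I}\models h(\syntax l)[\nu_1,\mu_2]$; and since $h(\syntax l)\in V'$, the right-hand side holds because $\structure{M}{I}\models Form(G')[\nu_1,\mu_2]$ entails every conjunct $\bigwedge V'$. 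Similarly, for each $\syntax R\in\mc R$ and each tuple $(\syntax l_1,\ldots,\syntax l_m)\in E_\syntax R$, clause~2 gives $(h(\syntax l_1),\ldots,h(\syntax l_m))\in E'_\syntax R$, so $\syntax R(h(\syntax l_1),\ldots,h(\syntax l_m))$ is a conjunct of $Form(G')$ and hence satisfied by $[\nu_1,\mu_2]$; interpreting each $h(\syntax l_i)$ as the second-order term it is and reusing the term-equality computation, $\structure{M}{I}\models \syntax R(\syntax l_1,\ldots,\syntax l_m)[\nu_1',\mu_2]$. Collecting all conjuncts yields $\structure{M}{I}\models Form(G)[\nu_1',\mu_2]$, hence $\structure{M}{I}\models Sent(G)[\mu_1,\mu_2]$, and since $\structure{M}{I}\in\class C$ and the assignments were arbitrary, $\{Sent(G')\}\models_{\class C}\{Sent(G)\}$.

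The main obstacle is bookkeeping rather than conceptual: one must be careful that when a second-order term appearing inside an $\syntax R(\cdots)$ position is a first-order literal $\syntax l$, its interpretation $\syntax l^{\structure{M}{I}}[\nu_1,\mu_2]$ is the element $*\syntax p(\syntax t_1^{\mc M}[\nu_1],\ldots)$ of $I$, so the identity $h(\syntax l)^{\structure{M}{I}}[\nu_1,\mu_2] = \syntax l^{\structure{M}{I}}[\nu_1',\mu_2]$ must be checked at the level of elements of $I$, not merely as satisfied/unsatisfied literals; this is exactly where the restriction that second-order predicates take only first-order-literal arguments (and that $h$ respects the predicate symbol and polarity, not just truth value) is used. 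A secondary subtlety is the passage from $Sent$ to $Form$: one should remark that it suffices because $G,G'$ are finite (so these are genuine finitary formulas) and because existential quantification over the finitely many variables of $G$ is witnessed by $\nu_1'$ restricted appropriately — or, more cleanly, prove the statement first for $Form$ under matching assignments and then note that $Sent(G') \models_{\class C} Sent(G)$ follows by standard manipulation of existential closures. I would present it in the latter order to keep the quantifier handling isolated from the core homomorphism argument.
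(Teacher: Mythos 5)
Your proposal is correct and follows essentially the same route as the paper: obtain a well-behaved embedding via Lemma~\ref{InjectiveEmbeddingTransformLemma}, pass from $Sent(G')$ to a witnessing assignment for $Form(G')$, transport it along the substitution to a witnessing assignment for $Form(G)$, and close existentially. You merely spell out the term/literal/$\syntax{R}$-tuple computations that the paper dismisses as ``straightforward'' (and your composition $\nu_1\circ\alpha$ is the correct direction for the pulled-back assignment).
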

\begin{proof}
Let $G$ and $G'$ be formula graphs.
Suppose that  $G$ can be embedded into $G'$.
Then, there is an embedding $(h, \alpha):G\rightarrow G'$ whose substitutions $\alpha_\sigma$ $(\sigma\in\sortsetS)$ are bijective by Lemma \ref{InjectiveEmbeddingTransformLemma}.
To prove that $Sent(G')$ logically implies $Sent(G)$ in any class of structure,
let $\structure M I$ be a structure and $\mu_1=\fami {\mu_\sigma} \sigma \sortsetS$ and $\mu_2$ be a first-order assignment and a second-order one, respectively.
Let us suppose that $\structure M I \models Sent(G')[\mu_1, \mu_2]$.
Then, we have $\structure M I \models Form(G')[\mu'_1, \mu'_2]$ for some $\mu'_1=\fami {\mu'_\sigma} \sigma \sortsetS$ and $\mu'_2$. 
Thus, we see that $\structure M I \models Form(G)[\fami {\mu'_\sigma\circ {\alpha'_\sigma}^{-1}} \sigma \sortsetS, \mu'_2]$ in a straightforward way. 
Therefore, we obtain $\structure M I \models Sent(G)[\mu_1, \mu_2]$ since $Sent(G)$ is the existential closure of $Form(G)$.\checked\checkedC
\end{proof}

The concept of an isomorphism is defined as below:
\begin{definition}[Isomorphisms]\label{definition: isomorphism}
Let $G$ and $G'$ be formula graphs.
A homomorphism $(h, \alpha):G\rightarrow G'$ is called an \emph{isomorphism} if it satisfies:
\begin{enumerate}
\item
$h$ and $\alpha_\sigma$ for each $\sigma$ are bijective; and 
\item
their inverse mappings constitute a homomorphism $(h^{-1}, (\alpha_\sigma^{-1})_{\sigma\in\mc S} ):G'\to G$.
\end{enumerate}
Formula graphs $G$ and $G'$ are \emph{isomorphic} (notation: $G\simeq G'$) if there is an isomorphism between them. \hfill $\dashv$
\end{definition}
\begin{example}
Let $G =(V, \fami {R_\syntax{R}} {\syntax R} \relsetR)$ be a formula graph and $\mathrm{id}_V$ and $\mathrm{id}_{Term_\sigma}$ ($\sigma\in\sortsetS$) be the identity mappings on $\mathrm{id}_V$ and $Term_\sigma$, respectively.
Then, $\mathrm{id}_G:=(\mathrm{id}_V, \fami {\mathrm{id}_{Term_\sigma}} {\sigma} \sortsetS))$ is an isomorphism.  \checked\checkedC \hfill $\dashv$
\end{example}
\noindent By Proposition \ref{prop: hom induces implication}, we obtain the following immediately:
\begin{proposition}
If two formula graphs $G$ and $G'$ are isomorphic,
then $Sent(G)$ and $ Sent(G')$ are logically equivalent in any class of structures.\checked\checkedC
\end{proposition}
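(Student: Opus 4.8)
The plan is to obtain the equivalence by applying Proposition~\ref{prop: hom induces implication} in both directions. First I would observe that an isomorphism is a special case of an embedding: if $(h,\alpha)\colon G\to G'$ is an isomorphism, then by clause~1 of Definition~\ref{definition: isomorphism} the map $h$ and each substitution $\alpha_\sigma$ are bijective, hence in particular injective, which is exactly what the definition of an embedding requires. Therefore $G$ can be embedded into $G'$, and Proposition~\ref{prop: hom induces implication} immediately gives that $Sent(G')$ logically implies $Sent(G)$ in any class of structures.

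For the reverse direction I would use clause~2 of Definition~\ref{definition: isomorphism}, which says that the pair of inverse mappings $(h^{-1},(\alpha_\sigma^{-1})_{\sigma\in\mc S})$ is itself a homomorphism $G'\to G$. Its components are bijective, being the inverses of bijections, so it is an embedding of $G'$ into $G$; hence a second application of Proposition~\ref{prop: hom induces implication} yields that $Sent(G)$ logically implies $Sent(G')$ in any class of structures. Putting the two implications together, for every class $\class C$ of structures we have both $\{Sent(G)\}\models_{\class C}\{Sent(G')\}$ and $\{Sent(G')\}\models_{\class C}\{Sent(G)\}$, that is, $Sent(G)$ and $Sent(G')$ are logically equivalent in $\class C$; since $\class C$ was arbitrary, the proposition follows.

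There is no real obstacle here: the argument is purely a matter of unwinding the definition of isomorphism and citing the already-established Proposition~\ref{prop: hom induces implication} twice. The only point that has to be checked — and the only place where clause~2 of the isomorphism definition is actually needed — is that the inverse of an isomorphism is again an embedding; this is immediate, since inverses of bijections are bijections and clause~2 guarantees that the inverse pair satisfies the homomorphism conditions.
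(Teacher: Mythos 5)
Your proof is correct and is exactly the argument the paper intends: the paper derives this proposition ``immediately'' from Proposition~\ref{prop: hom induces implication}, and your write-up simply spells out the two applications of that proposition (one for each direction, using that an isomorphism and its inverse are both embeddings). No discrepancy with the paper's approach.
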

\noindent The following proposition plays an important role when proving the halting property of a user-feedback dialogue protocol on hypothesis graphs:
\begin{proposition}\label{proposition: size n formula graphs finite}\proved
There are at most finitely many formula graph of order $n$ up to isomorphism for any non-negative integer $n$.\checked\checkedC
\end{proposition}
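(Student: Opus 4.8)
The plan is to exploit the fact that, once the alphabet $\Sigma$ is fixed, the only source of unboundedly many formula graphs of a given order is the supply of first-order variable names — and variable names are exactly what an isomorphism is allowed to permute, through its bijective substitutions $\alpha_\sigma$. So it suffices to show that every formula graph of order $n$ is isomorphic to one built over a \emph{fixed finite} pool of variables, and then to count those pool-restricted graphs directly.

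I would carry this out in three steps. First, I bound the variables: setting $K:=\max_{\syntax p\in\mc P}(\text{arity of }\syntax p)$, which is finite since $\mc P$ is non-empty and finite, a formula graph $G=\hypograph{V}{E}$ of order $n$ has $|V|=n$ literals, each an atomic formula $\syntax p(\syntax t_1,\ldots,\syntax t_k)$ or its negation with $\syntax p\in\mc P$, $k\le K$, and each $\syntax t_i$ a first-order term containing at most one variable; hence $V$ mentions at most $nK$ distinct variables altogether, so $|Var_\sigma(G)|\le\min(nK,|\mc V_\sigma|)$ for every $\sigma\in\mc S$. Next, I normalise them: fixing for each $\sigma\in\mc S$ a set $W_\sigma\subseteq\mc V_\sigma$ with $|W_\sigma|=\min(nK,|\mc V_\sigma|)$, I choose (using the previous step) an injection $\alpha_\sigma\colon Var_\sigma(G)\to W_\sigma$ and extend it by Lemma \ref{lemma: inj func to bij substitution} to a bijective substitution $\bar\alpha_\sigma$ of $\sigma$ agreeing with $\alpha_\sigma$ on $Var_\sigma(G)$; the family $\bar\alpha=(\bar\alpha_\sigma)_{\sigma\in\mc S}$ induces a renaming $h$ of literals, and $(h,\bar\alpha)$ is an isomorphism of $G$ onto the formula graph $G'$ obtained by applying $h$ to $V$ and to each tuple of every $E_{\syntax R}$ (its inverse being induced by $(\bar\alpha_\sigma^{-1})_{\sigma\in\mc S}$), so that $|G'|=n$ and every variable of $G'$ lies in $\bigcup_\sigma W_\sigma$. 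Finally, I count: the set $L_W$ of first-order literals all of whose variables lie in $\bigcup_\sigma W_\sigma$ is finite — $\mc P$ is finite, each predicate has a fixed finite arity, a literal is an atom or its negation, and an argument of sort $\sigma$ ranges over the finite set $W_\sigma\cup\mc C_\sigma$ — whence there are finitely many $n$-element subsets $V'\subseteq L_W$ and, for each of the finitely many $\syntax R\in\mc R$, finitely many subsets $E'_{\syntax R}$ of the finite set $(V')^{m}$ (with $m$ the arity of $\syntax R$); so there are finitely many formula graphs of order $n$ over this pool, and, by the normalisation step, finitely many of order $n$ up to isomorphism.

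I do not anticipate a genuine obstacle; the one place that needs care is the claim in the normalisation step that a family of bijective substitutions really induces an \emph{isomorphism} of formula graphs in the sense of Definition \ref{definition: isomorphism}. Verifying it means checking that $h$ is well defined and injective on $V$ (each $\bar\alpha_\sigma$ being a bijection on $Term_\sigma$ fixing constants, while $h$ preserves predicate symbols and signs), that its image $(V',(E'_{\syntax R})_{\syntax R\in\mc R})$ is again a formula graph, and that the componentwise inverse substitutions yield a homomorphism $G'\to G$ inverting $(h,\bar\alpha)$. The rest is routine bookkeeping — tracking sorts, and noting that the degenerate case in which some $\mc V_\sigma$ is finite with fewer than $nK$ elements causes no trouble, since then $W_\sigma$ is merely smaller and the injection $Var_\sigma(G)\to W_\sigma$ still exists because $|Var_\sigma(G)|\le|W_\sigma|$ by the first step.
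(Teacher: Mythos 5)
Your proposal is correct and follows essentially the same route as the paper's own proof: bound the number of variables of each sort by $n$ times the maximal predicate arity, rename them via Lemma \ref{lemma: inj func to bij substitution} into a fixed finite pool to get an isomorphic copy, and count the finitely many formula graphs over that pool. The only (welcome) refinement is your explicit $\min(nK,|\mc V_\sigma|)$ to cover the case where $\mc V_\sigma$ itself is small, a degenerate case the paper glosses over.
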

\begin{proof}
We prove that any formula graph of order $n$ is isomorphic to one containing only restricted variables.
Let $Max_p$ be the maximum arity of first-order predicates in $\propsetP$. Notice that the set $\mc P$ of first-order predicates is finite and thus there exists the maximum arity.
Then, any formula graph of order $n$ contains at most $(n\cdot Max_p)$-many variables.
Thus,  for each $\sigma\in\sortsetS$, we take a set $V_\sigma$ of first-order variables of $\sigma$ whose size is $n\cdot Max_p$.
Note that there are at most finitely many formula graphs of order $n$ containing only variables of $\sigma$ in $V_\sigma$ for each $\sigma$ since $\sortsetS$, $\mc P$ and $\mc R$ are finite and since an alphabet does not contain function symbols.
Let $G=\hypograph {V} {E})$ be an arbitrary formula graph of order $n$.
Then, for any $\sigma\in\sortsetS$, we have $\cardinal{Var_\sigma(G)}\leq n\cdot Max_p$. Thus, there is an injection $\alpha_\sigma: Var_\sigma(G) \to V_\sigma$, which extends to a bijective substitution $\bar{\alpha}_\sigma:Term_\sigma \to Term_\sigma$ by Lemma \ref{lemma: inj func to bij substitution}.
Let us write $\bar{\alpha}$ to mean $ \fami {\bar{\alpha}_\sigma} \sigma \sortsetS$.
We then define a formula graph $G'=\hypograph {V'} {E'}$ by 
$V'=\{\bar{\alpha}(l)\mid l \in V\}$ and
$E'_\syntax{R}=\{(\bar{\alpha}(l_1),\ldots, \bar{\alpha}(l_n) ) \mid (l_1,\ldots, l_n)\in E_\syntax{R} \}$ for any $\syntax{R}\in \relsetR$.
Here, $\bar{\alpha}(l)$ is defined to be $\syntax{p}(\bar{\alpha}_{\sigma_1}(\syntax{t}_1), \ldots, \bar{\alpha}_{\sigma_n}(\syntax{t}_n))$ (respectively, $\syntax{\neg p}(\bar{\alpha}_{\sigma_1}(\syntax{t}_1), \ldots, \bar{\alpha}_{\sigma_n}(\syntax{t}_n))$) for any first-order literal $l=\syntax{\neg p}(\syntax{t}_1,\ldots, \syntax{t}_n)\in L$ (respectively, $l=\syntax{\neg p}(\syntax{t}_1,\ldots, \syntax{t}_n)\in L$ ), where $(\sigma_1,\ldots, \sigma_n)$ is the sorts of the arguments of $\syntax p$.
We also define $h:V\to \{\bar{\alpha}(l)\mid l \in V\}$ by $h(l)=\bar\alpha(l)$ for any $l\in V$.
Then, $(h, \bar\alpha)$ is an isomorphism from $G$ to $G'$ with $(h^{-1}, \fami {{\bar\alpha_\sigma}^{-1}} \sigma \sortsetS)$ its inverse.\checked\checkedC
\end{proof}

\begin{remark}
The formula graphs and homomorphisms between them constitute a category in the sense of category theory (cf. \cite{maclane1978categories}) \checked\checkedC and the definition of isomorphism (Definition \ref{definition: isomorphism}) is equivalent to the categorical definition of isomorphism in this category.\checked\checkedC
In this sense, the above definitions of homomorphism and isomorphism are natural.
\hfill $\dashv$
\end{remark}

We conclude this section by showing a result on the relationship between the concept of embedding and that of isomorphism.
\begin{lemma}\label{lemma: same domain codomain of substitution}\checked\checkedC
Let $\sigma\in\sortsetS$ be a sort, $\mc V'_\sigma\subseteq \mc V_\sigma$ be a finite set of first-order variables of 
$\sigma$ and $f$ be a bijective substitution of $\sigma$.
Suppose that  $f(x)=x$ holds for any $x\in Term_\sigma -\mc V'_\sigma$.
Then, we have $f(\varsetV'_\sigma)=\varsetV'_\sigma$.
\end{lemma}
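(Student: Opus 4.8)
The plan is to use nothing more than the bijectivity of $f$ together with the hypothesis that $f$ is the identity on $Term_\sigma-\mc V'_\sigma$. First I would record what happens off $\mc V'_\sigma$: by the definition of a substitution of $\sigma$, $f$ fixes every constant symbol $\syntax{c}:\sigma\in\mc C_\sigma$, and by hypothesis it fixes every first-order variable of $\sigma$ not in $\mc V'_\sigma$; since these are exactly the elements of $Term_\sigma-\mc V'_\sigma$, we get $f(Term_\sigma-\mc V'_\sigma)=Term_\sigma-\mc V'_\sigma$.

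Next I would extract the two standard consequences of $f$ being a bijection on $Term_\sigma$. Injectivity gives $f(\mc V'_\sigma)\cap f(Term_\sigma-\mc V'_\sigma)=\emptyset$, which, using the previous paragraph, reads $f(\mc V'_\sigma)\cap(Term_\sigma-\mc V'_\sigma)=\emptyset$, i.e. $f(\mc V'_\sigma)\subseteq\mc V'_\sigma$. Surjectivity gives $f(\mc V'_\sigma)\cup f(Term_\sigma-\mc V'_\sigma)=Term_\sigma$, i.e. $f(\mc V'_\sigma)\cup(Term_\sigma-\mc V'_\sigma)=Term_\sigma$, hence $\mc V'_\sigma\subseteq f(\mc V'_\sigma)$. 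Combining the two inclusions yields $f(\mc V'_\sigma)=\mc V'_\sigma$, as required.

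There is essentially no obstacle here; the argument is a one-line set-theoretic complement-and-image computation, and the only point that deserves a moment's care is that $f$ is declared as a map on the whole term set $Term_\sigma$, so one must invoke both clauses (constants fixed by the definition of substitution, the remaining non-$\mc V'_\sigma$ variables fixed by assumption) to justify $f(Term_\sigma-\mc V'_\sigma)=Term_\sigma-\mc V'_\sigma$. I note in passing that the finiteness of $\mc V'_\sigma$ is not needed for this version of the argument; alternatively, one could use it to conclude surjectivity of the restriction $f|_{\mc V'_\sigma}:\mc V'_\sigma\to\mc V'_\sigma$ from its injectivity, once $f(\mc V'_\sigma)\subseteq\mc V'_\sigma$ has been established.
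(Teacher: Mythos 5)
Your proof is correct. The first half coincides with the paper's argument: both of you derive $f(\mc V'_\sigma)\subseteq\mc V'_\sigma$ from injectivity together with the fact that $f$ is the identity off $\mc V'_\sigma$ (the paper does this element-wise, arguing that a $y\in f(\mc V'_\sigma)$ with $y\notin\mc V'_\sigma$ would satisfy $f(y)=y=f(x)$ and so contradict injectivity; your set-level phrasing via disjointness of images is the same idea). Where you diverge is the reverse inclusion. The paper closes the gap with a cardinality argument: $\mc V'_\sigma$ is finite and $f$ is injective, so $|f(\mc V'_\sigma)|=|\mc V'_\sigma|<\infty$, and a finite set containing a subset of equal cardinality equals it. You instead use surjectivity of $f$ on $Term_\sigma$ together with $f(Term_\sigma-\mc V'_\sigma)=Term_\sigma-\mc V'_\sigma$ to get $\mc V'_\sigma\subseteq f(\mc V'_\sigma)$ directly. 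Your route is slightly more general, since it dispenses with the finiteness hypothesis entirely, as you correctly observe; the paper's route is the one that generalises if one only knows $f$ is injective rather than bijective (there finiteness would genuinely be doing work). For the lemma as stated, with $f$ bijective and $\mc V'_\sigma$ finite, both arguments are complete, and nothing in the rest of the paper depends on which one is used.
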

\begin{proof}
To prove $f(\varsetV'_\sigma)\subseteq \varsetV'_\sigma$,
we take an arbitrary element $y\in f(\varsetV'_\sigma)$.
Then, we have $y=f(x)$ for some $x\in\varsetV'_\sigma$.
If we have $x=y$, we immediately obtain $y\in f(\varsetV'_\sigma)$.
Thus, we suppose that $x\neq y$.
Then, we see that $f(y)\neq y$ because $f(y)=y=f(x)$ and thus $x=y$ otherwise, which is a contradiction.
Therefore, we have $y\in f(\varsetV'_\sigma)$.
Hence, we obtain $f(\varsetV'_\sigma)\subseteq \varsetV'_\sigma$.
To prove $f(\varsetV'_\sigma)=\varsetV'_\sigma$, we use $\cardinal{\varsetV'_\sigma} = \cardinal {f(\varsetV'_\sigma)}\lneq \infty$, which is proved by the facts that $\varsetV'_\sigma$ and thus $f(\varsetV'_\sigma)$ are finite and that $f$ is bijective.
Hence, we obtain $f(\varsetV'_\sigma)=\varsetV'_\sigma$ from $f(\varsetV'_\sigma)\subseteq \varsetV'_\sigma$ and $\cardinal{\varsetV'_\sigma} = \cardinal {f(\varsetV'_\sigma)}\lneq \infty$.\checked\checkedC
\end{proof}

\begin{lemma}\label{lem: finitely many bijective homs}\checked\checkedC
Let $G$ be a formula graph and $\mc V'_\sigma\subseteq \mc V_\sigma$ a finite set of first-order variables for each sort $\sigma\in\sortsetS$.
Then, there are at most finitely many embeddings $(h, \alpha): G\rightarrow G$ such that, for each $\sigma\in\sortsetS$,
\begin{enumerate}
\item \label{condition: alpha sigma bijection}
$\alpha_\sigma$ is bijective;
\item \label{condition: identitiy if not in V}
 $\alpha_\sigma(x)=x$ holds for any $x\in Term_\sigma -\mc V'_\sigma$.
\end{enumerate}
\end{lemma}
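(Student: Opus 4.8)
The plan is to show that every admissible pair $(h,\alpha)$ is completely determined by the family $\alpha=(\alpha_\sigma)_{\sigma\in\sortsetS}$ of substitutions, and then to bound the number of admissible $\alpha$ by a finite product of factorials.

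First I would analyse the substitutions sort by sort. Fix $\sigma\in\sortsetS$. By condition~\ref{condition: alpha sigma bijection} the substitution $\alpha_\sigma$ is bijective, and by condition~\ref{condition: identitiy if not in V} it fixes every term outside the finite set $\mc V'_\sigma$; hence Lemma~\ref{lemma: same domain codomain of substitution} applies and yields $\alpha_\sigma(\mc V'_\sigma)=\mc V'_\sigma$. Thus $\alpha_\sigma|_{\mc V'_\sigma}$ is a permutation of the finite set $\mc V'_\sigma$, and since $\alpha_\sigma$ acts as the identity on $Term_\sigma-\mc V'_\sigma$, it is entirely recovered from that permutation. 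Consequently there are at most $\cardinal{\mc V'_\sigma}!$ possibilities for $\alpha_\sigma$, and as $\sortsetS$ is finite there are at most $\prod_{\sigma\in\sortsetS}\cardinal{\mc V'_\sigma}!$ possibilities for the whole family $\alpha$.

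Next I would observe that once $\alpha$ is fixed, $h$ is forced. Indeed, clause~1 of the definition of a homomorphism requires that for every literal $l=\syntax{p}(\syntax{t}_1,\ldots,\syntax{t}_n)\in V$ (respectively $l=\syntax{\neg p}(\syntax{t}_1,\ldots,\syntax{t}_n)\in V$) one has $h(l)=\syntax{p}(\alpha_{\sigma_1}(\syntax{t}_1),\ldots,\alpha_{\sigma_n}(\syntax{t}_n))$ (respectively $h(l)=\syntax{\neg p}(\alpha_{\sigma_1}(\syntax{t}_1),\ldots,\alpha_{\sigma_n}(\syntax{t}_n))$), where $(\sigma_1,\ldots,\sigma_n)$ is the sort tuple of $\syntax p$. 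Hence there is at most one map $h$ compatible with a given $\alpha$. Combining the two steps, the number of embeddings $(h,\alpha):G\to G$ satisfying~\ref{condition: alpha sigma bijection} and~\ref{condition: identitiy if not in V} is bounded above by $\prod_{\sigma\in\sortsetS}\cardinal{\mc V'_\sigma}!$, which is finite.

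I do not expect a real obstacle here; the single point needing care is the appeal to Lemma~\ref{lemma: same domain codomain of substitution}, which is precisely what guarantees that each $\alpha_\sigma$ merely permutes $\mc V'_\sigma$ instead of possibly sending some variable of $\mc V'_\sigma$ to a term outside it, so that the counting by factorials is legitimate.
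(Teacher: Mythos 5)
Your proof is correct and follows essentially the same route as the paper's: both reduce the count of admissible families $\alpha$ to permutations of the finite sets $\mc V'_\sigma$ via Lemma~\ref{lemma: same domain codomain of substitution}, and then dispose of the first component $h$ by a finiteness observation. The only difference is that you note $h$ is uniquely determined by $\alpha$ through clause~1 of the definition of a homomorphism, yielding the sharper bound $\prod_{\sigma\in\sortsetS}|\mc V'_\sigma|!$, whereas the paper simply counts all functions from the finite vertex set $V$ to itself; both suffice for the stated conclusion.
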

\begin{proof}
Let $\sigma$ be an arbitrary sort in $\sortsetS$.
Define $A$ be the sets of bijective substitutions $f:\FOtermset_\sigma\to \FOtermset_\sigma$ such that $f(x)=x$ for any $x \in Term_\sigma - \varsetV'_\sigma$ 
and $B$ to be the set of bijective functions $g:\varsetV'_\sigma\to \varsetV'_\sigma$.
We first show that these two sets have the same cardinality.
Define function $F:A\to B$ by $F(f):\varsetV'_\sigma\to\varsetV'_\sigma$ to be the restriction of $f$, which is well-defined due to Lemma \ref{lemma: same domain codomain of substitution}.
Define also function $G:B\to A$ so that it maps a function $g\in B$ to another $G(g)$ such that, for any $x\in Term_\sigma$, $G(g)(x)=g(x)$ if $x\in \varsetV'_\sigma$ and $G(g)(x)=x$ otherwise.
Then, we can easily see that $G\circ F$ and $F\circ G$ are the identity mappings on $A$ and $B$, respectively.
Thus, $F$ and $G$ are bijective and thus $A$ and $B$ have the same cardinality.

Let us next prove that there are at most finitely many embeddings $(h, \alpha): G\rightarrow G$ such that, for each $\sigma\in\sortsetS$,
$\alpha_\sigma$ satisfies Condition \ref{condition: alpha sigma bijection} and \ref{condition: identitiy if not in V}.
It is easy to see that $B$ is finite and thus so is $A$.
Therefore, there are at most finitely many families $\fami {\alpha_\sigma} {\sigma} \sortsetS$ such that, for each $\sigma \in \sortsetS$, $\alpha_\sigma$ satisfies Condition \ref{condition: alpha sigma bijection} and \ref{condition: identitiy if not in V}, since $\sortsetS$ is finite. 
We also see that there are at most finitely many $h:V\to V$, where $V$ is the set of vertices of $G=\hypograph V E$.
Hence, there are at most finitely many embeddings $(h, \alpha):G\to G$ such that, for each $\sigma \in \sortsetS$, function $\alpha_\sigma$ satisfies Condition \ref{condition: alpha sigma bijection} and \ref{condition: identitiy if not in V}. \checked\checkedC
\end{proof}

\begin{proposition}\proved\label{proposition: not-iso implies not-embeddable}\checked\checkedC
Let $G$ and $G'$ be formula graphs.
If $G$ and $G'$ can be embedded into each other, then $G$ and $G'$ are isomorphic.
\end{proposition}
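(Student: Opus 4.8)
The plan is to run a Cantor--Schr\"oder--Bernstein-style argument inside the category of formula graphs. On the level of vertex sets this is immediate, since every formula graph has a finite vertex set; the delicate part is that an embedding also carries a family of substitutions acting on the (possibly infinite) sets $\FOtermset_\sigma$ of first-order terms, so the na\"ive ``injections both ways $\Rightarrow$ bijection'' reasoning does not apply to that component. To deal with it I would first replace the given embeddings by well-behaved ones: by Lemma \ref{InjectiveEmbeddingTransformLemma}, the embeddability of $G$ into $G'$ yields an embedding $(h,\alpha)\colon G\to G'$ in which every $\alpha_\sigma$ is a \emph{bijective} substitution equal to the identity outside some finite $\mc V^{(1)}_\sigma\subseteq\mc V_\sigma$, and likewise the embeddability of $G'$ into $G$ yields $(h',\alpha')\colon G'\to G$ with each $\alpha'_\sigma$ bijective and the identity outside a finite $\mc V^{(2)}_\sigma$. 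Set $\mc W_\sigma:=\mc V^{(1)}_\sigma\cup\mc V^{(2)}_\sigma$, still finite.

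Next I would look at the self-embedding $\phi:=(h',\alpha')\circ(h,\alpha)\colon G\to G$. Its vertex map $h'\circ h$ is an injection of the finite set of vertices of $G$ into itself, hence a bijection, and each substitution component $\alpha'_\sigma\circ\alpha_\sigma$ is a bijective substitution that is the identity outside $\mc W_\sigma$; the same properties are inherited by every power $\phi^{n}$ $(n\ge 1)$, using that a composite of bijective substitutions which are the identity outside $\mc W_\sigma$ is again of that form. By Lemma \ref{lem: finitely many bijective homs}, applied to $G$ with the finite sets $\mc W_\sigma$, there are only finitely many embeddings $G\to G$ of this kind, so $\phi^{i}=\phi^{j}$ for some $1\le i<j$. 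Since the vertex map and every substitution component of $\phi$ are bijective, we may cancel $\phi^{i}$ to get $\phi^{\,j-i}=\mathrm{id}_G$. Writing $k:=j-i\ge 1$ and $\phi^{0}:=\mathrm{id}_G$, the homomorphism $\phi^{\,k-1}$ then satisfies $\phi\circ\phi^{\,k-1}=\phi^{\,k-1}\circ\phi=\mathrm{id}_G$ and its components are exactly the inverse maps of those of $\phi$, so $\phi$ is an isomorphism in the sense of Definition \ref{definition: isomorphism}. By the symmetric construction, $\phi':=(h,\alpha)\circ(h',\alpha')\colon G'\to G'$ is an isomorphism as well.

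It remains to upgrade ``both composites are isomorphisms'' to ``$(h,\alpha)$ is an isomorphism''. Writing $f:=(h,\alpha)$ and $g:=(h',\alpha')$, we have that $gf=\phi$ and $fg=\phi'$ are isomorphisms with homomorphism inverses $\phi^{-1}$ and $\phi'^{-1}$. Then $\phi^{-1}\circ g$ is a left inverse of $f$ and $g\circ\phi'^{-1}$ is a right inverse of $f$, both being composites of homomorphisms; a morphism possessing both a left and a right inverse has them coincide, so $f$ has a two-sided inverse homomorphism $G'\to G$. As above, its components are the inverses of $h$ and of the $\alpha_\sigma$, so $f=(h,\alpha)$ is an isomorphism and $G\simeq G'$.

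The step I expect to be the main obstacle is the substitution bookkeeping. If formula graphs had only vertices and edges, the statement would be nothing more than the finite Cantor--Schr\"oder--Bernstein theorem applied to the vertex sets and to each edge set $E_\syntax{R}$; the real work is caused by the substitution components, which live on the infinite sets $\FOtermset_\sigma$. This is exactly why the argument must route through the normalisation of Lemma \ref{InjectiveEmbeddingTransformLemma} (to make the substitutions bijective and finitely supported) and the finiteness count of Lemma \ref{lem: finitely many bijective homs} (to force a repetition among the powers of $\phi$), and the one point needing genuine care is checking that ``bijective and identity outside a fixed finite set'' is preserved under composition and hence under taking powers.
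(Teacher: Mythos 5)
Your proposal is correct and follows essentially the same route as the paper's proof: normalise both embeddings via Lemma \ref{InjectiveEmbeddingTransformLemma} so that the substitutions are bijective and finitely supported, use Lemma \ref{lem: finitely many bijective homs} to force a repetition among the powers of the composite self-embedding, cancel to obtain that some power is the identity, and conclude that the original embeddings are isomorphisms. The only (cosmetic) difference is that you finish with the left-inverse/right-inverse argument where the paper passes to the common power $mm'$; the substance is identical.
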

\begin{proof}
Let $(f, \alpha):G\rightarrow G'$ and $(g, \beta):G'\rightarrow G$ be embeddings.
By Lemma \ref{InjectiveEmbeddingTransformLemma},
we may also assume for any $\sigma\in\mc S$ that $\alpha_\sigma$ and $\beta_\sigma$ are bijective and that there are finite sets $\mc V'_\sigma, \mc V''_\sigma \subseteq \mc V_\sigma$ ($\sigma\in\mc S$) such that $\alpha_\sigma(x)=x$ for $x \in \FOtermset_\sigma - \varsetV'_\sigma$ and $\beta_\sigma(x)=x$ for any $x \in \FOtermset_\sigma - \varsetV''_\sigma$.
Thus, we may further assume for each $\sigma\in \sortsetS$ that there is a finite set $\mc V'''_\sigma \subseteq \mc V_\sigma$, say $\varsetV'_\sigma\cup \varsetV''_\sigma$, such that $\alpha_\sigma(x)=x$ and $\beta_\sigma(x)=x$ hold for any $x \in \FOtermset_\sigma - \varsetV'''_\sigma$. 

Let us consider, for $n=1,2,\ldots$,  the composite embeddings  $(g\circ f, \beta\circ\alpha)^n =  (g\circ f, \beta\circ\alpha)  \circ \cdots \circ (g\circ f, \beta\circ\alpha)$ of $n$-many  $(g\circ f, \beta\circ\alpha)$.
Then, by Lemma \ref{lemma: same domain codomain of substitution}, for each $n$ and $\sigma\in \sortsetS$, the function $((\beta\circ \alpha)^n)_\sigma$ is a bijective substitution of $\sigma$ such that $((\beta\circ \alpha)^n)_\sigma(x)=x$ holds for any $x \in \FOtermset_\sigma - \varsetV'''_\sigma$.
Hence, there are at most finitely many such embeddings $(g\circ f, \beta\circ\alpha)^n$ and thus $(g\circ f, \beta\circ\alpha)^n = (g\circ f, \beta\circ\alpha)^{n+m}$ holds for some $n\gneq {0}$ and $m\gneq 0$.
Here, note that the first component $g\circ f$ of $(g\circ f, \beta\circ\alpha)$ is an injective endofunction on a finite set and thus it is bijective.
Note also that the component $\alpha_\sigma\circ\beta_\sigma$ for each $\sigma\in \mc S$ is also bijective.
Therefore, we obtain $ (g\circ f, \beta\circ\alpha)^m={\rm id}_G$ by composing $((g\circ f)^{-1}, \fami {(\beta_\sigma\circ\alpha_\sigma)^{-1}} \sigma \sortsetS )^n$ on the both sides of $(g\circ f, \beta\circ\alpha)^n = (g\circ f, \beta\circ\alpha)^{n+m}$.
By the symmetric argument, we also have $(f\circ g, \alpha\circ\beta)^{m'}={\rm id}_{G'}$ for some $m'\gneq 0$.
Hence, we have $(g\circ f, \beta\circ\alpha)^{mm'}={\rm id}_G$ and $(f\circ g, \alpha\circ\beta)^{mm'}={\rm id}_{G'}$.
Therefore, $(f, \alpha)$ and $(g, \beta)$ are an isomorphism.\checked\checkedC
\end{proof}

\section{User-Feedback Dialogue Protocols}\label{section: feedback protocols}
In the previous section, we have defined the concepts of hypothesis and hypothesis graph.
In this section, we first propose the concept of a \emph{user-feedback dialogue} (\emph{UFBD}) \emph{protocol} in Section \ref{subsection: Abstract User-Feedback Dialogue Protocols}.
We then define two types of UFBD protocols, \emph{basic UFBD protocols} and \emph{simple UFBD protocols}, as special cases of UFBD protocols in Sections \ref{subsection: Basic UFBD protocols} and \ref{section: simple UFBD}, respectively.
UFBD protocols on hypotheses and hypothesis graphs are obtained by instantiating UFBD protocols.
For these protocols, we see whether or not the two important properties hold:
\begin{inparaenum}[(1)] \item any dialogue always terminates, \emph{the halting property}; \item if the user gives feedbacks on properties based on the properties of fixed targets, the dialogue achieves the targets if it terminates, \emph{the convergence property}.
\end{inparaenum}

\subsection{Generic Results on User-Feedback Dialogue Protocols}\label{subsection: Abstract User-Feedback Dialogue Protocols}
We shall define the concept of a user-feedback dialogue (UFBD) protocol as a set of dialogues between the user and the system.
It is defined on the basis of the concept of \emph{feedback on properties}.
Let $\mc X$ be a set of \emph{candidates}, which shall be instantiated as a set of hypotheses/hypothesis graphs, and $P$ a set of \emph{properties}. 
We call a mapping $Prop:\mc X\to \wp(P)$ a \emph{property assignment function}.
Pairs  $(p, pos)$, $(p, neg)$ and $(p, neutral)$ for any $p\in P$ are called \emph{feedbacks (FBs) on $P$}.
Using these concepts, the concept of a UFBD protocol is defined:
\begin{definition}[User-Feedback Dialogues]\label{definition: UFBD}
Let $\mc X$ be a set of candidates, $P$ be a set of properties and $Prop:\mc X\to \wp(P)$ be a property assignment function.
A \emph{user-feedback dialogue} (\emph{UFBD}) on $\property$ is a finite or an infinite sequence $\mc D=\mc X_1, \mc F_1, \mc X_2, \mc F_2,\dots, \mc X_i, \mc F_i, \ldots$ of a set $\mc X_i\subseteq \mc X$ and a set $\mc F_i$ of FBs on $P$ ($i=1,2,\ldots$) that satisfies the following two conditions for any index $i$:
\begin{enumerate}
\item \label{condition: satisfy all previous FBs}
 \camera{any \emph{$X\in \mc X_i$ satisfies all previous FBs}, i.e for any} $j\lneq i$,
\begin{enumerate}
\item
$p \in Prop(X)$ for any  $ (p, pos) \in \mc F_j$  and
\item
 $p \not\in Prop(X)$ for any $ (p, neg) \in\mc F_j$;
\end{enumerate}
\item \label{condition: non-empty X}
$\mc X_i$ is non-empty if there is a candidate in $\mc X$ that \camera{satisfies all previous FBs.}
\end{enumerate}
A \emph{user-feedback dialogue protocol} (\emph{UFBD protocol} for short) $\class D$ is a set of UFBDs on $\property$ such that,
 for any finite or infinite sequence $\mc D=\mc X_1, \mc F_1,\dots \in \class D$, set $\class D$ contains any successive sequence $\mc X_1, \mc F_1,\dots, \mc X_i$/$\mc X_1, \mc F_1,\dots, \mc X_i, \mc F_i$ of $\mc D$ starting from $\mc X_1$, including the null sequence.
We call the set of UFBDs on $\property$ \emph{the UFBD protocol on $\property$} and write $\protocolProp \UFBD \property$.
\end{definition}
\noindent We often call a UFBD a \emph{dialogue} if $\property$ is clear from the context. A dialogue is \emph{finite} if it is a finite sequence.
Note that both of $\mc X_i$ and $\mc F_i$ possibly appear as the last set of a UFBD.

Intuitively, $\mc X$ in a UFBD is the set of possible hypotheses/hypothesis graphs.
The abductive reasoner is required to present a set of hypotheses/hypothesis graphs in $\mc X$ that satisfies all previous FBs. The user gives feedback on properties. 
To illustrate this, we give an example:
\begin{example}\label{example: UFBD}\proved\checked
Let $\mc X$ be the set $\{G_0, G_1, G_2\}$, where $G_0$ is the hypothesis graph in Example \ref{example: MFM graph} and $G_1, G_2$ are those defined below:
\begin{figure}[h]
\begin{minipage}{0.5\columnwidth}
\centering
{\scriptsize
\begin{tikzpicture}[node distance =0cm and 0.7cm]
\node [name=q00] {$\syntax{Hold(F, No)}$};
\node [below = of q00] (q01) {$\syntax{Open(F)}$};
\coordinate [right = of q00, label=above:$\syntax{Action}$] (q10) ;
\node [right = of q10] (q20) {$\syntax{Hold(F, High)}$};
\draw [-] (q01) -| (q10);
\draw [->] (q00) -- (q20);\
\end{tikzpicture}\\
Hypothesis graph $G_1$
}
\end{minipage}
\begin{minipage}{0.5\columnwidth}
\centering
{\scriptsize
\begin{tikzpicture}[node distance =0cm and 0.7cm]
\node [name = q20] {$\syntax{Hold(F, High)}$};
\node [below = of q20, white] (q21) {O};
\coordinate [right = of q20, label=above:\texttt{Cause-Effect}] (q30) ;
\node [right = of q30] (q40) {$\syntax{Hold(P, High)}$};
\draw [->] (q20) -- (q40);
\end{tikzpicture}\\
Hypothesis graph $G_2$}
\end{minipage}
\end{figure}\\
\noindent Let also $P$ be $\formgraphset$ and $Prop:\mc X\to \wp(P)$ be the function defined by $Prop(G)=\{G' \mid G' \text{ is a subgraph of }G\}$ for any $G\in \mc X$.
Then, the sequence  $\mc X_1:=\{G_1, G_2\}, \mc F_1:=\{(G_1, pos), (G_2, pos)\}$, $ \mc X_2:=\{G_0\}$ is a UFBD on $Prop$.
This sequence is intended to appear as follows:
\begin{enumerate}
\item
$\mc X$ is the set of possible hypothesis graphs;
\item
the abductive reasoner generates and presents, to the user, hypothesis graphs $G_1$ and $G_2$ in $\mc X$ as $\mc X_1$;
\item
the user gives FBs $(G_1, pos)$ and $(G_2, pos)$ as $\mc F_1$, which require that $G_1$ and $G_2$  should be contained as subgraphs;
\item
the reasoner regenerates and presents as $\mc X_2$ the only graph $G_0$ in $\mc X$ that contains $G_1$ and $G_2$ as its subgraphs.\hfill$\dashv$
\end{enumerate}
\end{example}

In the rest of this article, we see whether or not specific user-feedback protocols have \emph{the halting property} and \emph{the convergence property}.
\begin{definition}[Halting Property]
A user-feedback protocol has \emph{the halting property} if it does not contains any infinite sequence. \hfill$\dashv$
\end{definition}
\noindent The convergence property is defined using the concept of target set $\targetsetX\subseteq \mc X$.

\begin{definition}\label{definition: towards}
Let $\property:\mc X \to \powerset P$ be a property assignment function, $\mc D=\mc X_1, \mc F_1, \mc X_2, \mc F_2,\dots, \mc X_i, \mc F_i, \ldots$ be a finite or an infinite in $\protocolProp \UFBD \property$ and $\targetsetX$ be a non-empty subset of $\mc X$.
Dialogue $\mc D$ is \emph{towards $\targetsetX$} if it satisfies the following conditions for each index $i$ and $(p, f)\in \mc F_i$:
\begin{enumerate}
\item
$f=pos$ if $p\in Prop(X)$ for any $X\in \targetsetX$,

\item
$f=neg$ if $p \notin Prop(X)$ for any $X\in \targetsetX$  and 

\item $f=neutral$ otherwise.
\end{enumerate}
If the target set $\targetsetX$ is a singleton $\{X_t\}$, we call a UFBD on $\property$ towards $X_t$ to mean a UFBD on $\property$ towards $\{X_t\}$. \hfill$\dashv$
\end{definition}
\noindent For a UFBD protocol $\class D$ on $\property$ and a non-empty subset $\targetsetX$ of $\mc X$, 
 we write $\protocolTowards {\class D} \targetsetX$ to mean the subset of $\class D$ whose dialogues are towards  $\targetsetX$.
Note that $\protocolTowards {\class D} \targetsetX$ is again a UFBD protocol.\checked\checkedC

Let us next define the concept of convergence and the convergence property.
\begin{definition}[Convergence]\label{definition: convergence of one dialogue}
Let $\property:\mc X\to \powerset P$ be a property assignment function, $\class D$ be a UFBD protocol and $\targetsetX$ be a non-empty subset of $\mc X$.
A finite dialogue $\mc D$ \emph{converges to $\targetsetX$ in $\class D$} if it satisfies the condition: if there is no set $\mc A$ satisfying that the sequence $\mc D, \mc A$ is a dialogue in $\class D$,
then:
\begin{enumerate}
\item \label{condition: last set in convergence}
the last set of $\mc D$ is a non-empty subset of $\mc X$;
\item \label{condition: satisfy shared properties}
for any $X$ in the last set of $\mc D$ and $p\in P$, we have
\begin{enumerate}
\item \label{condition: satisfy shared positive properties}
$p\in Prop(X)$ if $p\in Prop(X')$ for any $X'\in \targetsetX$

\item \label{condition: satisfy shared negative properties}
$p\notin Prop(X)$ if $p\notin Prop(X')$ for any $X'\in \targetsetX$.
\end{enumerate}
\end{enumerate}
\hfill$\dashv$
\end{definition}

\begin{definition}[Convergence Property]\label{definition: convergence}
Let $\property:\mc X\to \powerset P$ be a property assignment function and $\class D$ be a UFBD protocol on $\property:\mc X\to \powerset P$.
Protocol $\class D$ has \emph{the convergence property} (respectively, \emph{the convergence property towards a single target}) if it satisfies the following conditions:
for any non-empty set (respectively, singleton) $\targetsetX\subseteq \mc X$ and finite dialogue $\mc D$ in $\protocolTowards {\class D} \targetsetX$,  $\mc D$ converges to $\targetsetX$ in $\protocolTowards {\class D} \targetsetX$.
\end{definition}

If the target set $\targetsetX$ is a singleton, we have the following results.
Firstly, any dialogue towards $\targetsetX$ does not contain $neutral$ feedbacks.
\begin{proposition}\label{proposition: single target implies no neutral}\proved
Let $\property:\mc X\to \powerset P$ be a property assignment function, $\targetsetX\subseteq \mc X$ be a singleton and
 $\mc D=\mc X_1, \mc F_1, \ldots $ be a finite or an infinite UFBD on $\property$ towards $\targetsetX$.
Then, for any index $i$ and FB $(p,f)\in \mc F_i$, we have $f\neq neutral$.\checked\checkedC
\end{proposition}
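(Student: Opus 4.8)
The plan is to observe that, for a singleton target, the two substantive clauses of Definition \ref{definition: towards} become mutually exclusive and jointly exhaustive, so that the residual clause which assigns $neutral$ is vacuous. Nothing deeper is needed.

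First I would fix an arbitrary index $i$ and an arbitrary feedback $(p,f)\in\mathcal{F}_i$, and aim to show $f\neq neutral$. Write $\targetsetX=\{X_t\}$. Since the only member of $\targetsetX$ is $X_t$, the hypothesis of clause~(1) of Definition \ref{definition: towards}, namely ``$p\in Prop(X)$ for any $X\in\targetsetX$'', is equivalent to the plain statement $p\in Prop(X_t)$; likewise the hypothesis of clause~(2), ``$p\notin Prop(X)$ for any $X\in\targetsetX$'', is equivalent to $p\notin Prop(X_t)$. By excluded middle exactly one of $p\in Prop(X_t)$ and $p\notin Prop(X_t)$ holds. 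In the first case clause~(1) forces $f=pos$, and in the second case clause~(2) forces $f=neg$; in either case $f\neq neutral$.

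Equivalently, clause~(3) (``otherwise'') is triggered precisely when the hypotheses of both clause~(1) and clause~(2) fail, i.e.\ when $p\notin Prop(X_t)$ and $p\in Prop(X_t)$ hold simultaneously, which is a contradiction. Hence clause~(3) never applies when the target is a singleton, so no $\mathcal{F}_i$ of a UFBD towards $\targetsetX$ can contain a $neutral$ feedback, whether the dialogue is finite or infinite. The argument is entirely elementary; the only point requiring a moment's care is the unwinding of the universal quantifier over the singleton and the resulting vacuity of the ``otherwise'' clause, so I do not anticipate any genuine obstacle.
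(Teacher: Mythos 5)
Your proposal is correct and follows essentially the same argument as the paper's own proof: a case split on whether $p\in Prop(X_t)$, noting that over the singleton target the universally quantified hypotheses of the $pos$ and $neg$ clauses reduce to $p\in Prop(X_t)$ and $p\notin Prop(X_t)$ respectively, so one of the first two clauses always applies and the ``otherwise'' clause is never triggered. No meaningful difference in route or content.
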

\begin{proof}
Suppose that $\targetsetX = \{\targetX \}$.
If we have $p\in \property(\targetX)$, then $p\in \property(X)$ for any $X\in \targetsetX$, which implies $f=pos$ and thus $f\neq neutral$.
Otherwise, we have $p\notin \property(\targetX)$ and thus $p\notin \property(X)$ for any $X\in \targetsetX$.
This implies $f=neg$. Therefore, we obtain $f\neq neutral$.\checked\checkedC
\end{proof}
\noindent Secondly, when a dialogue towards a singleton $\targetsetX=\{\targetX\}$ in a UFBD protocol satisfying the convergence property terminates, the last set is a set of candidates that have the same properties as those of the target $\targetX$:
\begin{proposition}\proved\label{proposition: convergence single target}\checked\checkedC
Let $\mc X$ be a set of candidates, $P$ be a set of properties, $\property:\mc X\to \wp(P)$ be a property assignment function and
 $\class D$ be a UFBD protocol on $\property$.
 Suppose that $\class D$ has the convergence property.
 For a finite UFBD $\mc D\in\class D$ towards a candidate $X_t\in \mc X$,
 if there is no set $\mc A$ such that $\mc D, \mc A$ is a UFBD on $\property$ towards $X_t$, then:
\begin{enumerate}
\item \label{condition: last set in convergence for a single target}
the last set of $\mc D$ is a non-empty subset of $\mc X$;
\item \label{condition: satisfy shared properties}
we have $\property(X)=\property(X_t)$ for any $X$ in the last set $\lastsetXL$ of $\mc D$.
\end{enumerate}
\end{proposition}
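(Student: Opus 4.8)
The plan is to read this off the convergence property of $\class D$ by instantiating the target set with the singleton $\targetsetX := \{X_t\}$, and then to translate the hypothesis and specialise the conclusion. So first I would record the set-up: since $X_t \in \mc X$, the set $\{X_t\}$ is a non-empty (in fact singleton) subset of $\mc X$; and since $\mc D$ is a finite UFBD in $\class D$ towards $X_t$, it belongs to $\protocolTowards{\class D}{\{X_t\}}$. Hence the assumption that $\class D$ has the convergence property gives, via Definition \ref{definition: convergence}, that $\mc D$ converges to $\{X_t\}$ in $\protocolTowards{\class D}{\{X_t\}}$ in the sense of Definition \ref{definition: convergence of one dialogue}.

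Next I would check that the antecedent of that convergence statement holds, i.e. that there is no set $\mc A$ with $\mc D, \mc A$ a dialogue in $\protocolTowards{\class D}{\{X_t\}}$. I would argue contrapositively: if such an $\mc A$ existed, then $\mc D, \mc A \in \protocolTowards{\class D}{\{X_t\}}\subseteq \class D$, so $\mc D, \mc A$ is a UFBD on $\property$, and being in $\protocolTowards{\class D}{\{X_t\}}$ it is towards $\{X_t\}$; thus $\mc D, \mc A$ would be a UFBD on $\property$ towards $X_t$, contradicting the hypothesis of the proposition. Therefore the antecedent is satisfied, and Definition \ref{definition: convergence of one dialogue} yields its two conclusions: the last set $\lastsetXL$ of $\mc D$ is a non-empty subset of $\mc X$ — which is Claim (1) — and, for every $X\in\lastsetXL$ and every $p\in P$, we have $p\in\property(X)$ whenever $p\in\property(X')$ for all $X'\in\{X_t\}$, and $p\notin\property(X)$ whenever $p\notin\property(X')$ for all $X'\in\{X_t\}$.

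Finally I would specialise using that the target is the singleton $\{X_t\}$: the quantifier ``for all $X'\in\{X_t\}$'' is just ``$X'=X_t$'', so the two conditions become $p\in\property(X_t)\Rightarrow p\in\property(X)$ and $p\notin\property(X_t)\Rightarrow p\notin\property(X)$. Splitting into the cases $p\in\property(X_t)$ and $p\notin\property(X_t)$ then shows $p\in\property(X)\Leftrightarrow p\in\property(X_t)$ for every $p\in P$, and since $\property(X),\property(X_t)\subseteq P$ this gives $\property(X)=\property(X_t)$, which is Claim (2). There is no genuine computation here; the only point requiring care is the bookkeeping step in the previous paragraph, namely matching ``$\mc D, \mc A$ is a UFBD on $\property$ towards $X_t$'' with ``$\mc D, \mc A$ is a dialogue in $\protocolTowards{\class D}{\{X_t\}}$'', which I expect to be the main (and only) obstacle.
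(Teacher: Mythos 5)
Your proposal is correct and follows essentially the same route as the paper's proof: instantiate the convergence property with the singleton target $\{X_t\}$, note that the proposition's non-extendability hypothesis implies the antecedent of Definition \ref{definition: convergence of one dialogue} (since $\protocolTowards {\class D} {\{X_t\}}$ consists of UFBDs on $\property$ towards $X_t$), and then read off Claim (1) directly and Claim (2) by specialising Conditions \ref{condition: satisfy shared positive properties} and \ref{condition: satisfy shared negative properties} to the singleton to get both inclusions $\property(X)\subseteq\property(X_t)$ and $\property(X_t)\subseteq\property(X)$. The paper treats the antecedent-matching step as implicit, whereas you spell it out; that is the only difference.
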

\begin{proof}
Condition \ref{condition: last set in convergence for a single target} is trivial by Condition \ref{condition: last set in convergence}  of Definition \ref{definition: convergence of one dialogue}.
Let us prove Condition \ref{condition: satisfy shared properties}.
To prove the contraposition of $\property(X)\subseteq \property(X_t)$, suppose that $p\notin \property(X_t)$.
Then, we see that $p\notin \property(X)$ for any $X\in \lastsetXL$ by Condition \ref{condition: satisfy shared negative properties} of Definition \ref{definition: convergence of one dialogue}.
To prove the converse direction, suppose that $p\in \property(X_t)$. This implies $p\in\property(X)$  for any $X\in \lastsetXL$ by Condition \ref{condition: satisfy shared positive properties} of Definition \ref{definition: convergence of one dialogue}.
Hence, we obtain $\property(X)=\property(X_t)$ for any $X\in\lastsetXL$.\checked\checkedC
\end{proof}

\subsection{Basic UFBD Protocols}\label{subsection: Basic UFBD protocols}
In the previous section, we have defined the concept of user-feedback dialogue protocol.
In this section, we impose additional condition on $\mc X_i$ and $\mc F_i$ in a UFBD $\mc D=\mc X_1,\mc F_1,\ldots$ to obtain another type of protocols, called \emph{basic UFBD protocols}, and see that they satisfy the halting and the convergence properties under certain reasonable conditions.
\subsubsection{Generic Results on Basic UFBD Protocols}
We first define the concept of basic UFBD protocol and see generic results on the protocols.
The definition is given as below:
\begin{definition}\label{definition: basic UFBD}
Let $\property: \mc X \to \powerset P$ be a property assignment function.
A \emph{basic user-feedback dialogue} (\emph{basic UFBD}) on $\property$ is a finite or an infinite UFBD $\mc D=\mc X_1, \mc F_1, \mc X_2, \mc F_2,\dots, \mc X_i, \mc F_i, \ldots$ on $\property$ that satisfies the conditions below for each index $i$.
\begin{enumerate}
\item Conditions on $\mc X_i$.
\begin{enumerate}
\item \label{condition: un-poited-out property}
There are $X \in \mc X_i$ and $p\in \property(X)$ such that $p$ \emph{has not been pointed out before $i$}, i.e $p\neq p'$ for any $j \lneq i$ and $(p', f')\in\mc F_j$, if there is $X\in \mc X$ such that $X$ satisfies all previous FBs and that some $p\in \property(X)$ has not been pointed out before $i$.
\item \label{condition: last X}
If there is no candidate $X \in \mc X_i$ such that $p$ has not been pointed out before $i$ for some $p\in Prop(X)$, it is the last set of $\mc D$

\end{enumerate}
\item Conditions on $\mc F_i$.
\begin{enumerate}
\item \label{condition2a: formula-based}
For any $(p, f)\in \mc F_i$, \camera{\emph{$p$ appears in $\mc X_i$}, i.e} there is a candidate $X\in \mc X_i$ such that $p \in Prop(X)$.
\item \label{condition: no previous FB}
For any $(p, f)\in \mc F_i$, $p$ has not been pointed out before $i$. 

\item \label{condition: non-empty FBs}
$\mc F_i$ is non-empty if there is a property $p\in P$ that appears in $\mc X_i$ and that has not been pointed out before $i$.
\item \label{condition: empty FB is the last}
If $\mc F_i$ is empty, it is the last set of $\mc D$.
\end{enumerate}

\end{enumerate}
A \emph{basic UFBD protocol on $\property$} is defined to be a set of basic UFBDs on $\property$.
\emph{The basic UFBD protocol on $\property$} (notation: $\protocolProp {\protocolname{Basic}} \property$) is the set of basic UFBDs  on $\property$.
\end{definition}
\noindent Intuitively, in a basic UFBD protocol, the abductive reasoner is required to present, if possible, at least one hypothesis/hypothesis graph in $\mc X$ such that it satisfies all previous FBs and that some property of it has not been pointed out; the user is required to give, if possible, at least one FB on a property of the presented hypothesis/hypothesis graph that has not been pointed.
Example \ref{example: UFBD} is also an example of a basic UFBD and illustrates this.

We see below that $\protocolProp {\protocolname{Basic}} \property$ has the convergence property for any $\property$.
\begin{lemma}\proved \label{lemma: next FB}
Let $P:\mc X\to \powerset P$ be a property assignment function, $\targetsetX$ be a non-empty subset of $\mc X$ and $\mc D=\mc X_1,\mc F_1, \ldots, \mc X_i$ $(i\geq 1)$ be a finite dialogue in $\protocolTowards {\protocolProp {\protocolname{Basic}} \property} \targetsetX$.
Suppose that there is a candidate $X\in \mc X_i$ and a property $p\in \property(X)$ such that $p$ has not been pointed out before $i$.
Then, there is a non-empty set $\mc F_i$ of FBs on $P$ such that the sequence $\mc D, \mc F_i$ is again a dialogue in $\protocolTowards {\protocolProp {\protocolname{Basic}} \property} \targetsetX$.\checked\checkedC
\end{lemma}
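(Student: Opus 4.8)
The plan is to produce $\mc F_i$ by hand as a one-element feedback set: take the witness pair $X\in \mc X_i$, $p\in \property(X)$ with $p$ not pointed out before $i$ supplied by the hypothesis, and set $\mc F_i := \{(p, f)\}$, where the polarity $f$ is dictated by $\targetsetX$ --- namely $f := pos$ if $p\in \property(X')$ for every $X'\in \targetsetX$, $f := neg$ if $p\notin \property(X')$ for every $X'\in \targetsetX$, and $f := neutral$ otherwise. This is well defined because $\targetsetX$ is non-empty, so the first two cases are mutually exclusive, and $\mc F_i$ is trivially non-empty; hence all that remains is to check that $\mc D, \mc F_i$ lands in $\protocolTowards {\protocolProp {\protocolname{Basic}} \property} \targetsetX$.

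First I would check that $\mc D, \mc F_i$ is still a UFBD on $\property$ in the sense of Definition \ref{definition: UFBD}. Appending a feedback set introduces no new candidate set, so Conditions \ref{condition: satisfy all previous FBs} and \ref{condition: non-empty X} --- which speak only about the sets $\mc X_j$ and about FBs strictly before the relevant index --- are inherited verbatim from $\mc D$. Next I would run through the additional conditions of Definition \ref{definition: basic UFBD}. For the candidate sets nothing new needs to be checked except at index $i$: Condition \ref{condition: un-poited-out property} holds because the pair $X, p$ already realises its conclusion, and Condition \ref{condition: last X} holds vacuously because its hypothesis --- that $\mc X_i$ contains no candidate with an un-pointed-out property --- is false, again by $X, p$. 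For the newly added $\mc F_i = \{(p, f)\}$: Condition \ref{condition2a: formula-based} holds since $p\in \property(X)$ and $X\in \mc X_i$; Condition \ref{condition: no previous FB} is precisely the standing assumption on $p$; Condition \ref{condition: non-empty FBs} holds because $\mc F_i \neq \emptyset$; and Condition \ref{condition: empty FB is the last} is vacuous for the same reason.

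Finally I would verify the ``towards $\targetsetX$'' condition of Definition \ref{definition: towards}: for the feedback sets of $\mc D$ it is inherited from $\mc D$ being towards $\targetsetX$, and for $\mc F_i$ it holds by the very way $f$ was chosen. Combining these yields $\mc D, \mc F_i \in \protocolTowards {\protocolProp {\protocolname{Basic}} \property} \targetsetX$ with $\mc F_i$ non-empty, which is the claim. I do not expect a genuine obstacle: the argument is a careful but routine walk through the defining clauses, and the one spot that deserves attention is Condition \ref{condition: last X} on $\mc X_i$ --- appending $\mc F_i$ must not retroactively force $\mc X_i$ to be the terminal set, and this is exactly what the hypothesis that $\mc X_i$ still carries an un-pointed-out property secures.
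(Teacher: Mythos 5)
Your proposal is correct and follows essentially the same route as the paper: the paper's proof also constructs the singleton $\mc F_i=\{(p,f)\}$ with the polarity $f$ determined by whether $p$ lies in $\property(X')$ for all, none, or some of the targets $X'\in\targetsetX$, and then asserts that the resulting sequence is a dialogue in $\protocolTowards {\protocolProp {\protocolname{Basic}} \property} \targetsetX$. Your explicit walk through the clauses of Definitions \ref{definition: UFBD}, \ref{definition: basic UFBD} and \ref{definition: towards} simply fills in the verification that the paper leaves as ``easy to see''.
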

\begin{proof}
We first construct $\mc F_i=\{(p, f)\}$ as follows:
 $f=pos$ if $p\in \property(X)$ for any $X\in\targetsetX$;   $f=neg$ if $p\notin \property(X)$ for any $X\in\targetsetX$;    $f=neutral$ otherwise.
Then, it is easy to see that $\mc D, \mc F_i$ is a dialogue in $\protocolTowards {\protocolProp {\protocolname{Basic}} \property} \targetsetX$.\checked\checkedC
\end{proof}

\begin{lemma}\label{lemma: exist next X}\proved\checked\checkedC
Let $\property:\mc X\to \powerset P$ be a property assignment function, $\targetsetX$ be a non-empty subset of $\mc X$ and $\mc D=\mc X_1, \mc F_1, \ldots, \mc X_i, \mc F_i$ be a finite dialogue in $\protocolTowards {\protocolProp {\protocolname{Basic}} \property} \targetsetX$.
Then, there exists a set $\mc X_{i+1}\subseteq \mc X$ such that $\mc D, \mc X_{i+1}$ is a dialogue in $\protocolTowards {\protocolProp {\protocolname{Basic}} \property} \targetsetX$.
\end{lemma}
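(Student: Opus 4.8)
The plan is to take $\mc X_{i+1}$ to be the set of \emph{all} candidates in $\mc X$ that satisfy every feedback occurring in $\mc F_1,\dots,\mc F_i$, i.e. the maximal set of candidates admissible after $\mc D$, and then to verify that the sequence $\mc D,\mc X_{i+1}$ satisfies every clause required of a member of $\protocolTowards {\protocolProp {\protocolname{Basic}} \property} \targetsetX$. The first observation is that appending a candidate set leaves all the feedback sets $\mc F_1,\dots,\mc F_i$ unchanged, so the property ``towards $\targetsetX$'' (which constrains only the $\mc F_j$) is inherited from $\mc D$, and all conditions of Definitions \ref{definition: UFBD} and \ref{definition: basic UFBD} at indices $1,\dots,i$ already hold because $\mc D$ lies in the protocol. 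Since $\protocolProp {\protocolname{Basic}} \property$ consists of \emph{all} basic UFBDs on $\property$, it therefore remains only to check the conditions referring to index $i+1$: the two conditions of Definition \ref{definition: UFBD} and the two conditions on $\mc X_{i+1}$ in Definition \ref{definition: basic UFBD} (the conditions on $\mc F_{i+1}$ are vacuous, no such set being appended).

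Next I would discharge these four obligations in turn. Condition \ref{condition: satisfy all previous FBs} of Definition \ref{definition: UFBD} is immediate, as every $X\in\mc X_{i+1}$ satisfies all previous FBs by construction. Condition \ref{condition: non-empty X} holds because $\mc X_{i+1}$ is exactly the set of candidates satisfying all previous FBs, hence non-empty precisely when such a candidate exists. For condition \ref{condition: un-poited-out property} of Definition \ref{definition: basic UFBD}: if some $X\in\mc X$ satisfies all previous FBs and has a property $p\in\property(X)$ not pointed out before $i+1$, then $X\in\mc X_{i+1}$ by construction, and this same $X$ and $p$ witness the desired conclusion. Finally, condition \ref{condition: last X} is satisfied: if no candidate in $\mc X_{i+1}$ carries a property not pointed out before $i+1$, the condition demands that $\mc X_{i+1}$ be the last set of the dialogue, which it is, since the sequence $\mc D,\mc X_{i+1}$ terminates there.

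I do not expect a genuine obstacle; the argument is purely a matter of unwinding definitions. The only points needing care are bookkeeping ones: reading ``all previous FBs'' and ``pointed out before $i+1$'' as quantifying over $\mc F_1,\dots,\mc F_i$, and noting that choosing $\mc X_{i+1}$ to be the \emph{maximal} admissible set automatically covers the degenerate case in which no candidate satisfies all previous FBs --- then $\mc X_{i+1}=\emptyset$, which is still consistent with all four conditions, since the hypotheses of the two ``if\,\dots\,then'' conditions fail and $\emptyset$ is trivially the last set. Together with Lemma \ref{lemma: next FB}, this lemma then shows that a dialogue towards $\targetsetX$ in $\protocolProp {\protocolname{Basic}} \property$ can always be continued until it is forced to stop.
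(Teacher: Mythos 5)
Your choice of $\mc X_{i+1}$ (the maximal set of candidates satisfying all previous FBs, versus the paper's singleton $\{X\}$) is a legitimate alternative, and your verification of the four conditions at index $i+1$ is correct, including the degenerate cases. The gap is in the sentence ``all conditions of Definitions \ref{definition: UFBD} and \ref{definition: basic UFBD} at indices $1,\dots,i$ already hold because $\mc D$ lies in the protocol.'' That is not true for Condition \ref{condition: empty FB is the last} at index $i$: this condition reads ``if $\mc F_i$ is empty, it is the last set of the dialogue,'' and its truth depends on the whole sequence, not only on the prefix up to $\mc F_i$. In $\mc D$ the set $\mc F_i$ \emph{is} the last set, so the condition holds there vacuously whether or not $\mc F_i$ is empty; in $\mc D,\mc X_{i+1}$ it no longer is, so you must prove $\mc F_i\neq\emptyset$ --- if $\mc F_i$ could be empty, no extension would be legal and the lemma would be false. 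The paper supplies exactly this missing step: since $\mc X_i$ is not the last set of $\mc D$, Condition \ref{condition: last X} forces some $X\in\mc X_i$ to have a property not pointed out before $i$, and Condition \ref{condition: non-empty FBs} then forces $\mc F_i\neq\emptyset$. (By contrast, the other ``last set'' condition, \ref{condition: last X} at indices $j\le i$, really is inherited, because each $\mc X_j$ is already a non-final set in $\mc D$, so its antecedent must already be false there.)

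A second, minor point: the paper's proof also treats the case where $\mc D$ is the null sequence, which the protocol contains by definition, and produces an initial $\mc X_1$ there. Your construction handles this uniformly ($\mc X_1=\mc X\supseteq\targetsetX\neq\emptyset$, every candidate vacuously satisfying the empty set of FBs, and Condition \ref{condition: empty FB is the last} having nothing to bite on), but you should say so explicitly. With the non-emptiness of $\mc F_i$ added, your argument goes through and is otherwise sound.
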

\begin{proof}
Let us first consider the case where $\mc D$ is the null sequence.
If there is $X\in \mc X$ such that $\property(X)\neq \emptyset$, we define $\mc X_1$ to be $\{X\}$.
Otherwise, we define $\mc X_1$ to be any non-empty subset of $\mc X$, say $\targetsetX$.
Then, the sequence $\mc X_1$ is a dialogue in $\protocolTowards {\protocolProp {\protocolname{Basic}} \property} \targetsetX$.
We next consider the case where $\mc D$ is not the null sequence.
In this case, there is a candidate $X\in \mc X_i$ and $p\in P$ such that $p$ has not been pointed out before $i$ by Condition \ref{condition: last X} of Definition \ref{definition: basic UFBD} since $\mc X_i$ is not the last set of $\mc D$.
Hence, $\mc F_i$ is non-empty by Condition \ref{condition: non-empty FBs} of Definition \ref{definition: basic UFBD}.
Therefore, Condition \ref{condition: empty FB is the last} of Definition \ref{definition: basic UFBD} does not apply to $\mc F_i$.
We next give $\mc X_{i+1}\subseteq \mc X$.
If there is a candidate $X\in\mc X$ such that it satisfies all previous FBs and that some property $p\in Prop(X)$ has not been pointed out before $i+1$,
then $\mc D, \{X\}$ is a dialogue in $\protocolTowards {\protocolProp {\protocolname{Basic}} \property} \targetsetX$.
Otherwise, $\mc D, \{X\}$ for any $X\in\targetsetX\neq \emptyset$ is a dialogue in $\protocolTowards {\protocolProp {\protocolname{Basic}} \property} \targetsetX$ because any $X\in\targetsetX$ satisfies all previous FBs.\checked\checkedC
\end{proof}

\begin{theorem}\proved\checked\checkedC\label{theorem: convergence}
Let $\property: \mc X\to \powerset P$ be a property assignment function.
Then, $\protocolProp {\basicUFBD} \property$ has the convergence property.
\end{theorem}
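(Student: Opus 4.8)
The plan is to unwind Definition \ref{definition: convergence of one dialogue}: fix a non-empty target set $\targetsetX\subseteq\mc X$ and a finite dialogue $\mc D\in\protocolTowards{\protocolProp{\basicUFBD}\property}\targetsetX$, and assume $\mc D$ admits no continuation inside this protocol (otherwise the convergence condition for $\mc D$ is vacuously met); it then suffices to verify Conditions \ref{condition: last set in convergence}, \ref{condition: satisfy shared positive properties} and \ref{condition: satisfy shared negative properties} of that definition. First I would note, using Lemma \ref{lemma: exist next X}, that a dialogue ending in a feedback set — and also the null sequence — always admits a continuation; hence a non-continuable $\mc D$ must end in a candidate set $\mc X_i$ with $i\ge 1$, so its last set is a subset of $\mc X$. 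Next, fix $X_t\in\targetsetX$. Since the three clauses of Definition \ref{definition: towards} are exhaustive and, because $\targetsetX\neq\emptyset$, mutually exclusive for a fixed property, a past feedback $(p,pos)\in\mc F_j$ forces $p\in\property(X')$ for all $X'\in\targetsetX$ and $(p,neg)\in\mc F_j$ forces $p\notin\property(X')$ for all $X'\in\targetsetX$; in particular $X_t$ satisfies all previous FBs in the sense of Condition \ref{condition: satisfy all previous FBs} of Definition \ref{definition: UFBD}. Thus some candidate in $\mc X$ satisfies all previous FBs, and Condition \ref{condition: non-empty X} of Definition \ref{definition: UFBD} makes $\mc X_i$ non-empty, which is Condition \ref{condition: last set in convergence}.

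The key step is to convert non-continuability into the assertion $(\star)$: every property of every candidate in $\mc X$ that satisfies all previous FBs has already been pointed out before $i$. Indeed, if some $Y\in\mc X_i$ carried a property $p\in\property(Y)$ not yet pointed out, then Lemma \ref{lemma: next FB} would produce a continuation $\mc D,\mc F_i$, contradicting non-continuability; hence no candidate in $\mc X_i$ carries an un-pointed-out property, and the contrapositive of Condition \ref{condition: un-poited-out property} of Definition \ref{definition: basic UFBD} then upgrades this exactly to $(\star)$.

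To finish, take $X$ in the last set $\mc X_i$ and $p\in P$. For Condition \ref{condition: satisfy shared positive properties}, assume $p\in\property(X')$ for every $X'\in\targetsetX$; then $p\in\property(X_t)$, and since $X_t$ satisfies all previous FBs, $(\star)$ shows $p$ was pointed out before $i$ by some $(p,f)\in\mc F_j$; by Definition \ref{definition: towards}, necessarily $f=pos$, and then Condition \ref{condition: satisfy all previous FBs} of Definition \ref{definition: UFBD} applied to $X\in\mc X_i$ gives $p\in\property(X)$. For Condition \ref{condition: satisfy shared negative properties}, assume $p\notin\property(X')$ for every $X'\in\targetsetX$ and suppose toward a contradiction that $p\in\property(X)$; since $X\in\mc X_i$ also satisfies all previous FBs, $(\star)$ shows $p$ was pointed out by some $(p,f)\in\mc F_j$, Definition \ref{definition: towards} forces $f=neg$, and Condition \ref{condition: satisfy all previous FBs} then yields $p\notin\property(X)$, a contradiction. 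Hence $\mc D$ converges to $\targetsetX$ in $\protocolTowards{\protocolProp{\basicUFBD}\property}\targetsetX$, and since $\targetsetX$ and $\mc D$ were arbitrary, $\protocolProp{\basicUFBD}\property$ has the convergence property.

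I expect the main obstacle to be the bookkeeping around $(\star)$: correctly composing the contrapositives of Lemma \ref{lemma: next FB} and of Condition \ref{condition: un-poited-out property} of Definition \ref{definition: basic UFBD}, and noticing that although the conditions defining a dialogue ``towards $\targetsetX$'' (Definition \ref{definition: towards}) are written as one-way implications, they act as biconditionals because their three cases exhaust and — for non-empty $\targetsetX$ — exclude one another; this is what both fixes the sign of each past feedback and certifies that the chosen target $X_t$ satisfies all previous FBs.
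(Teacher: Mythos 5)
Your proof is correct and follows essentially the same route as the paper's: Lemma \ref{lemma: exist next X} together with Condition \ref{condition: non-empty X} of Definition \ref{definition: UFBD} yields Condition \ref{condition: last set in convergence}, and Lemma \ref{lemma: next FB}, Condition \ref{condition: un-poited-out property} of Definition \ref{definition: basic UFBD}, Definition \ref{definition: towards} and Condition \ref{condition: satisfy all previous FBs} of Definition \ref{definition: UFBD} yield Condition \ref{condition: satisfy shared properties}. The only difference is organisational: you isolate the intermediate assertion $(\star)$ and derive both subconditions directly from it, whereas the paper argues each by contradiction, and in doing so your write-up makes explicit the appeal to Condition \ref{condition: un-poited-out property} that the paper leaves implicit when it invokes Lemma \ref{lemma: next FB}.
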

\begin{proof}
We prove each of the conditions in Definition \ref{definition: convergence of one dialogue} for any finite dialogue $\mc D\in \protocolProp {\basicUFBD} \property$ and a non-empty subset $\targetsetX\subseteq \mc X$.
We first prove Condition \ref{condition: last set in convergence}. Suppose, for the sake of contradiction, that the last set of $\mc D$ is not a set of candidates in $\mc X$.
Then, there exists a subset $\mc X'\subseteq \mc X$ such that $\mc D, \mc X'$ is a dialogue in $\protocolTowards {\protocolProp {\basicUFBD} \property} \targetsetX$ due to Lemma \ref{lemma: exist next X}, which is a contradiction.
We can also see that the last set is non-empty by Condition \ref{condition: non-empty X} of Definition \ref{definition: UFBD} since any candidate $X\in\targetsetX\neq\emptyset$ satisfies all previous FBs.

We next prove Condition \ref{condition: satisfy shared properties}.
To prove Condition \ref{condition: satisfy shared positive properties}, let us take an arbitrary $X$ in the last set $\lastsetXL$ of $\mc D$ and $p\in P$ and suppose that $p\in Prop(X')$ for any $X' \in \targetsetX$.
Suppose also, for the sake of contradiction, that $p\notin \property(X)$.
Then, we see that $p$ has not been pointed out before $l$ by Condition \ref{condition: satisfy all previous FBs} in Definition \ref{definition: UFBD} and Definition \ref{definition: towards}.
Therefore, there is $\mc F$ such that $\mc D, \mc F$ is a dialogue in $\protocolTowards {\protocolProp {\basicUFBD} \property} \targetsetX$ due to Lemma \ref{lemma: next FB}. This is a contradiction.
Condition \ref{condition: satisfy shared negative properties} is proved in a similar way.\checked\checkedC
\end{proof}
\noindent By Proposition \ref{proposition: convergence single target} and Theorem \ref{theorem: convergence}, we obtain:
\begin{corollary}\checked\checkedC\label{corollary: the same property at the end basic prop}
Let $\mc D$ be a finite basic UFBD on $\property:\mc X\to \powerset P$ towards $X_t$.
Suppose that there is no set $\mc A$ such that $\mc D, \mc A$ is a UFBD on $\property$ towards $X_t$.
Then, the last set is a non-empty subset of $\mc X$ and we have $\property(X')=\property(X)$ for any $X'$ in the last set of $\mc D$.\checked\checkedC
\end{corollary}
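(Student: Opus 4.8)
The plan is to obtain the corollary as an immediate consequence of Proposition~\ref{proposition: convergence single target} and Theorem~\ref{theorem: convergence}, exactly as the sentence introducing the statement suggests. First I would recall that $\protocolProp{\basicUFBD}{\property}$ is, by Definition~\ref{definition: basic UFBD}, a UFBD protocol on $\property$, and that Theorem~\ref{theorem: convergence} says precisely that this protocol has the convergence property in the sense of Definition~\ref{definition: convergence}.

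Next I would apply Proposition~\ref{proposition: convergence single target} with $\class D := \protocolProp{\basicUFBD}{\property}$. A finite basic UFBD on $\property$ towards $X_t$ is, by definition, a finite UFBD $\mc D$ belonging to $\protocolProp{\basicUFBD}{\property}$ that is towards the singleton $\{X_t\}$; hence $\mc D$ satisfies the hypotheses of the proposition. The remaining hypothesis of the proposition --- that there is no set $\mc A$ with $\mc D, \mc A$ a UFBD on $\property$ towards $X_t$ --- is exactly what the corollary assumes. Proposition~\ref{proposition: convergence single target} then delivers both conclusions at once: the last set of $\mc D$ is a non-empty subset of $\mc X$, and $\property(X) = \property(X_t)$ for every $X$ in that last set (this being the intended reading of the displayed equality, with $X_t$ on the right-hand side).

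I do not expect a genuine obstacle here, since the argument is a direct instantiation. The only point requiring a little care is the bookkeeping with definitions: checking that ``finite basic UFBD on $\property$ towards $X_t$'' unwinds to membership in $\protocolTowards{\protocolProp{\basicUFBD}{\property}}{\{X_t\}}$, and that the non-continuation hypothesis, being phrased at the level of UFBDs on $\property$, is at least as strong as the hypothesis Proposition~\ref{proposition: convergence single target} actually needs for the ambient protocol $\protocolProp{\basicUFBD}{\property}$. Since every basic UFBD is in particular a UFBD, both checks are immediate, and the corollary follows.
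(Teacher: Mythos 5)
Your proposal is correct and matches the paper's own derivation, which obtains this corollary directly from Theorem \ref{theorem: convergence} (the basic protocol has the convergence property) combined with Proposition \ref{proposition: convergence single target} instantiated at $\class D = \protocolProp{\basicUFBD}{\property}$ and the singleton target $\{X_t\}$. Your added remark that the non-continuation hypothesis, stated for UFBDs on $\property$, is at least as strong as what the proposition needs is the right bookkeeping check and raises no issue.
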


As we shall see in Examples \ref{example: semantics-based protocol not terminate} and \ref{example: infinite syntax-based dialogue}, a basic UFBD doe not necessarily have the halting property.

\subsubsection{Basic UFBD Protocols on Hypotheses}\label{subsec: feedback on Hypotheses}
In the rest of this article, we assume that $\class C$ ranges over classes of structures and $O$ over sets of sentences, except in examples.
A basic UFBD protocol on hypotheses is obtained by substituting the following property assignment function $\propertyH O {\class C}$ for $Prop$:
\begin{definition}
We define a function $\propertyH O {\class C}: \hypset {O} {\class C} \to \wp(\quotient{\sentenceSet}{\leftrightarrow_{\class C}})$ by:
 \[\propertyH O {\class C} (H)=\{[\syntax{\Phi}]_{\leftrightarrow_{\class C}} \mid H \models_{\class C} \syntax{\Phi}\}\]
  for any hypothesis $H\in\hypset {O} {\class C}$.
Here, $\quotient{\sentenceSet}{\leftrightarrow_{\class C}}$ is the set of equivalence classes of $\sentenceSet$ for the logical equivalence relation $\leftrightarrow_{\class C}$ restricted to $\sentenceSet\times\sentenceSet$.
\end{definition}
\noindent This property assignment function characterises a hypothesis up to equivalence:
\begin{proposition}\label{prop: proph and equiv}\proved\checked\checkedC
Let $H, H'$ be hypotheses in $\hypset {O} {\class C} $.
Then, the following are equivalent:
\begin{enumerate}
\item \label{condition: proph and equiv, hyp equiv}
$H\leftrightarrow_{\class C}H'$;
\item \label{condition: proph and equiv, proph equal}
$\propertyH O {\class C}(H)=\propertyH O {\class C}(H')$.
\end{enumerate}
\end{proposition}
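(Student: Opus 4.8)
The plan is to route everything through one elementary observation: membership of an equivalence class in $\propertyH O {\class C}(H)$ does not depend on the chosen representative, so that $[\syntax{\Phi}]_{\leftrightarrow_{\class C}}\in\propertyH O {\class C}(H)$ holds if and only if $H\models_{\class C}\syntax{\Phi}$. This is because $\models_{\class C}$ is transitive and $\syntax{\Phi}\leftrightarrow_{\class C}\syntax{\Psi}$ yields in particular $\syntax{\Phi}\models_{\class C}\syntax{\Psi}$; hence if $H\models_{\class C}\syntax{\Phi}$ and $\syntax{\Psi}\in[\syntax{\Phi}]_{\leftrightarrow_{\class C}}$, then $H\models_{\class C}\syntax{\Psi}$. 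Once this is recorded, both implications become short.

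For (\ref{condition: proph and equiv, hyp equiv})$\Rightarrow$(\ref{condition: proph and equiv, proph equal}), I would assume $H\leftrightarrow_{\class C}H'$ and note that transitivity of $\models_{\class C}$ gives, for every $\syntax{\Phi}\in\sentenceSet$, that $H\models_{\class C}\syntax{\Phi}$ iff $H'\models_{\class C}\syntax{\Phi}$. Combined with the observation above, the two sets $\{[\syntax{\Phi}]_{\leftrightarrow_{\class C}}\mid H\models_{\class C}\syntax{\Phi}\}$ and $\{[\syntax{\Phi}]_{\leftrightarrow_{\class C}}\mid H'\models_{\class C}\syntax{\Phi}\}$ coincide, i.e.\ $\propertyH O {\class C}(H)=\propertyH O {\class C}(H')$.

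For (\ref{condition: proph and equiv, proph equal})$\Rightarrow$(\ref{condition: proph and equiv, hyp equiv}), the key is that $H$ is a \emph{finite} set of sentences, so the conjunction $\syntax{\Phi}_H:=\bigwedge H$ (read as $\top$ when $H=\emptyset$) is itself a sentence, and clearly $H\leftrightarrow_{\class C}\{\syntax{\Phi}_H\}$. Then $H\models_{\class C}\syntax{\Phi}_H$ gives $[\syntax{\Phi}_H]_{\leftrightarrow_{\class C}}\in\propertyH O {\class C}(H)=\propertyH O {\class C}(H')$, hence $H'\models_{\class C}\syntax{\Phi}_H$ by the observation, and therefore $H'\models_{\class C}H$ since $\{\syntax{\Phi}_H\}\models_{\class C}H$. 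The symmetric argument with $\syntax{\Phi}_{H'}:=\bigwedge H'$ yields $H\models_{\class C}H'$, so $H\leftrightarrow_{\class C}H'$.

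I do not expect a genuine obstacle here: the only two points needing a word of care are the representative-independence of membership in $\propertyH O {\class C}(H)$ and the use of finiteness of $H$ to replace it by the single logically equivalent sentence $\bigwedge H$, both of which are routine. The proof should therefore be a few lines.
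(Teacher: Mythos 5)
Your proof is correct and follows essentially the same route as the paper: the nontrivial direction is handled exactly as in the paper's proof, by observing that $[\bigwedge H]_{\leftrightarrow_{\class C}}\in\propertyH O {\class C}(H)=\propertyH O {\class C}(H')$ forces $H'\models_{\class C}\bigwedge H$ and hence $H'\models_{\class C}H$, with the symmetric argument giving the converse. The only additions are your explicit remarks on representative-independence of membership in $\propertyH O {\class C}(H)$ and the $\top$ convention for $H=\emptyset$, which the paper leaves implicit.
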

\begin{proof}
It is easy to see that Condition \ref{condition: proph and equiv, hyp equiv} implies Condition \ref{condition: proph and equiv, proph equal}. Thus, we here prove the direction from Condition \ref{condition: proph and equiv, proph equal} to Condition \ref{condition: proph and equiv, hyp equiv}.
By Condition \ref{condition: proph and equiv, proph equal}, we have $[\bigwedge H]_{\leftrightarrow_{\class C}}\in \propertyH O {\class C}(H)=\propertyH O {\class C}(H')$.
Hence, we have $H'\models_{\class C}\bigwedge H$ and thus $H'\models_{\class C}H$.
Similarly, we have  $H'\models_{\class C}H$.
Therefore, $H$ and $H'$ are equivalent in $\class C$.\proved\checked\checkedC
\end{proof}
\noindent As an immediate consequence, we obtain:
\begin{corollary}
Let $H_t$ be a hypothesis in $\hypset {O} {\class C}$ and $\mc D$ be a finite dialogue in $\protocolTowards {\protocolProp {\protocolname{Basic}} {\propertyH O {\class C}} } {\{H_t\}}$.
Suppose that there is no set $\mc A$ such that $\mc D, \mc A$ is a dialogue in  $\protocolTowards {\protocolProp {\protocolname{Basic}} {\propertyH O {\class C}} } {\{H_t\}}$.
Then, the last set is a non-empty subset of $\hypset {O} {\class C}$ and  any hypothesis in the last set of $\mc D$ is equivalent to $H_t$ in $\class C$.\checked\checkedC
\end{corollary}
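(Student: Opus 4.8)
The plan is to read the statement off the generic theory of Section~\ref{subsection: Abstract User-Feedback Dialogue Protocols}, specialised to the candidate set $\hypset O {\class C}$, the property set $\quotient{\sentenceSet}{\leftrightarrow_{\class C}}$ and the property assignment function $\propertyH O {\class C}$, and then to convert ``equality of property sets'' into ``logical equivalence of hypotheses'' by invoking Proposition~\ref{prop: proph and equiv}. No new argument is needed: the work has already been done in Theorem~\ref{theorem: convergence}, Proposition~\ref{proposition: convergence single target} (jointly giving Corollary~\ref{corollary: the same property at the end basic prop}) and Proposition~\ref{prop: proph and equiv}.

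First I would note that, once $\property$ is instantiated as $\propertyH O {\class C}$, a finite dialogue in $\protocolTowards {\protocolProp {\basicUFBD} {\propertyH O {\class C}}} {\{H_t\}}$ is exactly a finite basic UFBD on $\propertyH O {\class C}$ towards the candidate $H_t$ in the sense of Definition~\ref{definition: towards} (being towards the singleton $\{H_t\}$ is, by that definition, the same as being towards $H_t$), and that the hypothesis ``no set $\mc A$ extends $\mc D$ within this protocol'' is precisely the hypothesis of Corollary~\ref{corollary: the same property at the end basic prop}. Applying that corollary therefore yields at once that the last set $\lastsetXL$ of $\mc D$ is a non-empty subset of $\hypset O {\class C}$ and that $\propertyH O {\class C}(H) = \propertyH O {\class C}(H_t)$ for every hypothesis $H \in \lastsetXL$.

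It then remains to translate this into logical equivalence. Fix an arbitrary $H \in \lastsetXL$. Since $H$ and $H_t$ both lie in $\hypset O {\class C}$ and $\propertyH O {\class C}(H) = \propertyH O {\class C}(H_t)$, Proposition~\ref{prop: proph and equiv} (the direction from its Condition~\ref{condition: proph and equiv, proph equal} to Condition~\ref{condition: proph and equiv, hyp equiv}) gives $H \leftrightarrow_{\class C} H_t$; as $H$ was arbitrary, every hypothesis in the last set of $\mc D$ is equivalent to $H_t$ in $\class C$. I do not expect any genuine obstacle here: the only points that require (routine) care are checking that $\propertyH O {\class C}$ is a legitimate property assignment function $\hypset O {\class C} \to \wp(\quotient{\sentenceSet}{\leftrightarrow_{\class C}})$, so that the generic machinery of Section~\ref{subsection: Abstract User-Feedback Dialogue Protocols} applies verbatim, and the bookkeeping identification of the two equivalent descriptions of the protocol used above.
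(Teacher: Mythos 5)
Your proposal is correct and matches the paper's proof exactly: the paper derives this corollary in one line from Corollary~\ref{corollary: the same property at the end basic prop} (instantiated with $\propertyH O {\class C}$) together with Proposition~\ref{prop: proph and equiv}, which is precisely your two-step argument. No differences worth noting.
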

\begin{proof}
This is readily proved by Corollary \ref{corollary: the same property at the end basic prop} and Proposition \ref{prop: proph and equiv}.\checked\checkedC
\end{proof}

The protocol $\protocolProp {\basicUFBD} {\propertyH O {\class C}}$ does not have the halting property for some $\class C$ and $O$:
\begin{example}\label{example: semantics-based protocol not terminate}\proved\checked
Consider a single-sorted language such that $\mc C =\{\syntax{0}\}$, $\mc P=\{\syntax{p}\}$ and $\mc R = \{\syntax{R}\}$.
Let $\class{C}$ be the class of all structures $\structure{M}{I}$ such that the domain $M$ of $\mc M$ is the set $ \mathbb{Z}_{\ge 0}$ of non-negative integers, $\syntax{0}^{\mc M}=0$ and $\syntax{p}^\mc{M}= \mathbb{Z}_{\ge 0}$, and let $O$ be $\{\syntax{p(0)}\}$.
Define a hypothesis $H_0$ to be $\syntax{\{p(0)\land R(p(0),p(0))\}}$ and 
$H_n$ ($n=1,2,\ldots$) to be the existential closure of the conjunction of:
 \begin{gather*}
\syntax{p(0)}, \quad
\bigwedge\{\syntax{p(x}_i) \mid 0\lneq  i \leq n\},\\
\bigwedge\{\syntax{0 \neq x}_i \mid 0\lneq  i \leq n \},\quad
\bigwedge\{ \syntax{x}_i\neq \syntax{x}_j \mid 0\lneq  i \lneq  j \leq n\},\\
\bigwedge\{\syntax{R(p(0), p(}\syntax{x}_1)), \syntax{R(p(x}_{n}), \syntax{p(0))}\},\quad
\bigwedge\{\syntax{R}(\syntax{p(x}_i), \syntax{p(x}_{i +1}))\mid 0\lneq  i\lneq n\}.
\end{gather*}
Note that, for $n,n' =0,1,\ldots$,  $H_n$ and $H_{n'}$ do not logically imply each other in $\class C$ if $n\neq n'$.
Hence, $\mc H_i=\{H_i\}$ and $\mc F_i =\{([H_i]_{\leftrightarrow_{\class C}},neg)\}$ for $i=1,2,\ldots$ give an infinite dialogue $\mc H_1, \mc F_1, \mc H_2, \mc F_2, \ldots$ in $\protocolTowards {\protocolProp {\protocolname{Basic}} {\propertyH O {\class C}} } {\{H_0\}}$  and thus in $\protocolProp {\basicUFBD} {\propertyH O {\class C}}$. \hfill $\dashv$
\end{example}
\noindent Restricting the class $\class C$ of structures, the protocol $\protocolProp {\basicUFBD} {\propertyH O {\class C}}$ has the halting property for any $O\subseteq \sentenceSet$:
\begin{theorem}\proved\checked\checkedC\label{theorem: halting property of syntax-based dialogue protocol}
Suppose that the domain $M_\sigma$, shared by the structures $\structure{M}{I}$ in $\class C$, is finite for any sort $\sigma\in\mc S$.
Then, protocols $\protocolname{Basic}(\propertyH O {\class C})$ and thus $\protocolTowards {\protocolname{Basic}(\propertyH O {\class C})} {\targetsetH}$ for any $\targetsetH \subseteq \hypset O {\class C}$ have the halting property.
\end{theorem}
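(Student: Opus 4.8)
The plan is to reduce the halting property to finiteness of the property set $P=\quotient{\sentenceSet}{\leftrightarrow_{\class C}}$. Granting that $P$ is finite, the rest is short: if $\protocolname{Basic}(\propertyH O {\class C})$ contained an infinite basic UFBD $\mc D=\mc X_1,\mc F_1,\mc X_2,\mc F_2,\dots$, then each $\mc F_i$ would be non-empty, since otherwise $\mc F_i$ would be the last set of $\mc D$ by Condition \ref{condition: empty FB is the last} of Definition \ref{definition: basic UFBD}, contradicting the infiniteness of $\mc D$; and by Condition \ref{condition: no previous FB} of Definition \ref{definition: basic UFBD} every property occurring in $\mc F_i$ is one that has not been pointed out before $i$. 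Choosing $(p_i,f_i)\in\mc F_i$ for every $i$ would then give an injection $i\mapsto p_i$ of $\{1,2,\dots\}$ into $P$, which is impossible once $\cardinal P<\infty$. Hence $\protocolname{Basic}(\propertyH O {\class C})$ has the halting property, and then so does $\protocolTowards{\protocolname{Basic}(\propertyH O {\class C})}{\targetsetH}$ for any $\targetsetH\subseteq\hypset O {\class C}$, being a subset of it.

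The substantive part is therefore the finiteness of $P$, which I would obtain in two moves. First, $\class C$ is itself a finite set of structures: by the standing assumption on a class of structures all $\structure M I\in\class C$ share the domains $\fami{M_\sigma}{\sigma}{\mc S}$ and the interpretations of constants, and the domains are finite by hypothesis, so the carrier $I$ of the second component is finite (it is a finite union of finite Cartesian products, since $\mc P$ is finite and each first-order predicate has finite arity). A structure in $\class C$ is then pinned down by a choice of subsets $\syntax p^{\mc M}\subseteq M_{\sigma_1}\times\cdots\times M_{\sigma_n}$ for each $\syntax p\in\mc P$ together with a choice of subsets $\syntax R^{\mc I}\subseteq I^{m}$ for each $\syntax R\in\mc R$, and, $\mc S$, $\mc P$ and $\mc R$ all being finite, there are only finitely many such choices; so $\cardinal{\class C}<\infty$. (This is where the restriction of the arguments of second-order predicate symbols to first-order literals is used, as it is what keeps the second component of a structure a finite object once the $M_\sigma$ are finite.) Second, a sentence has no freely occurring variables, so its satisfaction in a structure is independent of the assignments; hence for sentences $\syntax{\Phi},\syntax{\Psi}$ one has $\syntax{\Phi}\leftrightarrow_{\class C}\syntax{\Psi}$ exactly when $\syntax{\Phi}$ and $\syntax{\Psi}$ are valid in the same members of $\class C$, so $[\syntax{\Phi}]_{\leftrightarrow_{\class C}}\mapsto\{\structure M I\in\class C\mid\structure M I\models\syntax{\Phi}\}$ is a well-defined injection of $P$ into $\wp(\class C)$ and $\cardinal P\le 2^{\cardinal{\class C}}<\infty$.

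I expect the only real obstacle to be getting the counting in the first move exactly right, in particular being careful that the standing assumption genuinely fixes the domains and the interpretations of constants, so that only the finitely many predicate interpretations of first and second order remain free, and that $I$ (which depends only on the $M_\sigma$ and on $\mc P$) is indeed finite. Once that is in place, everything else — the passage to equivalence classes of sentences, and the pigeonhole argument against an infinite basic UFBD via Conditions \ref{condition: empty FB is the last} and \ref{condition: no previous FB} of Definition \ref{definition: basic UFBD} — is routine.
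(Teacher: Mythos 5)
Your proposal is correct and follows essentially the same route as the paper: establish that $\class C$ and hence $\quotient{\sentenceSet}{\leftrightarrow_{\class C}}$ are finite, then derive a contradiction from an infinite dialogue using Conditions \ref{condition: empty FB is the last} and \ref{condition: no previous FB} of Definition \ref{definition: basic UFBD} (the paper phrases the pigeonhole step via strictly growing sets $S_i$ of pointed-out properties rather than an explicit injection, but this is the same argument). The only difference is that you spell out the finiteness of $\class C$ and the injection of $\quotient{\sentenceSet}{\leftrightarrow_{\class C}}$ into $\wp(\class C)$, which the paper dismisses as ``easy to see''; your elaboration is accurate.
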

\begin{proof}
It is easy to see that $\class C$ is finite since $M_\sigma$ is finite for any sort $\sigma\in\mc S$, which implies that $\quotient{\sentenceSet}{\leftrightarrow_{\class C}}$ is also finite.
Assume, for the sake of contradiction, that there is an infinite basic UFBD $\mc D= \mc X_1, \mc F_1, \ldots, $ on $\propertyH O {\class C}$.
Then, we readily see that each $\mc F_i$ is non-empty since if it is empty then it is the last set due to Condition \ref{condition: empty FB is the last} in Definition \ref{definition: basic UFBD}.
Let $S_i$ be the set $\bigcup_{j=1}^{i}\{p \in \quotient{\sentenceSet}{\leftrightarrow_{\class C}} \mid (p, f)\in \mc F_j \}$ for any index $i$.
Then, we have $\cardinal{S_k}\lneq\cardinal{S_{k'}}$ for any indices $k, k'$ with $k \lneq k'$ because there is an FB $(p, f)\in \mc F_{k'}$ that is not contained in $\mc F_{k''}$ for any $k''$ with $k''\lneq k'$ by Condition \ref{condition: no previous FB} in Definition \ref{definition: basic UFBD}.
Hence, we can see that $i \leq \cardinal{S_i}$ by induction on $i$.
Therefore, we have $ \cardinal{ \quotient{\sentenceSet}{\leftrightarrow_{\class C}} } \lneq S_j$ for any index with $\cardinal{ \quotient{\sentenceSet}{\leftrightarrow_{\class C}} } \lneq j$.
However, we also have $\cardinal{S_j}\leq \cardinal{ \quotient{\sentenceSet}{\leftrightarrow_{\class C}} }$ for any index $j$ due to $S_j\subseteq  \quotient{\sentenceSet}{\leftrightarrow_{\class C}}$.
This is a contradiction.\checked\checkedC
\end{proof}

\subsubsection{Basic UFBD Protocols on Hypothesis Graphs}\label{subsec: feedback on Hypothesis Graphs}
In the previous section, we have discussed basic UFBD protocols on hypotheses.
In this section, we discuss basic UFBD protocols on hypothesis graphs.
We first see that a basic UFBD protocol on hypothesis graphs does not necessarily have the halting property and then show that it necessarily does under certain reasonable conditions.

A basic UFBD on hypothesis graphs is formally obtained by instantiating $Prop$ for the following property assignment function:
\begin{definition}
We define a function $\propertyG O {\class C}: \hypgraphset O {\class C}\to \wp(\quotient {\formgraphset} {\simeq})$ by:
\[\propertyG O {\class C} (G)=\{[G']_{\simeq} \mid G' \text{ can be embedded into }G\}\] for any $G\in \hypgraphset O {\class C}$.
Here, $\quotient \formgraphset {\simeq}$ is the equivalence classes of $\formgraphset$ for the relation $\simeq$ of being isomorphic and $[G']_{\simeq}$ is the equivalence class containing $G'$.
\end{definition}
\noindent We define the \emph{order} $\cardinal{g}$ of $g\in \quotient {\formgraphset} {\simeq}$ to be the order $\cardinal{G}$ of some $G\in g$, which is well-defined since $g$ is non-empty and since any two hypothesis graphs in $g$ are isomorphic.
This property assignment function $\propertyG O {\class C}$ characterises a hypothesis graph up to isomorphism:
\begin{proposition}\label{prop: prop and isom}\proved\checked\checkedC
For any $G, G'\in\hypgraphset O {\class C}$, the following are equivalent:
\begin{enumerate}
\item \label{condition: isom G G'}
$G\simeq G'$;
\item \label{condition: same prop G G'}
$PropG(O, {\class C})(G)=\propertyG O {\class C} (G')$.
\end{enumerate}
\end{proposition}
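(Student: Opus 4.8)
The plan is to mirror the proof of Proposition \ref{prop: proph and equiv}, systematically replacing logical equivalence by isomorphism and logical implication by embeddability. The two ingredients I would use are, first, that the composite of an embedding with an isomorphism (in either order) is again an embedding --- which is immediate since injectivity is preserved under composition and the composition of homomorphisms is a homomorphism --- and second, Proposition \ref{proposition: not-iso implies not-embeddable}, which says that two formula graphs that can be embedded into each other are isomorphic. Note that the hypothesis $G, G' \in \hypgraphset O {\class C}$ plays no role; the equivalence holds for arbitrary formula graphs.

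For the implication from Condition \ref{condition: isom G G'} to Condition \ref{condition: same prop G G'}, I would fix an isomorphism $(h, \alpha) : G \to G'$, so that $(h^{-1}, (\alpha_\sigma^{-1})_{\sigma\in\mc S}) : G' \to G$ is also an isomorphism by Definition \ref{definition: isomorphism}. Then, for any formula graph $G''$: if $(k, \beta) : G'' \to G$ is an embedding, composing it with $(h, \alpha)$ yields an embedding $G'' \to G'$; conversely, composing an embedding $G'' \to G'$ with $(h^{-1}, (\alpha_\sigma^{-1})_{\sigma\in\mc S})$ yields an embedding $G'' \to G$. Hence $G''$ embeds into $G$ if and only if it embeds into $G'$, so the two sets $\{[G'']_\simeq \mid G'' \text{ embeds into } G\}$ and $\{[G'']_\simeq \mid G'' \text{ embeds into } G'\}$ coincide, i.e. $\propertyG O {\class C}(G) = \propertyG O {\class C}(G')$.

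For the converse, from Condition \ref{condition: same prop G G'} to Condition \ref{condition: isom G G'}: the identity $\mathrm{id}_G : G \to G$ is an embedding, so $[G]_\simeq \in \propertyG O {\class C}(G) = \propertyG O {\class C}(G')$, which by definition of $\propertyG O {\class C}$ means that $G$ can be embedded into $G'$. By the symmetric argument $G'$ can be embedded into $G$. Proposition \ref{proposition: not-iso implies not-embeddable} then gives $G \simeq G'$.

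I do not anticipate any real difficulty: the statement is the exact graph-theoretic analogue of Proposition \ref{prop: proph and equiv}, and all the substantive work --- that mutual embeddability forces isomorphism --- has already been carried out in Proposition \ref{proposition: not-iso implies not-embeddable}. The only point that deserves a line of justification is that embeddability descends to a well-defined notion on isomorphism classes, which is exactly the closure-under-composition observation above.
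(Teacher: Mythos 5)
Your proposal is correct and follows essentially the same route as the paper's own proof: the forward direction by composing embeddings with the isomorphism (and its inverse), and the converse by observing $[G]_\simeq\in \propertyG O {\class C}(G')$, deducing mutual embeddability, and invoking Proposition \ref{proposition: not-iso implies not-embeddable}. Your explicit remark that embeddability is well defined on isomorphism classes is a small clarification the paper leaves implicit, but the argument is the same.
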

\begin{proof}
We first prove the direction from Condition \ref{condition: isom G G'} to Condition \ref{condition: same prop G G'}.
Let $(i, I):G\to G'$ be an isomorphism.
Then, for any formula graph $G''$ and embedding $(e, E):G'' \to G$, the composition $(i, I)\circ (e, E):G'' \to G'$ is also an embedding.
Therefore, we have that $\propertyG O {\class C} (G)\subseteq \propertyG O {\class C} (G')$.
Similarly, we also have $\propertyG O {\class C} (G)\supseteq \propertyG O {\class C} (G')$.
Hence, we obtain $\propertyG O {\class C} (G)= \propertyG O {\class C} (G')$.
Let us next prove the converse direction. 
Using the fact that $[G]_{\simeq} \in PropG(O, {\class C})(G)=\propertyG O {\class C} (G') $, we can easily see that $G$ can be embedded into $G'$.
Similarly, we can also see that $G'$ can be embedded into $G$.
Hence, $G$ and $G'$ can be embedded into each other.
Therefore, $G$ and $G'$ are isomorphic by Proposition \ref{proposition: not-iso implies not-embeddable}.
\checked\checkedC
\end{proof}
{\noindent}As an immediate consequence, we obtain:
\begin{corollary}\checked\checkedC
Let $G_t$ be a hypothesis graph in $\hypgraphset O {\class C}$ and $\mc D$ be a finite dialogue in $\protocolTowards {\protocolProp {\protocolname{Basic}} {\propertyG O {\class C}} } {\{G_t\}}$.
Suppose that there is no set $\mc A$ such that $\mc D, \mc A$ is a dialogue in $\protocolTowards {\protocolProp {\protocolname{Basic}} {\propertyG O {\class C}} } {\{G_t\}}$.
Then, the last set is a non-empty subset of $\hypgraphset {O} {\class C}$ and any hypothesis graph in the last set of $\mc D$ is isomorphic to $G_t$ .
\end{corollary}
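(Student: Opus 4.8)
The plan is to deduce this corollary in exactly the same way as the corresponding corollary for hypotheses was deduced, i.e.\ by combining Corollary~\ref{corollary: the same property at the end basic prop} with the characterisation of hypothesis graphs up to isomorphism given by Proposition~\ref{prop: prop and isom}. Concretely, I would instantiate the abstract set-up of Section~\ref{subsection: Basic UFBD protocols} by taking the set of candidates to be $\hypgraphset O {\class C}$, the set of properties to be $\quotient \formgraphset \simeq$, and the property assignment function to be $\propertyG O {\class C}$; this is legitimate since $\propertyG O {\class C}$ is by definition a map $\hypgraphset O {\class C}\to\wp(\quotient \formgraphset \simeq)$. With this instantiation, the hypotheses of Corollary~\ref{corollary: the same property at the end basic prop} read off directly from those of the present statement: $\mc D$ is a finite basic UFBD on $\propertyG O {\class C}$ towards $G_t$, and there is no $\mc A$ such that $\mc D, \mc A$ lies in $\protocolTowards {\protocolProp {\basicUFBD} {\propertyG O {\class C}}} {\{G_t\}}$.

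Applying Corollary~\ref{corollary: the same property at the end basic prop} then yields at once that the last set $\lastsetXL$ of $\mc D$ is a non-empty subset of $\hypgraphset O {\class C}$ --- this is the first assertion --- and that $\propertyG O {\class C}(G') = \propertyG O {\class C}(G_t)$ for every $G'\in\lastsetXL$. To finish, I would invoke Proposition~\ref{prop: prop and isom}, which states that for $G',G_t\in\hypgraphset O {\class C}$ the equality $\propertyG O {\class C}(G') = \propertyG O {\class C}(G_t)$ is equivalent to $G'\simeq G_t$; applying it to each $G'$ in $\lastsetXL$ gives that every hypothesis graph in the last set of $\mc D$ is isomorphic to $G_t$, which is the second assertion.

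I do not expect any genuine obstacle: all the mathematical content has already been packaged into Corollary~\ref{corollary: the same property at the end basic prop} (the abstract single-target convergence statement, itself resting on Theorem~\ref{theorem: convergence} and Proposition~\ref{proposition: convergence single target}) and into Proposition~\ref{prop: prop and isom} (the fact that $\propertyG O {\class C}$ separates hypothesis graphs precisely up to isomorphism, which in turn uses Proposition~\ref{proposition: not-iso implies not-embeddable}). The only points deserving a line of justification are the bookkeeping ones noted above, namely that $\propertyG O {\class C}$ qualifies as a property assignment function and that the non-extendability hypothesis here is exactly the one required by Corollary~\ref{corollary: the same property at the end basic prop} once the generic protocol $\protocolProp {\basicUFBD} {\property}$ is specialised to $\protocolProp {\basicUFBD} {\propertyG O {\class C}}$.
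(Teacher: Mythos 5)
Your proposal is correct and follows exactly the paper's own argument: the paper proves this corollary by combining Corollary~\ref{corollary: the same property at the end basic prop} with Proposition~\ref{prop: prop and isom}, just as you do. The additional bookkeeping you spell out (that $\propertyG O {\class C}$ is a property assignment function and that the non-extendability hypothesis matches) is implicit in the paper but entirely consistent with it.
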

\begin{proof}
This is proved by Corollary \ref{corollary: the same property at the end basic prop} and Proposition \ref{prop: prop and isom}.\checked\checkedC
\end{proof}

The protocol $\protocolProp {\protocolname{Basic}} {\propertyG O {\class C}} $ does not have the halting property for some $O$ and $\class C$:
\begin{example}\label{example: infinite syntax-based dialogue}
We use the settings in Example \ref{example: semantics-based protocol not terminate} and define $G_0$ to be the hypothesis graph  $(V^{(0)}, E^{(0)}_\syntax{R})=(\{\syntax{p(0)}\},\{(\syntax{p(0)}, \syntax{p(0)})\})$ and $G_n$ ($n=1,2,\ldots$) to be the hypothesis graph $(V^{(n)}, E^{(n)}_\syntax{R})$ 
such that $V^{(n)}$ is the union of:
\begin{gather*}
\{\syntax{p(0)}\},\quad \{\syntax{p(x}_i) \mid 0\lneq i\leq n) \},\\
\{\syntax{0} \neq \syntax{x}_i \mid 0\lneq  i \leq n\},\quad
 \{ \syntax{x}_i\neq \syntax{x}_j \mid 0\lneq  i \lneq  j \leq n \};
\end{gather*}
and $E^{(n)}_\syntax{R}$ is the union of:
\begin{gather*}
\{(\syntax{p(0)}, \syntax{p(x}_1)), (\syntax{p(x}_n), \syntax{p(0)})\} \text{ and } \{(\syntax{p(x}_i), \syntax{p(x}_{i+1}))\mid 0\lneq i\lneq n\}.
\end{gather*}
Then, we have $Sent(G_n)=H_n$ for $n=0,1,\ldots$.
Therefore, for $n,n' =0,1,\ldots$,  $G_n$ and $G_{n'}$ cannot be embedded into each other if $n\neq n'$ by Proposition \ref{prop: hom induces implication} since $H_n$ and $H_{n'}$ do not logically imply each other in $\class C$ if $n\neq n'$.
Hence, $\mc G_i=\{G_i\}$ and $\mc F_i =\{([G_i]_{\simeq},neg)\}$ for $i=1,2,\ldots$ constitute an infinite dialogue $\mc G_1, \mc F_1, \mc G_2, \mc F_2, \ldots$ in $\protocolTowards {\protocolProp {\protocolname{Basic}} {\propertyG O {\class C}} } {\{G_0\}}$  and thus in $\protocolProp {\basicUFBD} {\propertyG O {\class C}}$.\proved\checked
$\hfill\dashv$

\end{example}
\noindent Imposing the following condition, we obtain the halting property:
\begin{definition}\label{definition: n-bounded size constraint}
Let $n$ be a non-negative integer.
We say that a finite or an infinite dialogue $\mc G_1, \mc F_1,\dots, \mc G_i, \mc F_i, \ldots$ in $\protocolProp {\UFBD} {\propertyG O {\class C}}$ \emph{satisfies the $n$-bounded size constraint} if it satisfies the condition for each index $i$: there is an FB $(g, f)\in \mc F_i$ with $|g|\leq n$ if $\mc F_i$ is non-empty.

\end{definition}
\noindent For a UFBD protocol $\class D$ on $\propertyG O {\class C}$, \emph{the UFBD protocol $\class D$ with the $n$-bounded size constraint} (notation: $\protocolSizeRestrict {\class D} n $) is defined to be the subset of  $\class D$ whose dialogues satisfy the $n$-bounded size constraint.
It is easy to see that $\protocolTowards{ \protocolSizeRestrict {\class D} n }  \targetsetG=  \protocolSizeRestrict{ \protocolTowards{\class D} \targetsetG }  n$ for any non-empty $\targetsetG\subseteq \hypgraphset O {\class C}$ and non-negative integer $n$.\checked\checkedC

\begin{theorem}\label{DecidabilityOfGraphUFBD}\proved\checked\checkedC
The protocol $\protocolSizeRestrict {\protocolname{Basic}(\propertyG O {\class C})} n$ has the halting property for any non-negative integer $n$.
\end{theorem}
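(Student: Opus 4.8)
The plan is to argue by contradiction, reducing the existence of an infinite dialogue to the impossibility of having infinitely many pairwise non-isomorphic formula graphs of order at most $n$, which is ruled out by Proposition~\ref{proposition: size n formula graphs finite}. Fix a non-negative integer $n$ and suppose, for contradiction, that $\mc D = \mc G_1, \mc F_1, \mc G_2, \mc F_2, \ldots$ is an infinite dialogue in $\protocolSizeRestrict {\protocolProp {\basicUFBD} {\propertyG O {\class C}}} n$. The first step is to observe that every $\mc F_i$ occurring in $\mc D$ is non-empty: were some $\mc F_i$ empty, Condition~\ref{condition: empty FB is the last} of Definition~\ref{definition: basic UFBD} would force $\mc F_i$ to be the last set of $\mc D$, contradicting that $\mc D$ is infinite.

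Next I would extract from $\mc D$ an infinite family of pairwise non-isomorphic ``small'' formula graphs. Since each $\mc F_i$ is non-empty and $\mc D$ satisfies the $n$-bounded size constraint (Definition~\ref{definition: n-bounded size constraint}), for every index $i$ there is an FB $(g_i, f_i) \in \mc F_i$ with $\cardinal{g_i} \le n$. By Condition~\ref{condition: no previous FB} of Definition~\ref{definition: basic UFBD}, the property $g_j$ of any FB in $\mc F_j$ has not been pointed out before $j$; in particular, for $i < j$ we get $g_j \ne g_i$, since $(g_i, f_i) \in \mc F_i$ and $i < j$. Hence $g_1, g_2, \ldots$ are pairwise distinct elements of $\quotient{\formgraphset}{\simeq}$, each of order at most $n$. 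This is impossible: by Proposition~\ref{proposition: size n formula graphs finite} there are only finitely many formula graphs of each order $m$ up to isomorphism, so the set $\{ g \in \quotient{\formgraphset}{\simeq} \mid \cardinal{g} \le n \}$ is finite and cannot contain the infinitely many distinct $g_i$. Equivalently, setting $S_i = \bigcup_{j=1}^{i} \{ g \mid (g,f) \in \mc F_j \text{ and } \cardinal{g} \le n \}$, the sets $S_i$ strictly increase (each gains the fresh element $g_i$), so $i \le \cardinal{S_i} \le \cardinal{\{ g \in \quotient{\formgraphset}{\simeq} \mid \cardinal{g} \le n \}} < \infty$, which fails once $i$ is large enough. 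This contradiction shows that no infinite dialogue exists, so $\protocolSizeRestrict {\protocolProp {\basicUFBD} {\propertyG O {\class C}}} n$ has the halting property.

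I do not expect a serious obstacle: the argument is a variant of the proof of Theorem~\ref{theorem: halting property of syntax-based dialogue protocol}, with the finiteness of $\quotient{\sentenceSet}{\leftrightarrow_{\class C}}$ replaced by the finiteness of the order-$\le n$ part of $\quotient{\formgraphset}{\simeq}$. The one point that must be used carefully is the $n$-bounded size constraint: without it the $g_i$ could have unbounded order (as in Example~\ref{example: infinite syntax-based dialogue}) and the counting would break down; with it, every step of an infinite dialogue is forced to introduce a new formula graph drawn from a fixed finite pool, which is precisely what makes termination unavoidable.
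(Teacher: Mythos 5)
Your proof is correct and follows essentially the same route as the paper's: both argue by contradiction, use the $n$-bounded size constraint together with Condition~\ref{condition: empty FB is the last} to extract from each $\mc F_i$ a fresh isomorphism class of order at most $n$, invoke Condition~\ref{condition: no previous FB} for freshness, and derive a contradiction with the finiteness guaranteed by Proposition~\ref{proposition: size n formula graphs finite}. The counting-set variant you sketch at the end is in fact the exact form the paper uses.
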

\begin{proof}
This is proved in a way similar to Theorem \ref{theorem: halting property of syntax-based dialogue protocol}.
Let $(\quotient {\formgraphset} {\simeq})_n$ denote $\{g\in \quotient {\formgraphset} {\simeq} \mid  \cardinal g \leq n \}$.
Then, it is easy to see that the cardinality $\cardinal { (\quotient {\formgraphset} {\simeq})_n }$ is finite since there are at most finitely many formula graphs whose orders are less than or equal to $n$ up to isomorphism by Proposition \ref{proposition: size n formula graphs finite}.
Suppose, for the sake of contradiction, that there is an infinite dialogue $\mc D= \mc G_1, \mc F_1, \ldots$ in $\protocolSizeRestrict {\protocolname{Basic}(\propertyG O {\class C})} n$.
Let also $S_i$ be the set $\{g\in (\quotient {\formgraphset} {\simeq})_n \mid (g, f)\in \mc F_i\}$ for each index $i$.
Then, $S_i$ for each index $i$ is non-empty since $\mc F_i$ is non-empty due to Condition \ref{condition: empty FB is the last} of Definition \ref{definition: basic UFBD}.
Let us write $S'_i$ to mean $\bigcup_{j=1}^i S_j$.
Since $S_j$ does not contain $g$  for any indices $i,j$ with $j\lneq i$ and $g\in S_i$ due to Condition \ref{condition: no previous FB} of Definition \ref{definition: basic UFBD}, we have $S'_i \subsetneq S'_j$. 
Using this fact, we can easily see that $\cardinal{S'_i}\geq i$ for any index $i$ by induction on $i$.
Let us now take an index $j\gneq \cardinal{ (\quotient {\formgraphset} {\simeq} )_n}$.
Then, we obtain $\cardinal{S'_j}\geq j\gneq \cardinal{(\quotient {\formgraphset} {\simeq})_n}$, which contradicts to $S'_j\subseteq (\quotient {\formgraphset} {\simeq})_n$.\checked\checkedC
\end{proof}

We conclude this section by proving that $\protocolSizeRestrict {\protocolname{Basic}(\propertyG O {\class C})} n$ has the convergence property too.
\begin{lemma}\label{lemma: next F w cond 2e}\proved\checked\checkedC
Let $\targetsetG$ be a non-empty subset of $\hypgraphset O {\class C}$, $n$ be a non-negative integer such that $n \gneq \cardinal{G}$ for any $G\in \targetsetG$ and  $\mc D=\mc G_1, \mc F_1,\dots, \mc G_i$ $(i\geq 1)$ be a dialogue in $\protocolTowards{ \protocolSizeRestrict {\protocolname{Basic}(\propertyG O {\class C})} n }  \targetsetG$.
Suppose that there is a set $\mc F_i$ of FBs such that $\mc D, \mc F_i$ is a dialogue in $\protocolTowards {\protocolname{Basic}(\propertyG O {\class C})} \targetsetG $.
Then, there is a set $\mc F'_i$ of FBs such that $\mc D, \mc F'_i$ is a dialogue in $\protocolTowards{ \protocolSizeRestrict {\protocolname{Basic}(\propertyG O {\class C})} n }  \targetsetG$.
\end{lemma}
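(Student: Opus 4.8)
The plan is to mimic the proof of Lemma~\ref{lemma: next FB}: I would let $\mc F'_i$ be a singleton $\{(g,f')\}$, a single feedback on a property $g$ with $\cardinal g\leq n$, where $f'$ is the feedback dictated by Definition~\ref{definition: towards}. Since $\mc D$ is already a dialogue in $\protocolTowards{ \protocolSizeRestrict {\protocolname{Basic}(\propertyG O {\class C})} n }  \targetsetG$, and since a one-element $\mc F'_i$ automatically satisfies the conditions on $\mc F_i$ in Definition~\ref{definition: basic UFBD}, the $n$-bounded size constraint once $\cardinal g\leq n$, and the ``towards'' requirement once $f'$ is chosen correctly, the only real work is to exhibit such a $g$ that moreover \emph{appears in $\mc G_i$} and \emph{has not been pointed out before $i$}. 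As a first step, the hypothesis that $\mc D,\mc F_i$ is a dialogue means $\mc G_i$ is not the last set of $\mc D,\mc F_i$, so Condition~\ref{condition: last X} of Definition~\ref{definition: basic UFBD} supplies a candidate $G^\ast\in\mc G_i$ together with a property $[G']_{\simeq}\in\propertyG O {\class C}(G^\ast)$ not pointed out before $i$, where $G'$ can be embedded into $G^\ast$.

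The crux is to replace $[G']_{\simeq}$, which may be large, by a small un-pointed-out property. If $\cardinal{G'}\leq n$ there is nothing to do. Otherwise I would take a formula subgraph $G''$ of $G'$ with $\cardinal{G''}=n$ (any $n$ literals of $G'$ as vertices, no edges); since a formula subgraph can be embedded into its ambient formula graph and a composition of embeddings is an embedding, $G''$ can be embedded into $G^\ast$, so $[G'']_{\simeq}$ appears in $\mc G_i$ and has order $n$. The point I expect to be the heart of the argument is that $[G'']_{\simeq}$ has \emph{not} been pointed out before $i$: if $([G'']_{\simeq},f'')\in\mc F_j$ for some $j\lneq i$, then, since $n\gneq\cardinal{G_t}$ for every $G_t\in\targetsetG$, every formula graph embeddable into $G_t$ has order at most $\cardinal{G_t}\lneq n=\cardinal{G''}$, so $[G'']_{\simeq}\notin\propertyG O {\class C}(G_t)$ for every $G_t\in\targetsetG$; as $\mc D$ is towards $\targetsetG$, Definition~\ref{definition: towards} forces $f''=neg$, whence $G^\ast\in\mc G_i$, satisfying that earlier feedback, must have $[G'']_{\simeq}\notin\propertyG O {\class C}(G^\ast)$ by Condition~\ref{condition: satisfy all previous FBs} of Definition~\ref{definition: UFBD}, contradicting that $G''$ can be embedded into $G^\ast$. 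Hence no such $j$ exists, and $g:=[G'']_{\simeq}$ (or $g:=[G']_{\simeq}$ in the first case) is the desired small property.

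It then remains to put $\mc F'_i:=\{(g,f')\}$ with $f'=pos$, $neg$ or $neutral$ according to whether $g$ belongs to $\propertyG O {\class C}(G_t)$ for all, for no, or for some-but-not-all $G_t\in\targetsetG$, and to verify routinely that $\mc D,\mc F'_i$ is again a dialogue in $\protocolTowards{ \protocolSizeRestrict {\protocolname{Basic}(\propertyG O {\class C})} n }  \targetsetG$: the conditions on $\mc F_i$ in Definition~\ref{definition: basic UFBD} hold because $g$ appears in $\mc G_i$, has not been pointed out before $i$, and $\mc F'_i$ is a nonempty set; the requirement that $\mc G_i$ not be forced to be terminal holds because $g$ is an un-pointed-out property appearing in $\mc G_i$; the $n$-bounded size constraint holds because $\cardinal g\leq n$; and $\mc D,\mc F'_i$ is towards $\targetsetG$ by the very choice of $f'$. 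The main obstacle is the contradiction of the previous paragraph, which is precisely where the size bound $n\gneq\cardinal G$ on the targets is used; everything else, including the edge case $i=1$ (where nothing has been pointed out before index $1$, so any order-$n$ subgraph of $G'$ qualifies outright), is bookkeeping.
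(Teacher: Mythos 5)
Your proposal is correct and follows essentially the same route as the paper: the heart of both arguments is the edgeless subgraph $G''$ of order strictly between $\max\{\cardinal{G}\mid G\in\targetsetG\}$ and (at most) $n$, shown not to have been pointed out before $i$ by combining the towards-condition (which forces any earlier feedback on it to be negative) with Condition \ref{condition: satisfy all previous FBs} of Definition \ref{definition: UFBD}. The only difference is packaging: the paper case-splits on the given $\mc F_i$, reusing it verbatim when it is empty or contains a positive or neutral FB and otherwise setting $\mc F'_i=\mc F_i\cup\{([G'']_\simeq,neg)\}$, whereas you discard the content of $\mc F_i$ (using its existence only to extract an un-pointed-out property via Condition \ref{condition: last X}) and build a fresh singleton; both work.
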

\begin{proof}
If $\mc F_i$ is empty, then $\mc D, \mc F_i$ is a dialogue in $\protocolSizeRestrict {\protocolTowards {\protocolname{Basic}(\propertyG O {\class C})} \targetsetG } n$ thus in $\protocolTowards{ \protocolSizeRestrict {\protocolname{Basic}(\propertyG O {\class C})} n }  \targetsetG$.
Hence, suppose that $\mc F_i$ is non-empty.
If $\mc F_i$ contains a positive FB $(g, pos)$, we have $|g|\leq|G| \lneq n$ for any $G\in\targetsetG$ and thus  $\mc D, \mc F_i\in  \protocolTowards{ \protocolSizeRestrict {\protocolname{Basic}(\propertyG O {\class C})} n }  \targetsetG$ holds.
If $\mc F_i$ contains a neutral FB $(g, neutral)$, then $\cardinal g \leq \cardinal G \lneq n$ holds for some $G\in \targetsetG$ and thus we have $\mc D, \mc F_i\in  \protocolTowards{ \protocolSizeRestrict {\protocolname{Basic}(\propertyG O {\class C})} n }  \targetsetG$ again.
Therefore, we suppose $f=neg$ and $\cardinal g \gneq n$ for any $(g, f)\in\mc F_i$.
Let $([G']_\simeq, neg)$ be an FB in $\mc F_i$ with $\cardinal{[G']_\simeq}\gneq n$
 and $G''$ be a subgraph of $G'$ with $\max \{\cardinal{G}\mid G\in  \targetsetG\} \lneq \cardinal{G''}\leq n$.
 Such $G''$ exists: e.g. a subgraph $\hypograph{V}{E}$ such that $V$ is any subset of the set of vertices of $G'$ with $\cardinal V = n$ and that $E_\syntax{R}$ is empty for any $\syntax{R}\in\mc R$.
Define $\mc F'_i$ to be $\mc F_i\cup\{([G'']_\simeq,neg)\}$.
Then,
 $\mc D, \mc F'_i$ is a dialogue in $\protocolTowards{ \protocolSizeRestrict {\protocolname{Basic}(\propertyG O {\class C})} n }  \targetsetG$.
In particular, $[G'']_\simeq$ has not been pointed out before $i$ because otherwise $\mc D$ violates Condition \ref{condition: satisfy all previous FBs} of Definition \ref{definition: UFBD}.\checked\checkedC
\end{proof}

\begin{lemma}\label{lemma: next G w cond 2e}\proved\checked\checkedC
Let $\targetsetG$ be a non-empty subset of $\hypgraphset O {\class C}$, $n$ be a non-negative integer and $\mc D=\mc G_1, \mc F_1, \dots, \mc G_i, \mc F_i$ be a finite user-feedback dialogue in the protocol $\protocolTowards{ \protocolSizeRestrict {\protocolname{Basic}(\propertyG O {\class C})} n }  \targetsetG$.
If there is a set $\mc G_{i+1}\subseteq\hypgraphset O {\class C}$ such that  $\mc D, \mc G_{i+1}$ is a dialogue in $\protocolTowards {\protocolname{Basic}(\propertyG O {\class C})}  \targetsetG$, then 
	 $\mc D, \mc G_{i+1}$ is also a dialogue in $\protocolTowards{ \protocolSizeRestrict {\protocolname{Basic}(\propertyG O {\class C})} n }  \targetsetG$.
\end{lemma}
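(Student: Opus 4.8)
The plan is to observe that the $n$-bounded size constraint of Definition \ref{definition: n-bounded size constraint} is a condition on the feedback sets of a dialogue alone: a sequence $\mc G_1, \mc F_1, \dots$ satisfies it exactly when, for every index $j$ with $\mc F_j$ non-empty, some $(g, f)\in \mc F_j$ has $|g|\le n$. Extending a dialogue by one further candidate set $\mc G_{i+1}$ does not alter the list $\mc F_1,\dots,\mc F_i$ of feedback sets already occurring in it, so this constraint can be neither created nor destroyed by such an extension. Hence the lemma should follow by pure bookkeeping.

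Concretely, I would first recall the identity $\protocolTowards{ \protocolSizeRestrict {\class D} n }  \targetsetG=  \protocolSizeRestrict{ \protocolTowards{\class D} \targetsetG }  n$ noted just above Theorem \ref{DecidabilityOfGraphUFBD}, instantiated with $\class D=\protocolProp {\protocolname{Basic}} {\propertyG O {\class C}}$. By it, the hypothesis $\mc D=\mc G_1, \mc F_1,\dots, \mc G_i, \mc F_i\in\protocolTowards{ \protocolSizeRestrict {\protocolname{Basic}(\propertyG O {\class C})} n }  \targetsetG$ says in particular that $\mc D$ meets the $n$-bounded size constraint, i.e. for each $j\le i$ with $\mc F_j\neq\emptyset$ there is $(g,f)\in\mc F_j$ with $|g|\le n$. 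Now the sequence $\mc D, \mc G_{i+1}$ has exactly the feedback sets $\mc F_1,\dots,\mc F_i$, so it again meets the $n$-bounded size constraint. Combining this with the assumption that $\mc D, \mc G_{i+1}$ is a dialogue in $\protocolTowards {\protocolname{Basic}(\propertyG O {\class C})}  \targetsetG$ yields $\mc D, \mc G_{i+1}\in\protocolSizeRestrict{ \protocolTowards{\protocolname{Basic}(\propertyG O {\class C})} \targetsetG }  n$, which by the same identity equals $\protocolTowards{ \protocolSizeRestrict {\protocolname{Basic}(\propertyG O {\class C})} n }  \targetsetG$, as required.

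There is essentially no obstacle: the argument is just identifying which component of a dialogue the size constraint talks about. The only minor points to spell out are that ``dialogue in a protocol'' membership is closed under taking prefixes and extensions witnessed as above (so we do not need to re-verify the clauses of Definitions \ref{definition: UFBD} and \ref{definition: basic UFBD} — they come for free from the hypothesis that $\mc D, \mc G_{i+1}$ is already a dialogue in $\protocolTowards {\protocolname{Basic}(\propertyG O {\class C})}  \targetsetG$), and the commutation of the ``towards $\targetsetG$'' and ``$n$-bounded'' restrictions, which is already recorded in the text immediately preceding Theorem \ref{DecidabilityOfGraphUFBD}.
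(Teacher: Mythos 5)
Your proposal is correct and follows essentially the same route as the paper's own proof: the paper likewise observes that $\mc D, \mc G_{i+1}$ trivially satisfies the $n$-bounded size constraint (since that constraint concerns only the feedback sets, which are unchanged) and then passes through $\protocolSizeRestrict{ \protocolTowards{\protocolname{Basic}(\propertyG O {\class C})} \targetsetG } n = \protocolTowards{ \protocolSizeRestrict {\protocolname{Basic}(\propertyG O {\class C})} n }  \targetsetG$. You have merely spelled out the bookkeeping that the paper compresses into ``it is easy to see.''
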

\begin{proof}
It is easy to see that this $\mc D, \mc G_{i+1}$ satisfies the $n$-bounded size constraint.
Hence, $\mc D, \mc G_{i+1}$ is a dialogue in   $\protocolSizeRestrict  { \protocolTowards {\protocolname{Basic}(\propertyG O {\class C})} \targetsetG } n $ and thus in $\protocolTowards{ \protocolSizeRestrict {\protocolname{Basic}(\propertyG O {\class C})} n }  \targetsetG$.\checked\checkedC
\end{proof}

\begin{theorem}\label{the last set with Condition 2e}\proved\checked\checkedC
Let $\targetsetG$ be a non-empty subset of $\hypgraphset O {\class C}$, $n$ be a non-negative integer.
Then, any finite dialogue in $\protocolTowards {\protocolSizeRestrict {\protocolname{Basic}(\propertyG O {\class C})} n} \targetsetG$ converges to $\targetsetG$ in $\protocolTowards {\protocolSizeRestrict {\protocolname{Basic}(\propertyG O {\class C})} n} \targetsetG$.
\end{theorem}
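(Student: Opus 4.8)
The plan is to transcribe the proof of Theorem~\ref{theorem: convergence} almost verbatim, replacing the two ``extension'' lemmas for $\basicUFBD$ by their size-constrained analogues: Lemma~\ref{lemma: exist next X} by Lemma~\ref{lemma: next G w cond 2e}, and Lemma~\ref{lemma: next FB} by Lemma~\ref{lemma: next F w cond 2e}. Fix a non-empty $\targetsetG\subseteq\hypgraphset O {\class C}$, a non-negative integer $n$, and a finite dialogue $\mc D=\mc G_1,\mc F_1,\dots$ in $\protocolTowards{\protocolSizeRestrict{\protocolname{Basic}(\propertyG O {\class C})}n}\targetsetG$; it suffices to treat the case in which $\mc D$ admits no extension in this protocol, and to verify the two conditions of Definition~\ref{definition: convergence of one dialogue}.

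First I would show that $\mc D$ must end with a set $\mc G_l$ of hypothesis graphs rather than with a feedback set. If it ended with some $\mc F_l$, then Lemma~\ref{lemma: exist next X} would furnish a $\mc G_{l+1}$ with $\mc D,\mc G_{l+1}$ a dialogue in $\protocolTowards{\protocolname{Basic}(\propertyG O {\class C})}\targetsetG$, and Lemma~\ref{lemma: next G w cond 2e} would then upgrade this to a dialogue in $\protocolTowards{\protocolSizeRestrict{\protocolname{Basic}(\propertyG O {\class C})}n}\targetsetG$, contradicting maximality. Non-emptiness of $\mc G_l$ follows at once from Condition~\ref{condition: non-empty X} of Definition~\ref{definition: UFBD}, since every $G'\in\targetsetG$ is a candidate satisfying all previous feedbacks. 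This settles Condition~\ref{condition: last set in convergence}.

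For the remaining condition, take $G\in\mc G_l$ and $g\in\quotient{\formgraphset}{\simeq}$. For the positive clause, assume $g\in\propertyG O {\class C}(G')$ for every $G'\in\targetsetG$ and suppose, for contradiction, that $g\notin\propertyG O {\class C}(G)$. As in Theorem~\ref{theorem: convergence}, $g$ cannot have been pointed out before $l$: otherwise, by Definition~\ref{definition: towards} it carries the feedback $pos$, which forces $g\in\propertyG O {\class C}(G)$ through Condition~\ref{condition: satisfy all previous FBs} of Definition~\ref{definition: UFBD}. Fixing a target $G'_0\in\targetsetG$, the graph $G'_0$ is a candidate satisfying all previous feedbacks and carrying the un-pointed-out property $g$ (as $g$ embeds into $G'_0$), so Condition~\ref{condition: un-poited-out property} of Definition~\ref{definition: basic UFBD} forces $\mc G_l$ to contain some candidate with an un-pointed-out property. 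Lemma~\ref{lemma: next FB} then produces an $\mc F_l$ with $\mc D,\mc F_l$ a dialogue in $\protocolTowards{\protocolname{Basic}(\propertyG O {\class C})}\targetsetG$, which Lemma~\ref{lemma: next F w cond 2e} upgrades to $\protocolTowards{\protocolSizeRestrict{\protocolname{Basic}(\propertyG O {\class C})}n}\targetsetG$, contradicting maximality. The negative clause is symmetric, using the $neg$ clause of Definition~\ref{definition: towards}, Lemma~\ref{lemma: next FB}, and Lemma~\ref{lemma: next F w cond 2e} again.

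The step needing care is this last upgrade, since Lemma~\ref{lemma: next F w cond 2e} is stated only when $n$ strictly exceeds $|G'|$ for every $G'\in\targetsetG$; one therefore first reduces to that situation and, when $g$ is a property shared by all targets, uses that $|g|\le|G'_0|<n$ automatically, so the feedback one wishes to add already respects the $n$-bound. Keeping straight the bookkeeping between ``the feedback I want to add concerns $g$'' and ``some admissible feedback meeting the $n$-bound is available'' is the main obstacle; the rest is a routine copy of the proof of Theorem~\ref{theorem: convergence}. Finally, for a singleton target $\targetsetG=\{G_t\}$, Proposition~\ref{proposition: convergence single target} together with Proposition~\ref{prop: prop and isom} lets one read off that the last set then consists precisely of hypothesis graphs isomorphic to $G_t$.
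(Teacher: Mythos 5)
Your overall strategy is, at the level of the key lemmas, the same as the paper's: everything rests on Lemmata \ref{lemma: next F w cond 2e} and \ref{lemma: next G w cond 2e}, which upgrade an extension of $\mc D$ in $\protocolTowards {\protocolname{Basic}(\propertyG O {\class C})} \targetsetG$ to one in the $n$-bounded protocol. The only organisational difference is that you unroll the proof of Theorem \ref{theorem: convergence} inline, whereas the paper argues more economically: if $\mc D$ admits no extension in $\protocolTowards {\protocolSizeRestrict {\protocolname{Basic}(\propertyG O {\class C})} n} \targetsetG$, then by those two lemmata it admits no extension in $\protocolTowards {\protocolname{Basic}(\propertyG O {\class C})} \targetsetG$ either, and Theorem \ref{theorem: convergence} then applies verbatim. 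Where your argument overlaps with that reduction it is fine, so the unrolling costs you only repetition.

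The genuine problem is the one you half-acknowledge at the end. Lemma \ref{lemma: next F w cond 2e} carries the hypothesis $n \gneq \cardinal{G}$ for every $G \in \targetsetG$, and Theorem \ref{the last set with Condition 2e} supplies no such bound. Your proposed repair --- ``one first reduces to that situation'' and then observes $\cardinal{g}\leq\cardinal{G'_0}\lneq n$ --- is not a reduction at all: $\cardinal{G'_0}\lneq n$ is precisely the missing hypothesis, and there is no way to shrink the targets or enlarge $n$ without changing the statement being proved. Nor is the gap cosmetic. In the setting of Example \ref{example: target set in simple UFBD}, take $n=0$, the singleton target $\targetsetG=\{(\{\syntax{p_1(0)}\},\fami{E_{\syntax R}}{\syntax R}\relsetR)\}$, and the dialogue $\mc G_1, \mc F_1, \mc G_2$ with $\mc G_1=\mc G_2=\{(\{\syntax{p_2(0)}\},\fami{E_{\syntax R}}{\syntax R}\relsetR)\}$ and $\mc F_1$ the single positive FB on the empty formula graph: the $0$-bounded constraint forces any later non-empty $\mc F_2$ to point out the unique order-$0$ property again, violating Condition \ref{condition: no previous FB} of Definition \ref{definition: basic UFBD}, so the dialogue is maximal, yet its last set misses the shared property $[(\{\syntax{p_1(0)}\},\fami{E_{\syntax R}}{\syntax R}\relsetR)]_\simeq$. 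So the convergence claim genuinely requires $n\gneq\cardinal{G}$ for all $G\in\targetsetG$, exactly as in Lemma \ref{lemma: next F w cond 2e} and in the simple-protocol analogue, Theorem \ref{theorem: pointwise convergence of simple n-bounded on graph}. To be fair, the paper's own proof invokes Lemma \ref{lemma: next F w cond 2e} with the same silent omission; but your write-up explicitly undertakes to discharge the side condition, and the argument you give for it does not work.
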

\begin{proof}
Let $\targetsetG$ be a non-empty subset of $\hypgraphset O {\class C}$ and $\mc D$ be a finite dialogue in $\protocolTowards{ \protocolSizeRestrict {\protocolname{Basic}(\propertyG O {\class C})} n }  \targetsetG$ such that there is no set $\mc A$ satisfying that the sequence $\mc D, \mc A$ is a dialogue in $\protocolTowards{ \protocolSizeRestrict {\protocolname{Basic}(\propertyG O {\class C})} n }  \targetsetG$.
Then, there is not a set $\mc A'$ such that $\mc D, \mc A'$ is a dialogue in $\protocolTowards {\protocolname{Basic}(\propertyG O {\class C})}  \targetsetG$ because otherwise
there is a set $\mc A''$ such that the sequence $\mc D, \mc A''$ is a dialogue in $\protocolTowards{ \protocolSizeRestrict {\protocolname{Basic}(\propertyG O {\class C})} n }  \targetsetG$ by Lemmata \ref{lemma: next F w cond 2e} and \ref{lemma: next G w cond 2e}.
Therefore, we can apply Theorem \ref{theorem: convergence} to see that Conditions \ref{condition: last set in convergence} and \ref{condition: satisfy shared properties} in Definition \ref{definition: convergence of one dialogue} are satisfied.\checked\checkedC
\end{proof}

\subsection{Simple UFBD Protocols}\label{section: simple UFBD}
In the previous section we have defined the concept of basic UFBD.
In this section, we first propose that of \emph{simple user-feedback dialogue} (\emph{simple UFBD}), which is a special case of a basic UFBD protocol as we shall see.
We then see that the simple UFBD protocol $\protocolProp \simpleUFBD \property$, or the set of simple UFBDs, for some property assignment function $\property$ does not have the convergence property while it does for any property assignment function the convergence property towards a single target.
We also see that $\protocolProp \simpleUFBD {\propertyG O {\class C}}$ and $\protocolProp \simpleUFBD {\propertyH O {\class C}}$ does not necessarily have the halting property while they do under the same conditions as Theorems \ref{theorem: halting property of syntax-based dialogue protocol} and  \ref{the last set with Condition 2e}.

\subsubsection{Generic Results on Simple UFBD Protocols}
We first define the concept of \emph{simple user-feedback dialogue} (\emph{simple UFBD}) and see generic results:
\begin{inparaenum}[(1)]
\item
a simple UFBD is a special case of a basic UFBD;
\item
the simple UFBD protocol, i.e. the set of simple UFBDs, has the convergence property towards single target.
\end{inparaenum}

The concept of simple user-feedback dialogue defined as below:
\begin{definition}\label{definition: simple UFBD}
Let $\mc X$ be a set of candidates, $P$ be a set of properties and $Prop:\mc X\to \wp(P)$ be a property assignment function.
A \emph{simple user-feedback dialogue} (\emph{simple UFBD}) $\mc D=\mc X_1, \mc F_1, \mc X_2, \mc F_2,\dots, \mc X_i, \mc F_i, \ldots$ on $\property$ is a finite or an infinite dialogue in $\protocolProp \UFBD \property$ that satisfies the conditions below for each index $i$.
\begin{enumerate}
\item Conditions on $\mc X_i$. 
\begin{enumerate}
\item \label{condition: distinct implies not equivalent simple UFBD}
For any $X,X'\in \mc X_i$ with $X\neq X'$, \camera{ \emph{$X$ and $X'$ have distinct sets of properties}, i.e} $Prop(X)\neq Prop(X')$.
\item \label{condition: multiple elements X simple UFBD}
 $\mc X_i$ contains multiple candidates if there are multiple candidates in $\mc X$ that \camera{have distinct sets of properties and that satisfy all  previous FBs}.
\item \label{condition: last X simple UFBD}
If $\mc X_i$ is a singleton or the empty set, it is the last set of $\mc D$.
\end{enumerate}
\item Conditions on $\mc F_i$.
\begin{enumerate}
\item \label{condition: pos or neg simple UFBD}
For any $(p, f)\in \mc F_i$,  $f$ is either  $pos$ or $neg$.
\item \label{condition: FB prop should appear simple UFBD}
For any $(p, f)\in \mc F_i$, $p$ appears in $\mc X_i$.
\item \label{condition: FB prop should be new simple UFBD}
For any $(p, f)\in \mc F_i$, $p$ has not been pointed out before $i$.

\item \label{condition: no-unpointed property in simple UFBD}
$\mc F_i$ is non-empty if there is a property $p\in P$ that appears in $\mc X_i$ and that has not been pointed out before $i$.
\item \label{condition: last set F in simple UFBD}
If $\mc F_i$ is empty, it is the last set of $\mc D$.
\end{enumerate}

\end{enumerate}
A \emph{simple UFBD protocol on $\property$} is defined to be a set of simple UFBDs on $\property$.
\emph{The simple UFBD protocol on $\property$} (notation: $\protocolProp \simpleUFBD \property$) is defined to be the set of simple UFBDs on $\property$.
\end{definition}

Let us first see that any simple UFBD is a basic UFBD, that is, for any property assignment function $\property$, $\protocolProp \simpleUFBD \property\subseteq \protocolProp \basicUFBD \property$ holds.
\begin{lemma}\label{lemma: distinct has un-pointed-out property}\proved\checked\checkedC
Let $\property: \mc X\to \powerset P$ be a property assignment function and $\mc D=\mc X_1,\mc F_1, \ldots, \mc X_i$ $(i\geq 1)$ be a finite dialogue in $\protocolProp \simpleUFBD \property$.
If  $|\mc X_i|$ is greater than or equal to $2$, then there are $X\in\mc X_i$ and $p\in \property(X)$ such that $p$ has not been pointed out before $i$.
\end{lemma}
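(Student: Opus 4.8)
The plan is to extract a distinguishing property from two distinct candidates in $\mc X_i$ and then argue, by contradiction, that this property cannot have been pointed out before $i$; the key ingredients are Condition \ref{condition: distinct implies not equivalent simple UFBD} of Definition \ref{definition: simple UFBD} (distinct candidates in $\mc X_i$ have distinct property sets), Condition \ref{condition: pos or neg simple UFBD} of Definition \ref{definition: simple UFBD} (every feedback in a simple UFBD is either $pos$ or $neg$), and Condition \ref{condition: satisfy all previous FBs} of Definition \ref{definition: UFBD} (every candidate in $\mc X_i$ satisfies all previous feedbacks).

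First I would use the hypothesis $|\mc X_i|\geq 2$ to choose two distinct candidates $X, X'\in\mc X_i$. By Condition \ref{condition: distinct implies not equivalent simple UFBD} of Definition \ref{definition: simple UFBD}, $\property(X)\neq\property(X')$, so there is a property $p$ in the symmetric difference of $\property(X)$ and $\property(X')$; after possibly swapping the names of $X$ and $X'$, I may assume $p\in\property(X)\setminus\property(X')$.

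Next I would show that $p$ has not been pointed out before $i$. Suppose, for contradiction, that $(p,f)\in\mc F_j$ for some $j\lneq i$. By Condition \ref{condition: pos or neg simple UFBD} of Definition \ref{definition: simple UFBD}, $f$ is $pos$ or $neg$. If $f=pos$, then Condition \ref{condition: satisfy all previous FBs} of Definition \ref{definition: UFBD} forces $p\in\property(X')$, contradicting the choice of $p$; if $f=neg$, the same condition forces $p\notin\property(X)$, again a contradiction. Hence $p$ has not been pointed out before $i$, and since $p\in\property(X)$ with $X\in\mc X_i$, this is exactly the desired conclusion.

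I do not anticipate any genuine obstacle: the argument is a direct unwinding of the definitions. The only point that requires care is that the conclusion genuinely uses the restriction to $pos$/$neg$ feedbacks peculiar to \emph{simple} UFBDs — if a $neutral$ feedback on $p$ were allowed, Condition \ref{condition: satisfy all previous FBs} of Definition \ref{definition: UFBD} would impose no constraint and the conclusion could fail — so invoking Condition \ref{condition: pos or neg simple UFBD} at the right moment is the crux.
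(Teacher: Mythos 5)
Your proof is correct and takes essentially the same route as the paper's: pick two distinct candidates in $\mc X_i$, extract a property $p$ in the symmetric difference of their property sets, and derive a contradiction with Condition \ref{condition: satisfy all previous FBs} of Definition \ref{definition: UFBD} if $p$ had been pointed out earlier. If anything, you are slightly more careful than the paper in explicitly invoking Condition \ref{condition: pos or neg simple UFBD} of Definition \ref{definition: simple UFBD} to rule out a \emph{neutral} feedback on $p$ (a case the paper's case split silently omits but which is needed, since ``pointed out'' covers neutral feedbacks as well).
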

\begin{proof}
Take arbitrary distinct two candidates $X_1$ and $X_2$ in $\mc X_i$.
Then, we have $Prop(X_1)\neq Prop(X_2)$ by Condition \ref{condition: distinct implies not equivalent simple UFBD} of Definition \ref{definition: simple UFBD}. 
Therefore, there is a property $p\in P$ such that $p\in Prop(X_1)\setminus Prop(X_2)$ or $p\in {Prop(X_2)}\setminus{Prop(X_1)}$.
We show that this $p$ has not been pointed out before $i$.
We may suppose $p\in Prop(X_1)\setminus Prop(X_2)$ without loss of generality.
We also suppose, for the sake of contradiction, that $p$ has been pointed out before $i$.
Then, if  $(p, pos)\in \mc F_j$ for some $j\lneq i$, then this contradicts $p\notin Prop(X_2)$; if $(p, neg)\in \mc F_j$ for some $j\lneq i$, this contradicts $p\in Prop(X_1)$.
Therefore, $p$ has not been pointed out before $i$.\proved\checked\checkedC
\end{proof}

\begin{proposition}\label{proposition: simple UFBD is UFBD}
It holds that $\protocolProp \simpleUFBD \property\subseteq \protocolProp \basicUFBD \property$ for any property assignment function $\property$.\proved\checked\checkedC
\end{proposition}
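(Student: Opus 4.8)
The plan is to take an arbitrary simple UFBD $\mc D = \mc X_1, \mc F_1, \mc X_2, \mc F_2, \ldots$ on $\property$ and to check, for each index $i$, that it satisfies every clause of Definition~\ref{definition: basic UFBD}; since a simple UFBD is by definition a UFBD on $\property$, nothing else is needed. The four conditions on $\mc F_i$ in Definition~\ref{definition: basic UFBD}, namely Conditions~\ref{condition2a: formula-based}, \ref{condition: no previous FB}, \ref{condition: non-empty FBs} and~\ref{condition: empty FB is the last}, are word-for-word Conditions~\ref{condition: FB prop should appear simple UFBD}, \ref{condition: FB prop should be new simple UFBD}, \ref{condition: no-unpointed property in simple UFBD} and~\ref{condition: last set F in simple UFBD} of Definition~\ref{definition: simple UFBD}, so they require no argument; Condition~\ref{condition: pos or neg simple UFBD} of the latter definition merely adds the harmless restriction that every feedback be positive or negative. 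Hence all the content lies in Conditions~\ref{condition: un-poited-out property} and~\ref{condition: last X} of Definition~\ref{definition: basic UFBD}.

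For these I would split on the cardinality of $\mc X_i$. If $\cardinal{\mc X_i}\geq 2$, then, noting that the prefix $\mc X_1, \mc F_1, \ldots, \mc X_i$ of $\mc D$ is again a simple UFBD, Lemma~\ref{lemma: distinct has un-pointed-out property} yields some $X\in\mc X_i$ with a property $p\in\property(X)$ not pointed out before $i$; this is Condition~\ref{condition: un-poited-out property} outright, and it falsifies the antecedent of Condition~\ref{condition: last X}, so that clause holds vacuously. If $\cardinal{\mc X_i}\leq 1$, then Condition~\ref{condition: last X simple UFBD} of Definition~\ref{definition: simple UFBD} tells us $\mc X_i$ is the last set of $\mc D$, which is precisely Condition~\ref{condition: last X}; so it only remains to verify Condition~\ref{condition: un-poited-out property}. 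I would do this by assuming its antecedent: there is a candidate $X\in\mc X$ that satisfies all previous FBs and has some $p\in\property(X)$ not pointed out before $i$. The subcase $\mc X_i=\emptyset$ cannot occur, since Condition~\ref{condition: non-empty X} of Definition~\ref{definition: UFBD} would then entail that no candidate satisfies all previous FBs, contradicting the existence of $X$. So $\mc X_i=\{X_0\}$ for a single candidate $X_0$, which satisfies all previous FBs by Condition~\ref{condition: satisfy all previous FBs} of Definition~\ref{definition: UFBD}; if $\property(X_0)\neq\property(X)$ then $X_0$ and $X$ would be two candidates with distinct property sets both satisfying all previous FBs, so Condition~\ref{condition: multiple elements X simple UFBD} of Definition~\ref{definition: simple UFBD} would force $\cardinal{\mc X_i}\geq 2$, a contradiction. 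Therefore $\property(X_0)=\property(X)$, so $p\in\property(X_0)$ is a property of $X_0\in\mc X_i$ not pointed out before $i$, which is Condition~\ref{condition: un-poited-out property}.

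The only genuinely non-routine step is this $\cardinal{\mc X_i}=1$ branch of Condition~\ref{condition: un-poited-out property}, where the unpointed property must be transferred from a candidate merely living in $\mc X$ to the actual element $X_0$ of $\mc X_i$, by combining Condition~\ref{condition: multiple elements X simple UFBD} of Definition~\ref{definition: simple UFBD} with the generic UFBD conditions; everything else is bookkeeping. A minor point worth recording en route is that every prefix of a simple UFBD is itself a simple UFBD, so that Lemma~\ref{lemma: distinct has un-pointed-out property} may legitimately be applied to $\mc X_1, \mc F_1, \ldots, \mc X_i$.
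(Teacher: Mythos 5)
Your proposal is correct and follows essentially the same route as the paper's proof: both reduce the claim to Conditions~\ref{condition: un-poited-out property} and~\ref{condition: last X} of Definition~\ref{definition: basic UFBD}, split on the cardinality of $\mc X_i$, invoke Lemma~\ref{lemma: distinct has un-pointed-out property} in the case $\cardinal{\mc X_i}\geq 2$, and use Condition~\ref{condition: multiple elements X simple UFBD} of Definition~\ref{definition: simple UFBD} to transfer the unpointed-out property to the sole element of a singleton $\mc X_i$. The only differences are presentational (the paper treats $\mc X_i=\emptyset$ as a separate vacuous case rather than ruling it out under the assumed antecedent, and derives Condition~\ref{condition: last X} via the contrapositive of the lemma), so nothing further is needed.
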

\begin{proof}
We show that any dialogue in $\protocolProp \simpleUFBD \property$ satisfies Conditions \ref{condition: un-poited-out property} and \ref{condition: last X} of Definition \ref{definition: basic UFBD}; the other conditions of Definition \ref{definition: basic UFBD} are included in Definition \ref{definition: simple UFBD}.
Let $\mc D=\mc X_1, \mc F_1,\dots, \mc X_i, \mc F_i, \ldots$ be a finite or an infinite dialogue in $\protocolProp \simpleUFBD \property$ and $i$ be an arbitrary index of $\mc D$.

We first see that $\mc X_i$ satisfies Condition \ref{condition: un-poited-out property}.
If $\mc X_i$ is empty, then, there is no candidate in $\mc X$ that satisfies all previous FBs by Condition \ref{condition: non-empty X} of Definition \ref{definition: UFBD}.
 Therefore, $\mc X_i$ satisfies Conditions \ref{condition: un-poited-out property}.
If $\mc X_i$ is a singleton, it holds that $\property(X') = \property(X)$ for any two candidates $X, X'\in \mc X$ satisfying all previous FBs, due to Condition \ref{condition: multiple elements X simple UFBD} of Definition \ref{definition: simple UFBD}.
Therefore, if the only candidate in $\mc X_i$ has a property $p$ that has not been pointed out before $i$, then $\mc X_i$ satisfies Condition \ref{condition: un-poited-out property} of Definition \ref{definition: basic UFBD}.
Otherwise, there is no $X\in\mc X$ such that some $p\in \property(X)$ that has not been pointed out before $i$. Hence, $\mc X_i$ satisfies Condition \ref{condition: un-poited-out property} of Definition \ref{definition: basic UFBD}.
 In the case where $\cardinal{\mc X_i}\geq 2$, there are $X\in \mc X_i$ and $p\in\property(X)$ such that $p$ has not been pointed out before $i$ by Lemma \ref{lemma: distinct has un-pointed-out property} and thus $\mc X_i$ satisfies Condition \ref{condition: un-poited-out property}.

We next show that $\mc X_i$ satisfies Condition \ref{condition: last X} of Definition \ref{definition: basic UFBD}.
Suppose that $p$ has been pointed out before $i$ for any $X\in\mc X_i$ and $p\in \property(X)$.
Then, by the contraposition of Lemma \ref{lemma: distinct has un-pointed-out property}, we have $\cardinal{X_i}\leq 1$.
Therefore, $\mc X_i$ is the last set by Condition \ref{condition: last X simple UFBD} of Definition \ref{definition: simple UFBD}.
\proved\checked\checkedC
\end{proof}

The simple UFBD protocols $\protocolProp \simpleUFBD {\propertyH O {\class C}}$ does not necessarily have the convergence property. In addition,  $\protocolTowards { \protocolSizeRestrict  { \protocolProp \simpleUFBD {\propertyG O {\class C}}} n} \targetsetG$ does not necessarily converges $\targetsetG$ as illustrated by the following example:
\begin{example}\label{example: target set in simple UFBD}
Consider a single-sorted language constructed from  an alphabet $\Sigma=(\mathcal{S}, \mathcal{C}, \mathcal{P}, \mathcal{R},\mathcal{V}_1,\mathcal{V}_2)$ 
such that $\constsetC=\{\syntax{0}\}$, $\propsetP=\{\syntax{p_1}, \syntax{p_2}, \syntax{p_3}\}$ and
$\relsetR=\emptyset$ and
the class $\class C=\{\structure M I\}$ of the structure $\structure M I$ such that $M=\{0\}$,  $\syntax{0}^{\mc M}=0$, $\syntax{p_1}^{\mc M}=\syntax{p_2}^{\mc M}=\syntax{p_3}^{\mc M}=\{0\}$.
Suppose that the set of observations $O$ is empty and that the target set $\mc G_t$ is $\{G_1, G_2\}$, where $G_1=(\{\syntax{p_1(0)}, \syntax{p_3(0)}\},\fami{E_{\syntax R}} {\syntax{R}} \relsetR)$ and $G_2=(\{\syntax{p_2(0)}, \syntax{p_3(0)}\},\fami{E_{\syntax R}} {\syntax{R}} \relsetR)$.
Note that $\fami {E_{\syntax R}} {\syntax{R}} \relsetR$ is an $\emptyset$-indexed family and thus uniquely determined.
Let $\mc G_1$ be $\{G_3, G_4\}$, where $G_3=(\{\syntax{p_1(0)}\},\fami{E_{\syntax R}} {\syntax{R}} \relsetR)$ and $G_4=(\{\syntax{p_2(0)}\}, \fami{E_{\syntax R}} {\syntax{R}} \relsetR)$, and $\mc F_1$ be $\{([(\emptyset,\fami{E_{\syntax R}} {\syntax{R}} \relsetR)]_{\simeq}, pos)\}$.
Then, it is easy to see that: (respectively, for any non-negative integer $n$)
\begin{enumerate}
\item
$\mc G_1, \mc F_1, \mc G_1$ is a dialogue in the protocol $\protocolTowards  { \protocolProp \simpleUFBD {\propertyG O {\class C}}} {\mc G_t}$ (respectively, $\protocolTowards { \protocolSizeRestrict  { \protocolProp \simpleUFBD {\propertyG O {\class C}}} n} {\mc G_t}$);

\item
there is no set $\mc F_2$ of FBs such that $\mc G_1, \mc F_1, \mc G_1, \mc F_2$ is a dialogue in $\protocolTowards  { \protocolProp \simpleUFBD {\propertyG O {\class C}}} {\mc G_t}$  (respectively, $\protocolTowards { \protocolSizeRestrict  { \protocolProp \simpleUFBD {\propertyG O {\class C}}} n} {\mc G_t}$);

\item
both of the following hold:
\begin{enumerate}
\item
$[( \{\syntax{p_3(0)} \}, \fami{E_{\syntax R}} {\syntax{R}}  \relsetR )]_{\simeq} \in \propertyG  O {\class C}(G)$ for any $G\in \mc G_t$ and 
\item
$[( \{\syntax{p_3(0)} \}, \fami{E_{\syntax R}} {\syntax{R}}  \relsetR )]_{\simeq} \notin \propertyG  O {\class C}(G')$ for any $G'\in\mc G_1\neq \emptyset$.
\end{enumerate}
\end{enumerate}
Therefore, $\protocolProp \simpleUFBD {\propertyG O {\class C}}$ does not have the convergence property (respectively, $\protocolTowards { \protocolSizeRestrict  { \protocolProp \simpleUFBD {\propertyG O {\class C}}} n} {\mc G_t}$) does not converges to $\targetsetG$). \proved\checked \hfill $\dashv$
\end{example}
\noindent The simple UFBD protocol $\protocolTowards  { \protocolProp \simpleUFBD {\propertyH O {\class C}}} {\mc H_t}$ does not necessarily have the convergence property either:
\begin{example}\label{example: simple UFBD not convergence propH to set}
Consider a single-sorted language such that $\mc C=\{\syntax 0\}$, $\mc P=\{\syntax{p}_1, \syntax{p}_2\}$ and $\mc R=\emptyset$.
Let $\class C$ be the set $\{(\mc{M}_S, \mc I_S) \mid S\in\powerset {\{1,2\} }\}$ such that the carrier set of $\mc M_S$ is $\{0\}$, ${\syntax 0}^{\mc M_S}=\{0\}$ and ${\syntax p_i}^{\mc M_S}$  ($i=1,2$)  is $\{0\}$ if $i\in S$ and $\emptyset$ otherwise.
Suppose that $O$ be the empty set of sentences and $\targetsetH=\{H_1, H_2\}$ be the target set of hypotheses, where  $H_1=\{\syntax{p_1(0)\land \neg p_2(0)}\}$ and $H_2=\{\syntax{\neg p_1(0) \land p_2(0)}\}$.
Let $\mc H_1$ be $\{H_3, H_4\}$, where $H_3=\{\syntax{p_1(0)}\}$ and $H_4=\{\syntax{p_2(0)}\}$, and $\mc F_1=\{([\syntax{\top}]_{\leftrightarrow_{\class C}}, pos), ([\syntax{p_1(0)\lor p_2(0)}]_{\leftrightarrow_{\class C}}, pos)\}$.
Then, the following hold:
\begin{enumerate}
\item \label{condition: example not convergence propH make a dialogue}
$\mc H_1, \mc F_1, \mc H_1$ is a dialogue in $\protocolTowards  { \protocolProp \simpleUFBD {\propertyH O {\class C}}} {\mc H_t}$;

\item \label{condition: example not convergence propH termination}
there is no set $\mc F_2$ of FBs such that $\mc H_1, \mc F_1, \mc H_1, \mc F_2$ is a dialogue in the protocol $\protocolTowards  { \protocolProp \simpleUFBD {\propertyH O {\class C}}} {\mc H_t}$;

\item \label{condition: example not convergence propH not convergence}
both of the following hold:
\begin{enumerate}
\item 
$[\syntax{\neg(p_1(0)\land p_2(0))}]_{\leftrightarrow_{\class C}} \in \propertyH  O {\class C}(H)$ for any $H\in \targetsetH$ and 
\item
$[\syntax{\neg(p_1(0)\land p_2(0))}]_{\leftrightarrow_{\class C}}  \notin \propertyH  O {\class C}(H')$ for any $H'\in \mc H_1$.
\end{enumerate}
\end{enumerate}
This means that the protocol $\protocolProp \simpleUFBD {\propertyH O {\class C}}$ does not have the convergence property.
We see Condition \ref{condition: example not convergence propH termination} only; the other conditions are easy to prove.
Define $f:\sentenceSet \to \powerset {\{1,2\}} $ by $f(\syntax{\Phi}) = \{ S\in \powerset {\{1,2\}} \mid (\mc M_S, \mc I_S)\models \syntax{\Phi}  \}$ for any $\syntax{\Phi}\in\sentenceSet$.
Note that $f(\syntax{\Phi})=f(\syntax{\Psi})\Leftrightarrow [\syntax{\Phi}]_{\leftrightarrow_{\class C}} = [\syntax{\Psi}]_{\leftrightarrow_{\class C}}$ holds for any $\syntax{\Phi}, \syntax{\Psi}\in\sentenceSet$.
We also define $Sup: \powerset {\{1,2\}}
 \to \powerset {\powerset{\{1,2\}}}$ by $Sup(S)=\{S' \in \powerset {\powerset{\{1,2\}}}\mid S\subseteq S' \}$.
Then, it is easy to see that:
\[\syntax{\Phi}\models_{\class C}\syntax{\Psi} \Leftrightarrow f(\syntax{\Phi})\subseteq f(\syntax{\Psi)}\Leftrightarrow  f(\syntax{\Psi}) \in Sup(f(\syntax{\Phi})),\]
 for any $\syntax{\Phi}, \syntax{\Psi}\in\sentenceSet$.
Suppose, for the sake of contradiction, that there is $\mc F_2$ such that $\mc H_1, \mc F_1, \mc H_1, \mc F_2$ is a dialogue $\protocolTowards  { \protocolProp \simpleUFBD {\propertyH O {\class C}}} {\mc H_t}$.
Then, $\mc F_2$ is non-empty due to Condition \ref{condition: no-unpointed property in simple UFBD} in \ref{definition: simple UFBD} and  Lemma \ref{lemma: distinct has un-pointed-out property}.
If $\mc F_2$ contains a positive FB $([\syntax{\Phi}], pos)$, then we have:
\[f(\syntax{\Phi}) \in Sup(f(\syntax{p_1(0)\land \neg p_2(0)})) \cap  Sup(f(\syntax{\neg p_1(0)\land p_2(0)})),\]
 by the positivity of the FB and that 
$f(\syntax{\Phi}) \in  Sup(f(\syntax{p_1(0)}))\cup (Sup(f(\syntax{p_2})))$ by Condition \ref{condition: FB prop should appear simple UFBD} in Definition \ref{definition: simple UFBD}.
Then, it is easy to see that $f(\syntax{\Phi})\ = f(\top)$ or $ f(\syntax{\Phi})= f(\syntax{p_1(0)\lor p_2(0)})$. Therefore, $[\syntax{\Phi}]_{\leftrightarrow_{\class C}} = [\top]_{\leftrightarrow_{\class C}}$ or $[\syntax{\Phi}]_{\leftrightarrow_{\class C}} = [\syntax{p_1(0)\lor p_2(0)}]_{\leftrightarrow_{\class C}}$. This contradicts to Condition \ref{condition: FB prop should be new simple UFBD} in Definition \ref{definition: simple UFBD}.
We can see, in an analogous way, that there is no negative FB in $\mc F_2$.
Therefore, there is no $\mc F_2$ such that $\mc H_1, \mc F_1, \mc H_1, \mc F_2$ is a dialogue $\protocolTowards  { \protocolProp \simpleUFBD {\propertyH O {\class C}}} {\mc H_t}$.
\proved\checked \hfill $\dashv$
\end{example}

We next see that 
the protocol $\protocolProp \simpleUFBD \property$ has the convergence property towards a single target for any property assignment function $\property$.
\begin{lemma}\label{lemma: next F in simple UFBD}\proved\checked\checkedC
Let $\property:\mc X\to \powerset P$ be a property assignment function, $\targetX$ be a candidate in $\mc X$ and $\mc D=\mc X_1,\mc F_1, \ldots, \mc X_i$ $(i\geq 1)$ be a finite dialogue in $\protocolTowards {\protocolProp \simpleUFBD \property} {\{\targetX\}}$.
If $\mc X_i$ contains multiple distinct candidates, then there is a non-empty set $\mc F_i$ of FBs on $P$ such that the sequence $\mc D, \mc F_i$ is again a dialogue in $\protocolTowards {\protocolProp \simpleUFBD \property} {\{\targetX\}}$.
\end{lemma}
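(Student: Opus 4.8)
The plan is to exhibit a concrete singleton $\mc F_i$ and then verify, clause by clause, that appending it to $\mc D$ still yields a simple UFBD towards $\{\targetX\}$. First I would apply Lemma \ref{lemma: distinct has un-pointed-out property}: since $\mc X_i$ contains at least two distinct candidates, there exist $X\in\mc X_i$ and $p\in\property(X)$ such that $p$ has not been pointed out before $i$. This fresh property $p$ will carry the new feedback.

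Next I would fix the feedback value using the fact that the target is a singleton: set $f:=pos$ if $p\in\property(\targetX)$ and $f:=neg$ otherwise, and let $\mc F_i:=\{(p,f)\}$. Because $\property(\targetX)$ either contains $p$ or it does not, $f$ is well defined and lies in $\{pos,neg\}$, so the $neutral$ alternative of Definition \ref{definition: towards} never arises (this is exactly the phenomenon recorded in Proposition \ref{proposition: single target implies no neutral}). In particular, verifying that $\mc D,\mc F_i$ is \emph{towards} $\{\targetX\}$ reduces to checking Definition \ref{definition: towards} for the single new FB $(p,f)$, which holds by the very definition of $f$ (note that ``$p\in\property(Y)$ for all $Y\in\{\targetX\}$'' is just ``$p\in\property(\targetX)$'').

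Finally I would check that $\mc D,\mc F_i$ is a simple UFBD. Appending $\mc F_i$ changes none of the sets $\mc X_j$ with $j\le i$, so all conditions of Definition \ref{definition: UFBD} and all the $\mc X$-side conditions of Definition \ref{definition: simple UFBD} are inherited from $\mc D$. For the new $\mc F_i=\{(p,f)\}$: Condition \ref{condition: pos or neg simple UFBD} holds since $f\in\{pos,neg\}$; Condition \ref{condition: FB prop should appear simple UFBD} holds since $p\in\property(X)$ with $X\in\mc X_i$; Condition \ref{condition: FB prop should be new simple UFBD} holds by the choice of $p$ via Lemma \ref{lemma: distinct has un-pointed-out property}; and Conditions \ref{condition: no-unpointed property in simple UFBD} and \ref{condition: last set F in simple UFBD} hold vacuously because $\mc F_i$ is non-empty. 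This $\mc F_i$ is therefore the required set. The proof is essentially bookkeeping; the only real content is the appeal to Lemma \ref{lemma: distinct has un-pointed-out property} to obtain an un-pointed-out property and the observation that a singleton target forces $f$ into $\{pos,neg\}$, so the main thing to be careful about is simply not skipping any clause of Definitions \ref{definition: simple UFBD} and \ref{definition: towards} for the new last element.
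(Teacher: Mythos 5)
Your proof is correct and follows essentially the same route as the paper: the key step in both is Lemma \ref{lemma: distinct has un-pointed-out property}, and the constructed feedback $\mc F_i=\{(p,f)\}$ with $f$ forced into $\{pos,neg\}$ by the singleton target is exactly the one the paper obtains. The only difference is presentational: the paper factors the argument through the basic protocol (Proposition \ref{proposition: simple UFBD is UFBD}, Lemma \ref{lemma: next FB}, then Proposition \ref{proposition: single target implies no neutral}), whereas you inline that construction and check the clauses of Definitions \ref{definition: simple UFBD} and \ref{definition: towards} directly.
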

\begin{proof}
It is easy to see that $\mc D$ is a dialogue in $\protocolTowards {\protocolProp \basicUFBD \property} {\{X_t\}}$ by Proposition \ref{proposition: simple UFBD is UFBD}.
Therefore, by Lemmata \ref{lemma: next FB} and \ref{lemma: distinct has un-pointed-out property},  
if $\mc X_i$ contains multiple candidates, there is a non-empty set $\mc F_i$ of FBs on $P$ such that the sequence $\mc D, \mc F_i$ is again a $\protocolTowards {\protocolProp \basicUFBD \property} {\{X_t\}}$.
This sequence  $\mc D, \mc F_i$ is also a dialogue in $\protocolTowards {\protocolProp \simpleUFBD \property} {\{X_t\}}$ by Proposition \ref{proposition: single target implies no neutral}.\checked\checkedC
\end{proof}

\begin{lemma}\label{lemma: next X in simple UFBD}\proved\checked\checkedC
Let $\property:\mc X\to \powerset P$ be a property assignment function, $\targetX$ be a candidate in $\mc X$ and $\mc D=\mc X_1, \mc F_1, \ldots, \mc X_i, \mc F_i$ be a finite dialogue in $\protocolTowards {\protocolProp \simpleUFBD \property} {\{X_t\}}$.
Then, there exists a set $\mc X_{i+1}\subseteq \mc X$ such that $\mc D, \mc X_{i+1}$ is a dialogue in $\protocolTowards {\protocolProp \simpleUFBD \property} {\{X_t\}}$.
\end{lemma}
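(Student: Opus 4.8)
The plan is to construct $\mc X_{i+1}$ directly, in the spirit of the proof of Lemma~\ref{lemma: exist next X}, but paying attention to Conditions~\ref{condition: distinct implies not equivalent simple UFBD}--\ref{condition: last X simple UFBD} of Definition~\ref{definition: simple UFBD}, which a naive singleton continuation might violate. Throughout, write $\mc Y$ for the set of candidates in $\mc X$ satisfying every FB occurring in $\mc D$ (so $\mc Y=\mc X$ if $\mc D$ is the null sequence). The first step is to show $\targetX\in\mc Y$: for any $(p,f)\in\mc F_j$ in $\mc D$, Proposition~\ref{proposition: single target implies no neutral} excludes $f=neutral$, and since one of $p\in\property(\targetX)$, $p\notin\property(\targetX)$ always holds, the third clause of Definition~\ref{definition: towards} cannot arise, so that definition forces $f=pos$ exactly when $p\in\property(\targetX)$ and $f=neg$ exactly when $p\notin\property(\targetX)$. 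Hence $p\in\property(\targetX)$ for every positive FB and $p\notin\property(\targetX)$ for every negative FB of $\mc D$, i.e.\ $\targetX$ satisfies all previous FBs; in particular $\mc Y\neq\emptyset$.

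Next I would note that $\mc D$ can legally be extended, i.e.\ that $\mc F_i$ is not forced to be the last set of $\mc D$. When $\mc D$ is non-null it ends with $\mc F_i$; as $\mc X_i$ is followed by $\mc F_i$ it is not the last set, so $|\mc X_i|\geq 2$ by Condition~\ref{condition: last X simple UFBD} of Definition~\ref{definition: simple UFBD}, whence Lemma~\ref{lemma: distinct has un-pointed-out property} supplies a property appearing in $\mc X_i$ not pointed out before $i$; then Condition~\ref{condition: no-unpointed property in simple UFBD} makes $\mc F_i$ non-empty, so Condition~\ref{condition: last set F in simple UFBD} does not apply.

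Now the construction: if there are $X,X'\in\mc Y$ with $X\neq X'$ and $\property(X)\neq\property(X')$, set $\mc X_{i+1}:=\{X,X'\}$; otherwise set $\mc X_{i+1}:=\{\targetX\}$ (using $\targetX\in\mc Y$). In both cases the elements of $\mc X_{i+1}$ lie in $\mc Y$, giving Condition~\ref{condition: satisfy all previous FBs} of Definition~\ref{definition: UFBD}, and $\mc X_{i+1}\neq\emptyset$ gives Condition~\ref{condition: non-empty X}; as no FB is appended, the conditions on the $\mc F_j$ ($j\leq i$) in Definition~\ref{definition: simple UFBD} and the clauses of Definition~\ref{definition: towards} are inherited from $\mc D$. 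It remains to verify Conditions~\ref{condition: distinct implies not equivalent simple UFBD}, \ref{condition: multiple elements X simple UFBD} and~\ref{condition: last X simple UFBD} for $\mc X_{i+1}$. In the first case, $\property(X)\neq\property(X')$ gives Condition~\ref{condition: distinct implies not equivalent simple UFBD}, and $|\mc X_{i+1}|=2$ gives Condition~\ref{condition: multiple elements X simple UFBD} while making Condition~\ref{condition: last X simple UFBD} vacuous. In the second case, Condition~\ref{condition: distinct implies not equivalent simple UFBD} is vacuous; Condition~\ref{condition: multiple elements X simple UFBD} holds because its hypothesis --- the existence of two distinct candidates in $\mc X$, with distinct property sets, both satisfying all previous FBs, i.e.\ of two such members of $\mc Y$ --- is exactly what the case assumption denies; and Condition~\ref{condition: last X simple UFBD} holds since $\{\targetX\}$ is genuinely the last set of $\mc D,\mc X_{i+1}$. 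Hence $\mc D,\mc X_{i+1}$ is a simple UFBD on $\property$ towards $\{\targetX\}$, i.e.\ a dialogue in $\protocolTowards{\protocolProp\simpleUFBD\property}{\{\targetX\}}$; the null-sequence case is the same with $\mc X_{i+1}$ renamed $\mc X_1$ and the second paragraph above omitted.

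The delicate point I anticipate is Condition~\ref{condition: multiple elements X simple UFBD} in the singleton case: one must make sure its global hypothesis genuinely fails, which is why it helps to frame the case split through $\mc Y$ and to place $\targetX$ in $\mc Y$ first, and (relatedly) to have checked that $\mc F_i\neq\emptyset$ so that extending $\mc D$ is permitted at all. Everything else is a routine pass through the clauses of Definitions~\ref{definition: UFBD} and~\ref{definition: simple UFBD}.
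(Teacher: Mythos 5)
Your proposal is correct and follows essentially the same route as the paper's proof: establish that $\mc F_i$ is non-empty via Condition \ref{condition: last X simple UFBD} of Definition \ref{definition: simple UFBD} and Lemma \ref{lemma: distinct has un-pointed-out property}, then define $\mc X_{i+1}$ as a pair of candidates with distinct property sets satisfying all previous FBs if such a pair exists, and as $\{\targetX\}$ otherwise. Your explicit verification that $\targetX$ satisfies all previous FBs (via Definition \ref{definition: towards} and Proposition \ref{proposition: single target implies no neutral}) is a detail the paper leaves implicit, but the argument is the same.
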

\begin{proof}
This is proved in a way similar to the proof of Lemma \ref{lemma: exist next X}.
We first see that Condition \ref{condition: last set F in simple UFBD} of Definition \ref{definition: simple UFBD} does not apply to $\mc D$.
This is trivial by definition if $\mc D$ is the null sequence.
Hence, we assume that $\mc D$ is not null and show that $\mc F_i$ is non-empty.
We can see that $\mc X_i$ contains multiple candidates by Condition \ref{condition: last X simple UFBD} of Definition \ref{definition: simple UFBD} since $\mc X_i$ is not the last set of $\mc D$.
Hence, there are $X\in \mc X_i$ and $p\in\property(X)$ such that $p$ has not been pointed out before $i$ by Lemma \ref{lemma: distinct has un-pointed-out property}.
Therefore, we see that that $\mc F_i$ is non-empty by Condition \ref{condition: no-unpointed property in simple UFBD} of of Definition \ref{definition: simple UFBD}.
We then define $\mc X_{i+1}$ as follows:
it is defined to be the set of two candidates in $\mc X$ that have distinct sets of properties and that satisfy all  previous FBs if such candidates exist;
it is defined to be $\{\targetX\}$ otherwise.
Then, the resulting sequence $\mc D, \mc X_{i+1}$ is a dialogue in $\protocolTowards {\protocolProp \simpleUFBD \property} {\{X_t\}}$.\proved\checked\checkedC
\end{proof}

\begin{theorem}\label{theorem: simple convergence towards single target}\proved\checked\checkedC
The protocol $\protocolProp \simpleUFBD \property$ has the convergence property towards a single target for any property assignment function $\property$.
\end{theorem}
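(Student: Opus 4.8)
The plan is to unwind Definitions \ref{definition: convergence} and \ref{definition: convergence of one dialogue}: fix a singleton target $\targetsetX=\{\targetX\}$ and a finite dialogue $\mc D$ in $\protocolTowards {\protocolProp \simpleUFBD \property} {\{\targetX\}}$ for which no set $\mc A$ makes $\mc D,\mc A$ a dialogue in this protocol, and then verify Conditions \ref{condition: last set in convergence}, \ref{condition: satisfy shared positive properties} and \ref{condition: satisfy shared negative properties} of Definition \ref{definition: convergence of one dialogue}.

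First I would pin down the shape of $\mc D$. It can be neither the null sequence nor a dialogue ending in a feedback set $\mc F_i$: in the former case any admissible choice of an initial set $\mc X_1$ extends it, while in the latter Lemma \ref{lemma: next X in simple UFBD} supplies a set $\mc X_{i+1}$ with $\mc D,\mc X_{i+1}$ again a dialogue in $\protocolTowards {\protocolProp \simpleUFBD \property} {\{\targetX\}}$; either possibility contradicts the non-extendability of $\mc D$. Hence $\mc D$ ends in a set $\mc X_l\subseteq\mc X$ of candidates, its last set. Next, because $\mc D$ is towards the singleton $\{\targetX\}$, Proposition \ref{proposition: single target implies no neutral} together with Definition \ref{definition: towards} gives that every feedback $(p,f)$ occurring in $\mc D$ is either positive with $p\in\property(\targetX)$ or negative with $p\notin\property(\targetX)$; in either case $\targetX$ satisfies that feedback, so $\targetX$ satisfies all feedbacks in $\mc D$. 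As $\mc D$ is in particular a UFBD on $\property$, Condition \ref{condition: non-empty X} of Definition \ref{definition: UFBD} then forces $\mc X_l$ to be non-empty, which is exactly Condition \ref{condition: last set in convergence}.

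Next I would show $\mc X_l$ is a singleton. If $\mc X_l$ contained two distinct candidates, Lemma \ref{lemma: next F in simple UFBD} would produce a non-empty $\mc F_l$ with $\mc D,\mc F_l$ a dialogue in $\protocolTowards {\protocolProp \simpleUFBD \property} {\{\targetX\}}$, contradicting non-extendability; hence $\mc X_l=\{X_l\}$ for a unique $X_l$, which, being a member of $\mc X_l$, satisfies all previous feedbacks. The remaining step is to deduce $\property(X_l)=\property(\targetX)$, which yields Conditions \ref{condition: satisfy shared positive properties} and \ref{condition: satisfy shared negative properties} at once. This follows from the contrapositive of Condition \ref{condition: multiple elements X simple UFBD} of Definition \ref{definition: simple UFBD}: since $\mc X_l$ does not contain multiple candidates, all candidates in $\mc X$ that satisfy all previous feedbacks must share one and the same property set; as both $X_l$ and $\targetX$ satisfy all previous feedbacks, $\property(X_l)=\property(\targetX)$. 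Since $\targetsetX$ and $\mc D$ were arbitrary, this establishes the convergence property towards a single target.

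The step I expect to be the main obstacle is the shape-of-$\mc D$ bookkeeping: one must make sure that non-extendability of $\mc D$ really forces it to terminate in a \emph{non-empty singleton} $\mc X_l$, as opposed to in some $\mc F_l$, in the empty set, or in a multi-element set. This is precisely where the simple-UFBD clauses (Conditions \ref{condition: multiple elements X simple UFBD} and \ref{condition: last X simple UFBD} of Definition \ref{definition: simple UFBD}) have to be played off against Lemmata \ref{lemma: next X in simple UFBD} and \ref{lemma: next F in simple UFBD} and against Condition \ref{condition: non-empty X} of Definition \ref{definition: UFBD}, with $\targetX$ itself serving as the witness of non-emptiness. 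Once $\mc X_l=\{X_l\}$ is established, the property-matching conclusion is routine.
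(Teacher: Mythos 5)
Your proposal is correct and follows essentially the same route as the paper's proof: both use Lemma \ref{lemma: next X in simple UFBD} to rule out termination in a feedback set, Lemma \ref{lemma: next F in simple UFBD} to rule out a multi-element last set, Condition \ref{condition: non-empty X} of Definition \ref{definition: UFBD} (with $\targetX$ as witness, since being towards $\{\targetX\}$ means $\targetX$ satisfies every FB) for non-emptiness, and Condition \ref{condition: multiple elements X simple UFBD} of Definition \ref{definition: simple UFBD} to conclude $\property(X_l)=\property(\targetX)$. Your write-up is merely more explicit about the bookkeeping (e.g.\ invoking Proposition \ref{proposition: single target implies no neutral}) than the paper's terser version.
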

\begin{proof}
This is proved in a way similar to the proof of Theorem \ref{theorem: convergence}.
Let $\property:\mc X\to \powerset P$ be a property assignment function, $\targetX$ be a candidate in $\mc X$ and  
$\mc D$ be a finite dialogue in $\protocolTowards {\protocolProp \simpleUFBD \property} {\{X_t\}}$.
The last set of $\mc D$ is a set of candidates $\mc X$ by Lemma \ref{lemma: next X in simple UFBD} and it does not contain multiple candidates by Lemma \ref{lemma: next F in simple UFBD}.
It is also non-empty by Condition \ref{condition: non-empty X} of Definition \ref{definition: UFBD} since $X_t$ satisfies all previous FBs.
Thus, it is a singleton $\{X\}\subseteq \mc X$.
We can also see that $\property(X)=\property(X_t)$ by Condition \ref{condition: multiple elements X simple UFBD} of Definition \ref{definition: simple UFBD} since $X$ and $X_t$ satisfies all previous FBs by Condition \ref{condition: satisfy all previous FBs} of Definition \ref{definition: UFBD} and by Definition \ref{definition: towards}, respectively.
Therefore, Condition \ref{condition: satisfy shared properties} of Definition \ref{definition: convergence of one dialogue} is satisfied.\proved\checked\checkedC
\end{proof}
\noindent As an immediate consequence of Theorem \ref{theorem: simple convergence towards single target}, we have:
\begin{corollary}\label{corollary: simple UFBD convergence towards single target}\proved\checked\checkedC
Let $\property:\mc X\to \powerset P$ be a property assignment function, $\targetX$ be a candidate in $\mc X$ and  
$\mc D$ be a finite dialogue in the protocol $\protocolTowards {\protocolProp \simpleUFBD \property} {\{X_t\}}$.
Suppose that there is no set $\mc A$ satisfying that $\mc D, \mc A$ is a dialogue in $\protocolTowards {\protocolProp \simpleUFBD \property} {\{X_t\}}$.
Then, the last set of $\mc D$ if the singleton of a candidate $X\in\mc X$ that satisfies $\property(X)=\property(X_t)$.\proved\checked\checkedC
\end{corollary}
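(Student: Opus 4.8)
The plan is to derive the statement directly from Theorem~\ref{theorem: simple convergence towards single target} by unfolding the definition of convergence, with one extra step at the end that exploits the defining restriction of a simple UFBD.

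First I would apply Theorem~\ref{theorem: simple convergence towards single target}: since $\protocolProp \simpleUFBD \property$ has the convergence property towards a single target, taking the singleton target $\{X_t\}$ in Definition~\ref{definition: convergence} tells us that the finite dialogue $\mc D \in \protocolTowards {\protocolProp \simpleUFBD \property} {\{X_t\}}$ converges to $\{X_t\}$ in $\protocolTowards {\protocolProp \simpleUFBD \property} {\{X_t\}}$ in the sense of Definition~\ref{definition: convergence of one dialogue}. Feeding in the hypothesis of the corollary, namely that no $\mc A$ makes $\mc D, \mc A$ a dialogue in $\protocolTowards {\protocolProp \simpleUFBD \property} {\{X_t\}}$, Definition~\ref{definition: convergence of one dialogue} then yields two facts about the last set $\lastsetXL$ of $\mc D$: by Condition~\ref{condition: last set in convergence}, $\lastsetXL$ is a non-empty subset of $\mc X$, so in particular it is a set of candidates, i.e.\ it occurs in $\mc D$ as some $\mc X_i$ rather than as an $\mc F_i$; and, instantiating the shared-property clauses~\ref{condition: satisfy shared positive properties} and~\ref{condition: satisfy shared negative properties} at the singleton $\{X_t\}$, every $X \in \lastsetXL$ satisfies $p \in \property(X)$ iff $p \in \property(X_t)$ for all $p \in P$, that is $\property(X) = \property(X_t)$. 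Equivalently, one may simply reuse the proof of Proposition~\ref{proposition: convergence single target} verbatim, replacing the appeal to the convergence property by the convergence property towards a single target.

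It then remains to promote the conclusion from $\lastsetXL$ being a set of candidates all having property set $\property(X_t)$ to $\lastsetXL$ being a singleton. Since $\lastsetXL$ is one of the candidate-sets $\mc X_i$ of the simple UFBD $\mc D$, Condition~\ref{condition: distinct implies not equivalent simple UFBD} of Definition~\ref{definition: simple UFBD} applies to it, so any two distinct members of $\lastsetXL$ would have distinct property sets; but we have just shown that every member of $\lastsetXL$ has the one property set $\property(X_t)$, so $\lastsetXL$ contains at most one candidate, and being non-empty it is a singleton $\{X\}$ with $\property(X) = \property(X_t)$, as required. The only point needing a moment's care, and hence the main obstacle such as it is, is exactly this final reconciliation, together with the subsidiary observation that Condition~\ref{condition: last set in convergence} is what guarantees the last set is a genuine set of candidates, so that Condition~\ref{condition: distinct implies not equivalent simple UFBD}, which is phrased for the $\mc X_i$, is legitimately applicable to it.
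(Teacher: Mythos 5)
Your proof is correct. The paper gives no explicit argument for this corollary, stating only that it is an immediate consequence of Theorem \ref{theorem: simple convergence towards single target}; what that theorem's \emph{statement} delivers, via Definitions \ref{definition: convergence} and \ref{definition: convergence of one dialogue}, is exactly your first step (a non-empty last set all of whose members $X$ satisfy $\property(X)=\property(X_t)$), and the singleton claim genuinely requires something beyond it. You supply that extra step from Condition \ref{condition: distinct implies not equivalent simple UFBD} of Definition \ref{definition: simple UFBD}, which is legitimately applicable because Condition \ref{condition: last set in convergence} forces the last set to occupy a candidate position $\mc X_i$ of the dialogue. The paper's own proof of Theorem \ref{theorem: simple convergence towards single target} instead establishes singleton-ness directly: by Lemma \ref{lemma: next F in simple UFBD}, a terminal $\mc X_i$ with two (hence property-distinct) members could be extended by a further non-empty $\mc F_i$, contradicting termination, and non-emptiness comes from Condition \ref{condition: non-empty X} of Definition \ref{definition: UFBD}. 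The two routes are interchangeable and rest on the same underlying fact that distinct candidates in an $\mc X_i$ of a simple UFBD have distinct property sets; yours reads the singleton off the terminal set's internal structure after invoking convergence, the paper's off the impossibility of extending the dialogue.
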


\subsubsection{Simple UFBD Protocols on Hypotheses}
In the previous section, we have seen that $\protocolProp \simpleUFBD \property$ has the convergence property towards a single target for any property assignment function $\property$.
In the rest of this section, we see that $\protocolProp \simpleUFBD {\propertyH O {\class C}}$ and $\protocolProp \simpleUFBD {\propertyG O {\class C}}$ have the halting property under certain reasonable conditions.
We first see the case of $\protocolProp \simpleUFBD {\propertyH O {\class C}}$ and then see the case of $\protocolProp \simpleUFBD {\propertyG O {\class C}}$.

As the following example illustrates, $\protocolProp \simpleUFBD {\propertyH O {\class C}}$ does not necessarily have the halting property:
\begin{example}\label{example: no-halting property Simple propH}
We use the settings of Example \ref{example: semantics-based protocol not terminate}.
Let $\mc H'_i$ be $\{H_{2i-1}, H_{2i}\}$ and  $\mc F'_i$ be $\{([H_{2i-1}]_{\leftrightarrow_{\class C}},neg), ([H_{2i}]_{\leftrightarrow_{\class C}},neg)\}$ for each $i=1,2,\ldots$.
Then, it is easy to see that the sequence $\mc H'_1, \mc F'_1, \mc H'_2, \mc F'_2, \ldots$ is an infinite dialogue in $\protocolTowards {\protocolProp \simpleUFBD {\propertyH O {\class C} } } {\{H_0\}}$ and thus in $\protocolProp \simpleUFBD {\propertyH O {\class C} } $. \proved\checked\hfill $\dashv$
\end{example}

The protocol $\protocolProp \simpleUFBD {\propertyH O {\class C}}$ has the halting property under the same restriction of $\class C$ as in Theorem \ref{theorem: halting property of syntax-based dialogue protocol}, since $\protocolProp \simpleUFBD {\propertyH O {\class C}}\subseteq \protocolProp \basicUFBD {\propertyH O {\class C}}$ holds by Proposition \ref{proposition: simple UFBD is UFBD}:
\begin{corollary}\label{corollary: simple hypotheses halting}\proved\checked\checkedC
Let $\class{C}$ be a class of structures $\structure{M}{I}$ and $O$ a set of sentences. 
Suppose that the shared domain $M_\sigma$ is finite for any sort $\sigma\in\mc S$.
Then, $\protocolProp \simpleUFBD {\propertyH O {\class C}}$ has the halting property.
Hence, so does the protocol $\protocolTowards {\protocolProp \simpleUFBD {\propertyH O {\class C} } } {\targetsetH}$ for any set $\targetsetH\subseteq \propertyH O {\class C}$ of hypotheses.
\end{corollary}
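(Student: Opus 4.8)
The plan is to obtain the statement as a direct corollary of Proposition~\ref{proposition: simple UFBD is UFBD} together with Theorem~\ref{theorem: halting property of syntax-based dialogue protocol}, preceded by one elementary observation.

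First I would record the (trivial) fact that the halting property is inherited downwards along inclusions of protocols: if $\class D$ is a UFBD protocol containing no infinite sequence and $\class D'\subseteq \class D$, then $\class D'$ contains no infinite sequence either, hence $\class D'$ has the halting property. This is immediate from the definition of the halting property, since every sequence belonging to $\class D'$ is in particular a sequence belonging to $\class D$.

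Next, by Proposition~\ref{proposition: simple UFBD is UFBD} we have the inclusion $\protocolProp \simpleUFBD {\propertyH O {\class C}}\subseteq \protocolProp \basicUFBD {\propertyH O {\class C}}$, and, under the hypothesis that the shared domain $M_\sigma$ is finite for every $\sigma\in\mc S$, Theorem~\ref{theorem: halting property of syntax-based dialogue protocol} tells us that $\protocolProp \basicUFBD {\propertyH O {\class C}}$ has the halting property. Combining these two facts with the preliminary observation yields that $\protocolProp \simpleUFBD {\propertyH O {\class C}}$ has the halting property.

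For the final assertion, I would note that $\protocolTowards {\protocolProp \simpleUFBD {\propertyH O {\class C}}} {\targetsetH}$ is by construction a subset of $\protocolProp \simpleUFBD {\propertyH O {\class C}}$ (cf. the remark following Definition~\ref{definition: towards} that $\protocolTowards {\class D} \targetsetX$ is again a UFBD protocol), so a second application of the preliminary observation gives the halting property for it as well. There is essentially no obstacle in this argument; the only point that needs to be made explicit is the downward closure of the halting property under protocol inclusion, everything else being a direct appeal to results already established, namely Theorem~\ref{theorem: halting property of syntax-based dialogue protocol} and the containment $\protocolProp \simpleUFBD {\propertyH O {\class C}}\subseteq\protocolProp \basicUFBD {\propertyH O {\class C}}$.
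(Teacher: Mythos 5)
Your proposal is correct and is essentially identical to the paper's own proof, which derives the corollary as an immediate consequence of Theorem~\ref{theorem: halting property of syntax-based dialogue protocol} and the inclusion $\protocolProp \simpleUFBD {\propertyH O {\class C}}\subseteq\protocolProp \basicUFBD {\propertyH O {\class C}}$ from Proposition~\ref{proposition: simple UFBD is UFBD}. The only difference is that you make explicit the (trivial) downward closure of the halting property under protocol inclusion, which the paper leaves implicit.
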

\begin{proof}
This is an immediate consequence of  Theorem \ref{theorem: halting property of syntax-based dialogue protocol} and Proposition \ref{proposition: simple UFBD is UFBD}.\proved\checked\checkedC
\end{proof}
\noindent

The convergence property towards a single target (Corollary \ref{corollary: simple UFBD convergence towards single target}) for the protocol $\protocolProp \simpleUFBD {\propertyH O {\class C}}$ is rephrased as follows:
\begin{corollary}\label{corollary: convergence hyp}\proved\checked\checkedC
Let $\class{C}$ be a class of structures, $O$ a set of sentences $\targetH$ be a hypothesis in $\propertyH O {\class C}$ and $\mc D$ be a finite dialogue in the protocol $\protocolTowards {\protocolProp \simpleUFBD {\propertyH O {\class C} } } {\targetH}$.
Suppose that there is no set $\mc A$ satisfying that $\mc D, \mc A$ is a dialogue in $\protocolTowards {\protocolProp \simpleUFBD {\propertyH O {\class C} } } {\targetH}$.
Then, the last set of $\mc D$ is a singleton $\{H\}\subseteq \hypset O {\class C}$ satisfying $H \leftrightarrow_{\class C} \targetH$.
\end{corollary}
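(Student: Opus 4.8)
The statement follows by combining Corollary~\ref{corollary: simple UFBD convergence towards single target} with Proposition~\ref{prop: proph and equiv}, so the proof is purely a matter of correctly instantiating the abstract framework. The plan is to apply Corollary~\ref{corollary: simple UFBD convergence towards single target} with the property assignment function $\property:=\propertyH O {\class C}$, whose set of candidates is $\hypset O {\class C}$ and whose set of properties is $\quotient{\sentenceSet}{\leftrightarrow_{\class C}}$, and with the single target $\targetX:=\targetH$. By the convention fixed in Definition~\ref{definition: towards}, the protocol $\protocolTowards {\protocolProp \simpleUFBD {\propertyH O {\class C}}} {\targetH}$ is exactly $\protocolTowards {\protocolProp \simpleUFBD {\propertyH O {\class C}}} {\{\targetH\}}$, and $\targetH\in\hypset O {\class C}$, so the hypotheses of Corollary~\ref{corollary: simple UFBD convergence towards single target} are met for $\mc D$.

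First I would invoke Corollary~\ref{corollary: simple UFBD convergence towards single target}: since $\mc D$ is a finite dialogue in $\protocolTowards {\protocolProp \simpleUFBD {\propertyH O {\class C}}} {\{\targetH\}}$ that admits no extension $\mc D, \mc A$ within this protocol, its last set is the singleton $\{H\}$ of some hypothesis $H\in\hypset O {\class C}$ with $\propertyH O {\class C}(H)=\propertyH O {\class C}(\targetH)$. Then I would apply Proposition~\ref{prop: proph and equiv} to the two hypotheses $H,\targetH\in\hypset O {\class C}$: the equality $\propertyH O {\class C}(H)=\propertyH O {\class C}(\targetH)$ is equivalent to $H\leftrightarrow_{\class C}\targetH$. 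Hence the last set of $\mc D$ is the singleton $\{H\}\subseteq\hypset O {\class C}$ with $H\leftrightarrow_{\class C}\targetH$, as claimed.

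There is no genuine obstacle here: the argument is a bookkeeping composition of two already-established results. The only point requiring care is verifying that the parameters $(\mc X,P,Prop)$ of the generic theory are instantiated by $(\hypset O {\class C},\ \quotient{\sentenceSet}{\leftrightarrow_{\class C}},\ \propertyH O {\class C})$ and that $\targetH$ is a legitimate single target, i.e.\ an element of $\hypset O {\class C}$ — which is precisely the hypothesis of the corollary.
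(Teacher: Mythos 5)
Your proposal is correct and matches the paper's proof exactly: the paper also derives this corollary immediately from Corollary~\ref{corollary: simple UFBD convergence towards single target} (instantiated with $\propertyH O {\class C}$ and the single target $\targetH$) together with Proposition~\ref{prop: proph and equiv}. Your extra care in checking the instantiation of $(\mc X, P, Prop)$ is sound but adds nothing beyond what the paper's one-line proof already relies on.
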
 
\begin{proof}
This is immediately proved by Proposition \ref{prop: proph and equiv} and Corollary \ref{corollary: simple UFBD convergence towards single target}.\proved\checked\checkedC
\end{proof}

\subsubsection{Simple UFBD Protocols on Hypothesis Graphs}
We next see the case of $\protocolProp \simpleUFBD {\propertyG O {\class C}}$.
As illustrated by the following example, the protocol $\protocolProp \simpleUFBD {\propertyG O {\class C}}$ does not necessarily have the halting property:
\begin{example}\label{example: infinite syntax-based dialogue simple}
We use the settings of Example \ref{example: infinite syntax-based dialogue}.
Then, $\mc G'_i=\{G_{2i-1}, G_{2i}\}$ and $\mc F'_i =\{([G_{2i-1}]_{\simeq},neg), ([G_{2i}]_{\simeq},neg)\}$ for $i=1,2,\ldots$ constitute an infinite dialogue $\mc G'_1, \mc F'_1, \mc G'_2, \mc F'_2, \ldots$ in the protocol $\protocolTowards {\protocolProp \simpleUFBD {\propertyG O {\class C} } } {\{G_0\}}$ and thus in $\protocolProp \simpleUFBD {\propertyG O {\class C} }$. \proved\checked
$\hfill\dashv$

\end{example}

\begin{corollary}\label{corollary: simple graph n-bounded halting}\proved\checked\checkedC
Let $O$ be a set of sentences, $\class C$ be a class of structures and $n$ be a non-negative integer.
Then, $\protocolSizeRestrict  {\protocolProp \simpleUFBD {\propertyG O {\class C} }} n$ has the halting property.
\end{corollary}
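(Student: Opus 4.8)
The plan is to obtain this corollary essentially for free from Theorem~\ref{DecidabilityOfGraphUFBD} together with Proposition~\ref{proposition: simple UFBD is UFBD}, in exact parallel with how Corollary~\ref{corollary: simple hypotheses halting} was deduced from Theorem~\ref{theorem: halting property of syntax-based dialogue protocol} in the hypotheses case.

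First I would invoke Proposition~\ref{proposition: simple UFBD is UFBD}, which supplies the inclusion $\protocolProp {\simpleUFBD} {\propertyG O {\class C}} \subseteq \protocolProp {\basicUFBD} {\propertyG O {\class C}}$. The operation $\protocolSizeRestrict{(-)}{n}$ that imposes the $n$-bounded size constraint of Definition~\ref{definition: n-bounded size constraint} merely keeps, from a given set of dialogues, those dialogues that satisfy one fixed condition on the dialogue itself; it is therefore monotone under inclusion of protocols, so that $\protocolSizeRestrict{\protocolProp {\simpleUFBD} {\propertyG O {\class C}}}{n} \subseteq \protocolSizeRestrict{\protocolProp {\basicUFBD} {\propertyG O {\class C}}}{n}$, and the right-hand side is exactly $\protocolSizeRestrict{\protocolname{Basic}(\propertyG O {\class C})}{n}$ by the definition of $\protocolProp {\basicUFBD} {\propertyG O {\class C}}$.

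Next, Theorem~\ref{DecidabilityOfGraphUFBD} states that $\protocolSizeRestrict{\protocolname{Basic}(\propertyG O {\class C})}{n}$ contains no infinite sequence. A subset of a set of dialogues that contains no infinite sequence likewise contains no infinite sequence; hence $\protocolSizeRestrict{\protocolProp {\simpleUFBD} {\propertyG O {\class C}}}{n}$ contains no infinite sequence, i.e.\ it has the halting property, which is what was to be shown.

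The only step that warrants an explicit word — and it is the closest thing to an obstacle, though a very mild one — is the monotonicity of $\protocolSizeRestrict{(-)}{n}$: one must observe that a simple UFBD on $\propertyG O {\class C}$ that satisfies the $n$-bounded size constraint is, being a basic UFBD by Proposition~\ref{proposition: simple UFBD is UFBD} and still satisfying the very same size constraint, an element of $\protocolSizeRestrict{\protocolname{Basic}(\propertyG O {\class C})}{n}$. This is immediate from Definition~\ref{definition: n-bounded size constraint}, since that definition constrains only the feedback sets $\mc F_i$ occurring in the dialogue and is indifferent to whether the dialogue is classified as simple or basic. Everything else is a routine unwinding of the definition of the halting property.
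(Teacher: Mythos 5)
Your proposal is correct and follows exactly the paper's own argument: the paper likewise deduces the inclusion $\protocolSizeRestrict{\protocolProp{\simpleUFBD}{\propertyG O {\class C}}}{n} \subseteq \protocolSizeRestrict{\protocolProp{\basicUFBD}{\propertyG O {\class C}}}{n}$ from Proposition~\ref{proposition: simple UFBD is UFBD} and then cites Theorem~\ref{DecidabilityOfGraphUFBD} to rule out infinite dialogues. Your explicit remark on the monotonicity of the size-restriction operation is a fair elaboration of the step the paper dismisses as ``easily seen.''
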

\begin{proof}
It is easily seen that $\protocolSizeRestrict  {\protocolProp \simpleUFBD {\propertyG O {\class C} }} n \subseteq \protocolSizeRestrict  {\protocolProp \basicUFBD {\propertyG O {\class C} }} n $ using Proposition \ref{proposition: simple UFBD is UFBD}.
Since $\protocolSizeRestrict  {\protocolProp \basicUFBD {\propertyG O {\class C} }} n $ does not contain any infinite dialogue by Theorem \ref{DecidabilityOfGraphUFBD}, protocol $\protocolSizeRestrict  {\protocolProp \simpleUFBD {\propertyG O {\class C} }} n $ does not contains any infinite dialogue.
This means that $\protocolSizeRestrict  {\protocolProp \simpleUFBD {\propertyG O {\class C} }} n $ has the halting property.\proved\checked\checkedC
\end{proof}

We conclude this section by proving that $\protocolTowards {\protocolSizeRestrict  {\protocolProp \simpleUFBD {\propertyG O {\class C} }} n} {\{G_t\}}$ necessarily converges to $\{G_t\}$ if $n$ is strictly greater than the order $\cardinal {G_t}$ of $G_t$.
\begin{lemma}\label{lemma: simple n-bounded next F}\proved\checked\checkedC
Let $O$ be a set of sentences, $\class C$ be a class of structures, $G_t$ be a hypothesis graph in $\hypgraphset O {\class C}$, $n$ be a non-negative integer such that $n\gneq \cardinal {G_t}$  and  $\mc D=\mc G_1, \mc F_1,\dots, \mc G_i$ $(i\geq 1)$ be a finite dialogue in the protocol $\protocolTowards {\protocolSizeRestrict  {\protocolProp \simpleUFBD {\propertyG O {\class C} }} n} {\{G_t\}}$.
If $\mc G_i$ contains multiple hypothesis graphs, then there is a non-empty set $\mc F_i$ of FBs such that $\mc D, \mc F_i$ is again a dialogue in $\protocolTowards {\protocolSizeRestrict  {\protocolProp \simpleUFBD {\propertyG O {\class C} }} n} {\{G_t\}}$.
\end{lemma}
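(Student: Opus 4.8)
The plan is to adapt the argument of Lemma~\ref{lemma: next F w cond 2e} to the simple setting, invoking Lemma~\ref{lemma: next F in simple UFBD} where that proof uses Lemma~\ref{lemma: next FB}. Since $\protocolSizeRestrict {\protocolProp \simpleUFBD {\propertyG O {\class C} }} n$ is by definition a subprotocol of $\protocolProp \simpleUFBD {\propertyG O {\class C}}$, and likewise after restricting to dialogues towards $\{G_t\}$, the given $\mc D$ is in particular a finite dialogue in $\protocolTowards {\protocolProp \simpleUFBD {\propertyG O {\class C}}} {\{G_t\}}$. As $\mc G_i$ contains multiple hypothesis graphs, Lemma~\ref{lemma: next F in simple UFBD} yields a non-empty set $\mc F_i$ of FBs with $\mc D, \mc F_i \in \protocolTowards {\protocolProp \simpleUFBD {\propertyG O {\class C}}} {\{G_t\}}$. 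If some $(g,f)\in\mc F_i$ has $|g|\leq n$, then $\mc D,\mc F_i$ already meets the $n$-bounded size constraint and we take $\mc F'_i:=\mc F_i$; so assume henceforth that $|g|>n$ for every $(g,f)\in\mc F_i$.

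Under this assumption every FB in $\mc F_i$ must be negative: a positive FB $(g,pos)$ would give $g\in\propertyG O {\class C}(G_t)$ and hence $|g|\leq|G_t|<n$, a contradiction; and no FB is neutral because the target is a singleton (Proposition~\ref{proposition: single target implies no neutral}). Fix an FB $([G']_\simeq,neg)\in\mc F_i$, so that $|G'|>n>|G_t|$. Let $G''$ be the formula graph whose vertex set is an arbitrary $n$-element subset of the vertices of $G'$ and all of whose edge relations are empty; then $G''$ is a subgraph of $G'$ with $|G_t|<|G''|=n$. Set $\mc F'_i:=\mc F_i\cup\{([G'']_\simeq,neg)\}$, which is non-empty and, since $|G''|\leq n$, satisfies the $n$-bounded size constraint; hence, once we know that $\mc D,\mc F'_i$ is a dialogue in $\protocolTowards {\protocolProp \simpleUFBD {\propertyG O {\class C}}} {\{G_t\}}$, it lies in $\protocolTowards {\protocolSizeRestrict {\protocolProp \simpleUFBD {\propertyG O {\class C} }} n} {\{G_t\}}$, as required.

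It then suffices to check that $\mc D,\mc F'_i$ is a dialogue in $\protocolTowards {\protocolProp \simpleUFBD {\propertyG O {\class C}}} {\{G_t\}}$, i.e. that $([G'']_\simeq,neg)$ is an admissible new feedback; the remaining defining clauses of Definitions~\ref{definition: UFBD}, \ref{definition: simple UFBD} and \ref{definition: towards} are inherited from $\mc D,\mc F_i$ or are immediate. First, $|G''|>|G_t|$ rules out any embedding of $G''$ into $G_t$, so $[G'']_\simeq\notin\propertyG O {\class C}(G_t)$ and $neg$ is exactly the polarity prescribed by Definition~\ref{definition: towards}. Second, $[G'']_\simeq$ appears in $\mc G_i$: admissibility of $([G']_\simeq,neg)\in\mc F_i$ (Condition~\ref{condition: FB prop should appear simple UFBD} of Definition~\ref{definition: simple UFBD}) gives $G\in\mc G_i$ with $[G']_\simeq\in\propertyG O {\class C}(G)$, i.e. $G'$ can be embedded into $G$; composing with the inclusion embedding of the subgraph $G''$ into $G'$ shows $G''$ can be embedded into $G$, so $[G'']_\simeq\in\propertyG O {\class C}(G)$. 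Third, $[G'']_\simeq$ has not been pointed out before $i$: a prior positive feedback on it would force $G''$ to be embeddable into $G_t$, impossible by $|G''|>|G_t|$; a prior negative one would, by Condition~\ref{condition: satisfy all previous FBs} of Definition~\ref{definition: UFBD}, force $[G'']_\simeq\notin\propertyG O {\class C}(G)$, contradicting the second point.

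The only delicate step, exactly as in Lemma~\ref{lemma: next F w cond 2e}, is this last one: to meet the $n$-bounded size constraint one must produce a feedback that is both \emph{small} and \emph{new} while keeping the dialogue towards $\{G_t\}$, and the reason a suitable small subgraph pattern cannot already have been used is precisely the interplay of the towards-condition (no earlier positive feedback on an oversized pattern) with the requirement that the candidates in $\mc G_i$ satisfy all earlier feedbacks (no earlier negative one). Everything else is routine verification against the defining clauses.
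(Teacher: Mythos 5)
Your proof is correct and follows essentially the same route as the paper's: both start from Lemma~\ref{lemma: next F in simple UFBD} to obtain a non-empty $\mc F_i$ in $\protocolTowards {\protocolProp \simpleUFBD {\propertyG O {\class C}}} {\{G_t\}}$ and then repair the $n$-bounded size constraint by adjoining a negative FB on an $n$-vertex edge-free subgraph of an oversized negatively-flagged graph. The only cosmetic difference is that the paper passes through $\protocolname{Basic}$ and invokes Lemma~\ref{lemma: next F w cond 2e} as a black box, whereas you inline that construction and verify the simple-protocol clauses (admissibility, appearance in $\mc G_i$, newness, correct polarity) directly, which is if anything more explicit.
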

\begin{proof}
Since $\mc D$ is a dialogue in $\protocolTowards {\protocolSizeRestrict  {\protocolProp \simpleUFBD {\propertyG O {\class C} }} n} {\{G_t\}}$, it is also a dialogue in $\protocolTowards {\protocolProp \simpleUFBD {\propertyG O {\class C} }}  {\{G_t\}}$.
Therefore, by Lemma \ref{lemma: next F in simple UFBD} and the fact of $\cardinal{\mc G_i}\geq 2$, there is a non-empty set $\mc F_i$ of FBs such that the sequence $\mc D, \mc F_i $ is a dialogue in $\protocolTowards {\protocolProp \simpleUFBD {\propertyG O {\class C} }}  {\{G_t\}}$.
This sequence $\mc D, \mc F_i$ is also a dialogue in $\protocolTowards {\protocolProp \basicUFBD {\propertyG O {\class C} }}  {\{G_t\}}$ due to the fact that $\protocolTowards {\protocolProp \simpleUFBD {\propertyG O {\class C} }}  {\{G_t\}}\subseteq \protocolTowards {\protocolProp \basicUFBD {\propertyG O {\class C} }}  {\{G_t\}}$, which can be easily seen using Proposition \ref{proposition: simple UFBD is UFBD}.
Since $\mc D$ is also a dialogue in the protocol $\protocolTowards {\protocolSizeRestrict  {\protocolProp \basicUFBD {\propertyG O {\class C} }} n} {\{G_t\}}$,
 there is a non-empty set $\mc F'_i$ of FBs such that $\mc D, \mc F'_i$ is a dialogue in the protocol $\protocolTowards {\protocolSizeRestrict  {\protocolProp \basicUFBD {\propertyG O {\class C} }} n} {\{G_t\}}$  by Lemma \ref{lemma: next F w cond 2e}.
Therefore,
we can easily see that $\mc D, \mc F'_i\in \protocolTowards {\protocolSizeRestrict  {\protocolProp \simpleUFBD {\propertyG O {\class C} }} n} {\{G_t\}}$,
from the facts that $\mc D \in \protocolTowards {\protocolSizeRestrict  {\protocolProp \simpleUFBD {\propertyG O {\class C} }} n} {\{G_t\}}$ and that $\mc D, \mc F'_i\in \protocolTowards {\protocolSizeRestrict  {\protocolProp \basicUFBD {\propertyG O {\class C} }} n} {\{G_t\}}$.
\proved\checked\checkedC
\end{proof}

\begin{lemma}\label{lemma: simple n-bounded next G}\proved\checked\checkedC
Let $O$ be a set of sentences, $\class C$ be a class of structures, $G_t$ be a hypothesis graph in $\hypgraphset O {\class C}$, $n$ be a non-negative integer with $n\gneq \cardinal {G_t}$  and  $\mc D=\mc G_1, \mc F_1, \dots, \mc G_i, \mc F_i$ be a (possibly null) dialogue in the protocol $\protocolTowards {\protocolSizeRestrict  {\protocolProp \simpleUFBD {\propertyG O {\class C} }} n} {\{G_t\}}$.
	Then, there is a set $\mc G_{i+1}\subseteq\hypgraphset O {\class C}$ such that $\mc D, \mc G_{i+1}$ is a dialogue in $\protocolTowards {\protocolSizeRestrict  {\protocolProp \simpleUFBD {\propertyG O {\class C} }} n} {\{G_t\}}$.
\end{lemma}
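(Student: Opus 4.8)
The plan is to reduce the claim to the already-established Lemma \ref{lemma: next X in simple UFBD}, exploiting the fact that the $n$-bounded size constraint of Definition \ref{definition: n-bounded size constraint} restricts only the feedback sets of a dialogue, never its sets of hypothesis graphs. This is precisely parallel to the way Lemma \ref{lemma: next G w cond 2e} lifts Lemma \ref{lemma: exist next X} in the basic setting.

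First I would observe that $\mc D$, being a dialogue in $\protocolTowards {\protocolSizeRestrict {\protocolProp \simpleUFBD {\propertyG O {\class C}}} n} {\{G_t\}} = \protocolSizeRestrict {\protocolTowards {\protocolProp \simpleUFBD {\propertyG O {\class C}}} {\{G_t\}}} n \subseteq \protocolTowards {\protocolProp \simpleUFBD {\propertyG O {\class C}}} {\{G_t\}}$ (using the identity noted after Definition \ref{definition: n-bounded size constraint}), is in particular a finite dialogue in $\protocolTowards {\protocolProp \simpleUFBD {\propertyG O {\class C}}} {\{G_t\}}$. I then apply Lemma \ref{lemma: next X in simple UFBD} with $\property := \propertyG O {\class C}$, $\mc X := \hypgraphset O {\class C}$, and $X_t := G_t$ (noting that the cited lemma, and its proof, also cover the case where $\mc D$ is the null sequence). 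This produces a set $\mc G_{i+1} \subseteq \hypgraphset O {\class C}$ such that $\mc D, \mc G_{i+1}$ is a dialogue in $\protocolTowards {\protocolProp \simpleUFBD {\propertyG O {\class C}}} {\{G_t\}}$.

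It then remains only to check that $\mc D, \mc G_{i+1}$ still satisfies the $n$-bounded size constraint. Since that constraint (Definition \ref{definition: n-bounded size constraint}) is a condition on the feedback sets $\mc F_j$ of a dialogue, and passing from $\mc D$ to $\mc D, \mc G_{i+1}$ introduces no new feedback set, $\mc D, \mc G_{i+1}$ inherits the $n$-bounded size constraint from $\mc D$. Hence $\mc D, \mc G_{i+1} \in \protocolSizeRestrict {\protocolTowards {\protocolProp \simpleUFBD {\propertyG O {\class C}}} {\{G_t\}}} n = \protocolTowards {\protocolSizeRestrict {\protocolProp \simpleUFBD {\propertyG O {\class C}}} n} {\{G_t\}}$, which is what we want.

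There is essentially no hard step; the statement is a routine ``lifting'' of Lemma \ref{lemma: next X in simple UFBD} through the size-constraint operator. The only point worth a moment's care is verifying that the hypothesis $n \gneq \cardinal{G_t}$ is not actually used when the extending set consists of hypothesis graphs (it is needed only in the companion Lemma \ref{lemma: simple n-bounded next F}, where one must \emph{produce} a feedback whose order is bounded by $n$), so that the reduction to Lemma \ref{lemma: next X in simple UFBD} goes through verbatim.
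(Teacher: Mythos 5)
Your proposal is correct and follows essentially the same route as the paper's own proof: pass to $\protocolTowards {\protocolProp \simpleUFBD {\propertyG O {\class C}}} {\{G_t\}}$, invoke Lemma \ref{lemma: next X in simple UFBD} to obtain $\mc G_{i+1}$, and note that the $n$-bounded size constraint places no condition on the sets of hypothesis graphs, so the extended sequence remains in the constrained protocol. Your side remark that the hypothesis $n\gneq\cardinal{G_t}$ is not actually needed here is accurate and consistent with the paper, which likewise does not use it in this proof.
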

\begin{proof}
Since $\mc D$ is a dialogue in $\protocolTowards {\protocolSizeRestrict  {\protocolProp \simpleUFBD {\propertyG O {\class C} }} n} {\{G_t\}}$, it is also a dialogue in $\protocolTowards  {\protocolProp \simpleUFBD {\propertyG O {\class C} }} {\{G_t\}}$.
Hence, by Lemma \ref{lemma: next X in simple UFBD}, there is a set $\mc G_{i+1}\subseteq\hypgraphset O {\class C}$ such that $\mc D, \mc G_{i+1}$ is a dialogue in the protocol $\protocolTowards  {\protocolProp \simpleUFBD {\propertyG O {\class C} }} {\{G_t\}}$.
Since the $n$-bounded size constraint does not impose any condition on  any subsets of hypothesis graphs in a UFBD on $\propertyG O {\class C}$,
we see that the sequence $\mc D, \mc G_{i+1}$ is a dialogue in the protocol $\protocolTowards {\protocolSizeRestrict  {\protocolProp \simpleUFBD {\propertyG O {\class C} }} n} {\{G_t\}}$.\proved\checked\checkedC
\end{proof}

\begin{theorem}\label{theorem: pointwise convergence of simple n-bounded on graph}\proved\checked\checkedC
Let $O$ be a set of sentences and $\class C$ be a set of structures.
For any hypothesis graph  $G_t \in \hypgraphset O {\class C}$ and non-negative integer $n$ with  $n\gneq \cardinal {G_t}$,
the protocol $\protocolTowards {\protocolSizeRestrict  {\protocolProp \simpleUFBD {\propertyG O {\class C} }} n} {\{G_t\}}$ converges to $\{G_t\}$.
\end{theorem}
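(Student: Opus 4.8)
The plan is to imitate the proof of Theorem~\ref{the last set with Condition 2e}: reduce the claim to the convergence property towards a single target of the \emph{unbounded} protocol $\protocolProp \simpleUFBD {\propertyG O {\class C}}$, which is Theorem~\ref{theorem: simple convergence towards single target}, using Lemmata~\ref{lemma: simple n-bounded next F} and~\ref{lemma: simple n-bounded next G} to transport extendability between the two protocols. Fix $G_t\in\hypgraphset O {\class C}$ and a non-negative integer $n\gneq\cardinal{G_t}$, and let $\mc D$ be a finite dialogue in $\protocolTowards {\protocolSizeRestrict {\protocolProp \simpleUFBD {\propertyG O {\class C}}} n} {\{G_t\}}$. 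If $\mc D$ admits an extension within this protocol, the convergence condition of Definition~\ref{definition: convergence of one dialogue} holds vacuously, so I may assume $\mc D$ admits none.

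The key step is to show that $\mc D$ then also admits no extension in $\protocolTowards {\protocolProp \simpleUFBD {\propertyG O {\class C}}} {\{G_t\}}$. Suppose, for contradiction, that $\mc D,\mc A$ is a dialogue there. If $\mc D$ is the null sequence or ends with a feedback set, then $\mc A$ is necessarily a set of hypothesis graphs, and Lemma~\ref{lemma: simple n-bounded next G} produces a set $\mc G_{i+1}$ with $\mc D,\mc G_{i+1}$ a dialogue in $\protocolTowards {\protocolSizeRestrict {\protocolProp \simpleUFBD {\propertyG O {\class C}}} n} {\{G_t\}}$, contradicting terminality of $\mc D$ there. If instead $\mc D$ ends with a set $\mc G_i$ of hypothesis graphs, then $\mc A$ is a feedback set $\mc F_i$; since $\mc G_i$ is not the last set of the dialogue $\mc D,\mc F_i$, Condition~\ref{condition: last X simple UFBD} of Definition~\ref{definition: simple UFBD} forces $\cardinal{\mc G_i}\geq 2$, whence Lemma~\ref{lemma: simple n-bounded next F} (which uses the hypothesis $n\gneq\cardinal{G_t}$) produces a feedback set $\mc F'_i$ with $\mc D,\mc F'_i$ a dialogue in $\protocolTowards {\protocolSizeRestrict {\protocolProp \simpleUFBD {\propertyG O {\class C}}} n} {\{G_t\}}$, again a contradiction. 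Hence $\mc D$ has no extension in $\protocolTowards {\protocolProp \simpleUFBD {\propertyG O {\class C}}} {\{G_t\}}$.

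Finally, $\mc D$ is itself a finite dialogue in $\protocolTowards {\protocolProp \simpleUFBD {\propertyG O {\class C}}} {\{G_t\}}$, so by Theorem~\ref{theorem: simple convergence towards single target} it converges to $\{G_t\}$ there; since it admits no extension there, this yields Conditions~\ref{condition: last set in convergence} and~\ref{condition: satisfy shared properties} of Definition~\ref{definition: convergence of one dialogue} for $\mc D$, i.e.\ its last set is a non-empty subset of $\hypgraphset O {\class C}$ and every hypothesis graph in it has the same $\propertyG O {\class C}$-value as $G_t$. As these are assertions about $\mc D$ alone, they also establish convergence of $\mc D$ to $\{G_t\}$ in $\protocolTowards {\protocolSizeRestrict {\protocolProp \simpleUFBD {\propertyG O {\class C}}} n} {\{G_t\}}$, which is what we want. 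The only slightly delicate point is the side-observation that one must secure $\cardinal{\mc G_i}\geq 2$ before Lemma~\ref{lemma: simple n-bounded next F} can be applied; this is exactly where the ``a singleton or empty $\mc G_i$ is the last set'' clause of the simple-UFBD definition does its work, and where the hypothesis $n\gneq\cardinal{G_t}$ is consumed.
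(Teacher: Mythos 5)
Your proof is correct, and it takes a mildly but genuinely different route from the paper's. The paper proves Theorem~\ref{theorem: pointwise convergence of simple n-bounded on graph} by re-running the argument of Theorem~\ref{theorem: simple convergence towards single target} directly inside the bounded protocol: Lemma~\ref{lemma: simple n-bounded next G} rules out a terminal dialogue ending in a feedback set, Lemma~\ref{lemma: simple n-bounded next F} rules out a terminal last set with two or more graphs, Condition~\ref{condition: non-empty X} of Definition~\ref{definition: UFBD} gives non-emptiness (so the last set is a singleton $\{G\}$), and Condition~\ref{condition: multiple elements X simple UFBD} of Definition~\ref{definition: simple UFBD} then yields $\propertyG O {\class C}(G)=\propertyG O {\class C}(G_t)$. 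You instead transplant the reduction scheme of Theorem~\ref{the last set with Condition 2e}: you use the same two lemmata only to show that terminality in $\protocolTowards {\protocolSizeRestrict {\protocolProp \simpleUFBD {\propertyG O {\class C}}} n} {\{G_t\}}$ implies terminality in $\protocolTowards {\protocolProp \simpleUFBD {\propertyG O {\class C}}} {\{G_t\}}$, and then invoke Theorem~\ref{theorem: simple convergence towards single target} as a black box. Your handling of the one delicate point is right: when $\mc D$ ends in $\mc G_i$ and admits an unbounded extension $\mc F_i$, Condition~\ref{condition: last X simple UFBD} forces $\cardinal{\mc G_i}\geq 2$, which is exactly the hypothesis Lemma~\ref{lemma: simple n-bounded next F} needs, and that lemma is where $n\gneq\cardinal{G_t}$ is consumed; and since the conclusion of Definition~\ref{definition: convergence of one dialogue} is a statement about $\mc D$ alone, it transfers back to the bounded protocol. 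What each approach buys: yours avoids repeating the singleton-plus-property-equality argument at the cost of the (easy) terminality-transfer step, and it makes explicit that the only role of the size bound is to keep feedback available on large sets; the paper's direct version additionally records in passing that the last set is a singleton, which the corollary following the theorem uses (though that also follows from Condition~\ref{condition: distinct implies not equivalent simple UFBD} once all last-set elements share $G_t$'s properties).
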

\begin{proof}
This is proved in a way similar to the proof of Theorem \ref{theorem: simple convergence towards single target}.
Let $\mc D$ be a dialogue in $\protocolTowards {\protocolSizeRestrict  {\protocolProp \simpleUFBD {\propertyG O {\class C} }} n} {\{G_t\}}$ such that there is no set $\mc A$ such that $\mc D, \mc A$ is a dialogue in $\protocolTowards {\protocolSizeRestrict  {\protocolProp \simpleUFBD {\propertyG O {\class C} }} n} {\{G_t\}}$.
Then, we can see that the last set of $\mc D$ is a singleton $\{G\}\subseteq\hypgraphset O {\class C}$ by Condition \ref{condition: non-empty X} of Definition \ref{definition: UFBD} and Lemmata \ref{lemma: simple n-bounded next F} and \ref{lemma: simple n-bounded next G}.
We can also see that $PropG(O, {\class C})(G)=PropG(O, {\class C})(G_t)$ by Condition \ref{condition: multiple elements X simple UFBD} of Definition \ref{definition: simple UFBD} since both $G$ and $G_t$ satisfies all previous FBs by Condition \ref{condition: satisfy all previous FBs} of Definition \ref{definition: UFBD} and by Definition \ref{definition: towards}, respectively.\proved\checked\checkedC
\end{proof}

\begin{corollary}\proved\checked\checkedC
Let $O$ be a set of sentences, $\class C$ be a set of structures, $\targetG$ be a hypothesis graph in $\hypgraphset O {\class C}$, $n$ be a non-negative integer such that $n\gneq \cardinal \targetG$  and $\mc D$ be a finite dialogue in  $\protocolTowards {\protocolSizeRestrict  {\protocolProp \simpleUFBD {\propertyG O {\class C} }} n} {\{G_t\}}$.
Suppose that there is not a set $\mc A$ such that $\mc D, \mc A$ is a dialogue $\protocolTowards {\protocolSizeRestrict  {\protocolProp \simpleUFBD {\propertyG O {\class C} }} n} {\{G_t\}}$.
Then, the last set of $\mc D$ is the singleton of a hypothesis graph isomorphic to $\targetG$.

\end{corollary}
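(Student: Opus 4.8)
The plan is to read this off Theorem~\ref{theorem: pointwise convergence of simple n-bounded on graph} together with Proposition~\ref{prop: prop and isom}, mirroring the way Corollary~\ref{corollary: convergence hyp} was obtained from Theorem~\ref{theorem: simple convergence towards single target} and Proposition~\ref{prop: proph and equiv}.

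First, since $n\gneq\cardinal{\targetG}$, Theorem~\ref{theorem: pointwise convergence of simple n-bounded on graph} applies and gives that $\protocolTowards {\protocolSizeRestrict  {\protocolProp \simpleUFBD {\propertyG O {\class C} }} n} {\{G_t\}}$ converges to $\{G_t\}$. In particular the given finite dialogue $\mc D$ converges to $\{G_t\}$ in this protocol in the sense of Definition~\ref{definition: convergence of one dialogue}. By the standing hypothesis there is no set $\mc A$ with $\mc D, \mc A$ a dialogue in the protocol, so Definition~\ref{definition: convergence of one dialogue} yields that the last set $\lastsetXL$ of $\mc D$ is a non-empty subset of $\hypgraphset O {\class C}$ and that, for every $G\in\lastsetXL$ and every $g\in\quotient{\formgraphset}{\simeq}$, we have $g\in\propertyG O {\class C}(G)$ iff $g\in\propertyG O {\class C}(G_t)$ (the target being the singleton $\{G_t\}$); that is, $\propertyG O {\class C}(G)=\propertyG O {\class C}(G_t)$.

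Next I would argue that $\lastsetXL$ is in fact a singleton, not merely non-empty. The last set of $\mc D$ cannot be a feedback set $\mc F_i$: Lemma~\ref{lemma: simple n-bounded next G} would then produce an extension $\mc D, \mc G_{i+1}$ in the protocol, contradicting the hypothesis. Hence $\lastsetXL=\mc G_i$ for some $i$, which is non-empty because $G_t$ satisfies all previous FBs (Definition~\ref{definition: towards}) and by Condition~\ref{condition: non-empty X} of Definition~\ref{definition: UFBD}. If $\mc G_i$ contained two distinct hypothesis graphs, Lemma~\ref{lemma: simple n-bounded next F} would again produce an extension, a contradiction; so $\lastsetXL=\{G\}$ for a single $G\in\hypgraphset O {\class C}$. (Equivalently, this singleton-ness can simply be quoted from the proof of Theorem~\ref{theorem: pointwise convergence of simple n-bounded on graph}.)

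Finally, combining $\propertyG O {\class C}(G)=\propertyG O {\class C}(G_t)$ with Proposition~\ref{prop: prop and isom} gives $G\simeq\targetG$, which is the claim. I do not expect a genuine obstacle here: the statement is essentially a repackaging of the preceding theorem and of the fact that $\propertyG O {\class C}$ classifies hypothesis graphs up to isomorphism; the only mild point of care is the passage from ``non-empty last set'' to ``singleton last set'', handled by re-invoking Lemmata~\ref{lemma: simple n-bounded next F} and~\ref{lemma: simple n-bounded next G} as above.
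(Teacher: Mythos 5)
Your proposal is correct and follows essentially the same route as the paper, which derives the corollary as an immediate consequence of Theorem \ref{theorem: pointwise convergence of simple n-bounded on graph} and Proposition \ref{prop: prop and isom}. Your extra care over the passage from ``non-empty last set'' to ``singleton last set'' (via Lemmata \ref{lemma: simple n-bounded next F} and \ref{lemma: simple n-bounded next G}) is exactly how that point is handled inside the proof of the theorem itself, so nothing is missing.
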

\begin{proof}
This is an immediate consequence of Theorem \ref{theorem: pointwise convergence of simple n-bounded on graph} and Proposition \ref{prop: prop and isom}.\proved\checked\checkedC
\end{proof}

\subsection{Ablation Studies: UFBD Protocol Types between $\simpleUFBD$ and $\basicUFBD$}\label{section: other UFBD protocols}
In the previous section, we replaced two kinds of conditions to obtain $\protocolProp \simpleUFBD \property$ from $\protocolProp \basicUFBD \property$: one is conditions on $\mc X_i$ and the other is ones on $\mc F_i$.
In this section, we discuss the two protocols that are obtained by replacing exactly one of the two kinds of conditions.
More precisely, one is the protocol called $\protocolProp \simpleXbasicFUFBD \property$, which consists of dialogues in $\protocolProp \UFBD \property$ satisfying
Conditions \ref{condition: distinct implies not equivalent simple UFBD},
\ref{condition: multiple elements X simple UFBD} and 
\ref{condition: last X simple UFBD} of Definition \ref{definition: simple UFBD} and Conditions \ref{condition2a: formula-based}, \ref{condition: no previous FB}, \ref{condition: non-empty FBs} and \ref{condition: empty FB is the last} of Definition \ref{definition: basic UFBD}; the other is the protocol called $\protocolProp \basicXsimpleFUFBD \property$, which consists of dialogues in $\protocolProp \UFBD \property$ satisfying 
Conditions
\ref{condition: un-poited-out property} and
\ref{condition: last X}
of Definition \ref{definition: basic UFBD}
and 
Conditions 
\ref{condition: pos or neg simple UFBD},
\ref{condition: FB prop should appear simple UFBD},
\ref{condition: FB prop should be new simple UFBD},
\ref{condition: no-unpointed property in simple UFBD} and
\ref{condition: last set F in simple UFBD}
of Definition \ref{definition: simple UFBD}.
The following diagram depicts the inclusion relation between the protocols.
\[
\def\objectstyle{\scriptstyle}
\xymatrix@C=20pt@R=10pt{
                            &\protocolProp \UFBD \property&   \\
                          & \protocolProp \basicUFBD \property \ar@{-}[u]   &               \\
\protocolProp \simpleXbasicFUFBD \property  \ar@{-}[ru] &                                    &  \protocolProp \basicXsimpleFUFBD  \property \ar@{-}[lu] \\
             &    \protocolProp \simpleUFBD \property \ar@{-}[lu]\ar@{-}[ru]  &                 \\
}
\]

\noindent In what follows, we show that neither the protocol $\protocolProp \simpleXbasicFUFBD \property$ nor the protocol $\protocolProp \basicXsimpleFUFBD  \property$ necessarily has the convergence property.
This entails that the both kinds of the conditions in the definition of $\protocolProp \basicUFBD \property$ are required to be replaced to obtain the convergence property.

We first see an example to show that $\protocolTowards {\protocolSizeRestrict  {\protocolProp \simpleXbasicFUFBD {\propertyG O {\class C} }} n} \targetsetG$ does not necessarily converge to $\targetsetG$.
\begin{example}\label{example: SimpleX-BasicF PropG non-convergence}
We use the settings in Example \ref{example: target set in simple UFBD}.
Let $n$ be an integer that is strictly greater than $2$,  $\mc F'_1$ be the set \[\{([(\emptyset,\fami{E_{\syntax R}} {\syntax{R}} \relsetR)], pos), ([G_3], neutral), ([G_4], neutral)\}\] of FBs and $\mc F'_2$ be the empty set.
Then, we see:
\begin{enumerate}
\item
the sequence $\mc G_1, \mc F'_1, \mc G_1, \mc F'_2$ is a user-feedback dialogue in the protocol $\protocolTowards {\protocolSizeRestrict  {\protocolProp \simpleXbasicFUFBD {\propertyG O {\class C} }} n} \targetsetG$;

\item
there is no set $\mc G_2$ of FBs such that $\mc G_1, \mc F'_1, \mc G_1, \mc F'_2, \mc G_2$ is a dialogue in $\protocolTowards {\protocolSizeRestrict  {\protocolProp \simpleXbasicFUFBD {\propertyG O {\class C} }} n} \targetsetG$;

\item
the last set $\mc F'_2$ is not a non-empty subset of $\hypgraphset O {\class C}$.
\end{enumerate}
Therefore, the protocol  $\protocolTowards {\protocolSizeRestrict  {\protocolProp \simpleXbasicFUFBD {\propertyG O {\class C} }} n} \targetsetG$ does not converge to $\targetsetG$. \proved\checked \hfill $\dashv$
\end{example}

We next see that  $\protocolTowards {\protocolSizeRestrict  {\protocolProp \basicXsimpleFUFBD {\propertyG O {\class C} }} n} \targetsetG$ does not necessarily converge to $\targetsetG$.
\begin{example}
Again, we use the settings in Example \ref{example: target set in simple UFBD}.
Let $n$ be an integer strictly greater than $2$, $\mc G'_1$ be $\{G_3\}$ and $\mc F''_1$ be $\{([\emptyset, \fami {E_R}  R {\mc R})]_\simeq, pos)\}$.
Then, we see:
\begin{enumerate}
\item
the sequence $\mc G'_1, \mc F''_1, \mc G'_1$ is a user-feedback dialogue in the protocol $\protocolTowards {\protocolSizeRestrict  {\protocolProp \basicXsimpleFUFBD {\propertyG O {\class C} }} n} \targetsetG$;

\item
there is no set $\mc F''_2$ such that $\mc G'_1, \mc F''_1, \mc G'_1, \mc F''_2$ is a dialogue in the protocol $\protocolTowards {\protocolSizeRestrict  {\protocolProp \basicXsimpleFUFBD {\propertyG O {\class C} }} n} \targetsetG$;

\item
both of the following hold:
\begin{enumerate}
\item
$[( \{\syntax{p_3(0)} \}, \fami {E_R}  R \relsetR)]_{\simeq} \in \propertyG  O {\class C}(G)$ for any $G\in \mc G_t$ and 
\item
$[(\{\syntax{p_3(0)}\}, \fami {E_R}  R \relsetR )]_{\simeq} \notin \propertyG  O {\class C}(G_3)$.
\end{enumerate}
\end{enumerate}
Hence, the protocol $\protocolTowards {\protocolSizeRestrict  {\protocolProp \basicXsimpleFUFBD {\propertyG O {\class C} }} n} \targetsetG$ does not converge to $\targetsetG$.   \proved\checked \hfill $\dashv$
\end{example}

\section{Conclusion and Future Work}
\paragraph{Conclusion}
In this article we have defined the concepts of hypothesis and hypothesis graph that are allowed to contain many-sorted constants and variables as well as second-order predicates symbols whose arguments are first-order literals.
We have then proposed two types of user-feedback dialogue protocols, $\basicUFBD$ and $\simpleUFBD$, on hypotheses/hypothesis graphs, and shown that our protocols necessarily terminate under certain reasonable conditions and that they achieve hypotheses/hypothesis graphs that have the same properties in common as target hypotheses/hypothesis graphs do in common even if there are infinitely many hypotheses/hypothesis graphs.

\paragraph{Future Work}
To apply the proposed user-feedback dialogue protocols in real-world applications, we shall tackle the following two issues:
\begin{enumerate}
\item
Execution of any user-feedback dialogue protocol on hypothesis graph proposed in this article contains an extended version of subgraph isomorphism problem when checking whether a hypothesis graph satisfies a positive feedback.
Since subgraph isomorphism problem is NP-complete \cite{cook1997complexity}, it is required to approximate a user-feedback dialogue protocol when applying it to real world applications.
\item
To verify the usefulness of our protocols, we shall conduct an experiment to see whether our protocol terminates in a practical number of turns in real-world applications.
\end{enumerate}

\paragraph{Acknowledgements}
We would like to express our thanks the anonymous reviewers of the preliminary version \cite{motoura2023logic} for their valuable comments.

\section*{Declaration of Generative AI and AI-assisted technologies in the writing process}
\noindent Statement: During the preparation of this work the authors used DeepL and ChatGPT in order to improve English and sophisticate the names of some concepts in the article. After using this tool/service, the authors reviewed and edited the content as needed and takes full responsibility for the content of the publication.



\bibliographystyle{elsarticle-num} 
\bibliography{IJAR_arXiv.bib}





\end{document}